\keywords{Information Extraction, Document Spanners, Query Evaluation}
\newcommand{\alphabet}{\Sigma}
\newcommand{\alphabetstar}{\alphabet^{*}}
\newcommand{\e}[1]{\emph{#1}}
\newcommand{\mspan}[2]{\ensuremath{[#1,#2\rangle}}
\newcommand{\vars}{\mathsf{Vars}}
\newcommand{\emptyword}{\epsilon}
\newcommand{\bool}{\textrm{Bool}}
\newcommand{\markers}{\textsf{markers}}
\newcommand{\vop}[1]{\mathop{#1{\vdash}}}
\newcommand{\vcl}[1]{\mathbin{{\dashv}#1}}
\newcommand{\calA}{\mathcal{A}}
\newcommand{\NP}{\textsf{NP}}
\newcommand{\domrela}[1]{\preccurlyeq_{\mathit{#1}}}
\newcommand{\domrule}[1]{D_{\mathit{#1}}}
\newcommand{\domoper}[1]{\eta_{#1}}
\def\conR{\cdot}
\def\mcomp{\sim}
\newcommand{\dom}{\mathsf{dom}}
\newcommand{\SVars}{\mathsf{SVars}}
\newcommand{\Spans}{\mathsf{Spans}}
\newcommand{\Vars}{\mathsf{Variables}} %
\newcommand{\Maps}{\mathsf{Maps}}
\newcommand{\dagg}[1]{#1^\dagger}
\newcommand{\restrict}[1]{|_{#1}}
\newcommand{\sky}{\mathrm{sky}}
\newcommand{\chunk}{\mathrm{chunk}}
\newcommand{\valid}{\mathrm{valid}}
\newcommand{\mask}{\mathrm{mask}}
\renewcommand{\epsilon}{\varepsilon}
\renewcommand{\phi}{\varphi}
\newcommand{\NN}{\mathbb{N}}
\title{Skyline Operators for Document Spanners}
\author{Antoine Amarilli}{LTCI, Télécom Paris, Institut polytechnique de Paris, France \and \url{https://a3nm.net/}}{antoine.amarilli@telecom-paris.fr}{https://orcid.org/0000-0002-7977-4441}{}
\author{Benny Kimelfeld}{Technion - Israel Institute of Technology,
Haifa, Israel}{bennyk@cs.technion.ac.il}{https://orcid.org/0000-0002-7156-1572}{}
 \author{Sébastien Labbé}{École normale supérieure}{}{}{}
 \author{Stefan Mengel}{Univ.\ Artois, CNRS, Centre de Recherche en
 Informatique de Lens (CRIL)}{}{}{}
\authorrunning{A. Amarilli, B. Kimelfeld, S. Labbé, S.Mengel}
\begin{document}

\maketitle

\begin{abstract}
  When extracting a relation of spans (intervals) from a text
  document, a common practice is to filter out tuples of the relation that are deemed
  dominated by others. The domination rule is defined as a partial order that varies along
  different systems and tasks. For example, we may state that a tuple is
  dominated by tuples which extend it by assigning additional attributes, or
  assigning larger intervals. The result of filtering the relation would then be
  the \emph{skyline} according to this partial order. As this filtering may remove most
  of the extracted tuples, we study whether we can improve the
  performance of the extraction by compiling the domination rule into the
  extractor.
  
  To this aim, we introduce the \emph{skyline operator} for
  declarative information extraction tasks expressed as document
  spanners.
  We show that this operator can be expressed via regular operations
  when the domination partial order can itself be expressed as a regular
  spanner, which 
  covers several natural domination rules.
  Yet, we show that the skyline operator incurs a computational cost (under
  combined complexity). First, there are cases where the operator
  requires 
  an exponential blowup on the number of states
  needed to represent the spanner as a sequential variable-set
  automaton. Second, the evaluation may become computationally
  hard. Our analysis more precisely identifies classes of domination rules for which
  the combined complexity is tractable or intractable.
\end{abstract}

\section{Introduction}

The framework of \emph{document spanners}~\cite{faginformal} is an established formalism to express declarative information extraction tasks. A \emph{spanner} specifies the possible ways to assign variables over a textual document, producing so-called \emph{mappings} which are the result of the extraction: each mapping assigns the variables to a factor of the document, called a \emph{span}. The spanner formalism has been defined in terms of several operators, in particular regular operations extended with capture variables (corresponding to so-called \emph{regular spanners}), operators from relational algebra (which can sometimes be translated into regular expressions), string equality (the so-called \emph{core spanners}), etc.

Existing operators do not give a simple way to extract mappings that are \emph{maximal} according to some criteria. However, there are natural cases in which some mappings are preferred to others. In fact, traditional declarative systems for information extraction provide explicit mechanisms for restricting the extracted spans to the maximal ones according to different comparisons. IBM's SystemT~\cite{DBLP:conf/acl/LiRC11} has the \emph{consolidation} rules such as ``contained-within'' (where a span dominates its subspans) and ``left-to-right'' (where a span dominates all shorter spans that begin at the same position). Similarly, the GATE system~\cite{DBLP:conf/anlp/CunninghamHGW97a} 
features \emph{controls} such as ``Appelt'' (which is similar to SystemT's ``left-to-right''). 
Alternatively, in the \emph{schemaless} context of document spanners where we can assign spans to only a subset of variables~\cite{DBLP:conf/pods/MaturanaRV18}, we may want to only capture spans which assign a maximal subset of the variables and cannot be extended by assigning more variables; in the spirit, for instance, of the relational \emph{full disjunction}~\cite{10.1145/191843.191908} or the \texttt{OPTIONAL} operator of SPARQL~\cite{ahmetaj2016challenge}.

To explore the expressive power of operators such as controls and consolidators, Fagin et al.~\cite{DBLP:journals/tods/FaginKRV16} proposed a framework that enriches document spanners with a previous concept of \emph{prioritized repairs}~\cite{DBLP:journals/amai/StaworkoCM12}. There, they defined the notion of a ``denial preference-generating dependency'' (denial pgd) that expresses the binary domination relationship using the underlying spanner language. When this relationship is transitive, the result of applying the denial pgd is precisely the set of maximal tuples. However, they did not address the computational complexity of this operator and, consequently, it has been left open. (Moreover, their study 
does not apply to the schemaless context.)

The notion of maximal matches has been abundantly studied in other
areas of database research, where it is called the \emph{skyline
operator}~\cite{borzsonyi2001skyline}. Intuitively, the skyline of a set of
results under a partial order relation is the set of the results that are
maximal, i.e., are not dominated by another result. The complexity of skyline
computation has been investigated under many dimensions, e.g., I/O
access~\cite{sheng2012worst}, parallel computation~\cite{afrati2012parallel}, or
noisy comparisons~\cite{groz2015skyline}.
However, we are not aware of a study of the complexity of this operator to
extract the maximal matches of document spanners. This is the focus of the
present paper.

\subparagraph*{Contributions.}
We present our contributions together with the structure of the paper. After some
necessary preliminaries (Section~\ref{sec:prelim}),
we first introduce in
Section~\ref{sec:skyline} the skyline operator. The operator is defined as
extracting the maximal mappings of a spanner on a document with respect to
a partial order on the mapping, which we call a \emph{domination relation}.
In particular, we define the \emph{span inclusion}, \emph{span length},
\emph{variable inclusion}, and \emph{left-to-right domination relations}, which
cover the examples presented above.

To allow for a unified study of these operators, and similarly to~\cite{DBLP:journals/tods/FaginKRV16},
we propose a general model where the domination relations are themselves
expressed as document spanners.
More precisely, a \emph{domination rule} is a spanner that defines a domination
relation on every document: it indicates which mappings
dominates which other mappings, by intuitively capturing pairs $(m, m')$ that
indicate that $m'$ dominates~$m$.
We also focus on so-called
\emph{variable-wise rules}, where the domination relation on mappings can be
defined as a product of relations on spans.
In other words, a variable-wise rule is a
spanner expressing which spans dominate which spans, and the domination relation
on mappings is obtained in a pointwise fashion across the variables, 
like the \emph{ceteris paribus}
semantics for preference handling in artificial
intelligence~\cite{BoutilierBDHP04}
or Pareto-optimal points for skyline queries on
multidimensional data~\cite{groz2015skyline}. All examples introduced earlier
can be expressed in this variable-wise way.

We then begin our study of how to evaluate the skyline operator on document
spanners, and start in Section~\ref{sct:closure} with the question of
\emph{expressiveness}: does the operator strictly increase the
expressive power of spanner formalisms, or can it be rewritten using
existing operators?  We show that
\emph{regular
spanners} are closed under the skyline operator, generalizing a result
of~\cite{DBLP:journals/tods/FaginKRV16} to the schemaless context.
By contrast, we show that \emph{core
spanners} are not closed under skylines, even for the fixed variable
inclusion or span inclusion domination relations, again generalizing a result
of~\cite{DBLP:journals/tods/FaginKRV16}.

Next, we explore the question of whether it is possible to tractably
rewrite the skyline operator into regular spanners, to allow for
efficient evaluation like, e.g., the polynomial-time compilation of
the join operator in the schema-based context (see~\cite{phdrgxlog},
Lemma~4.4.7).  We present in Section~\ref{sct:blowups} a lower bound
establishing that this is not the case: even for variable inclusion
domination, applying the skyline operator to a spanner expressed as a
sequential variable-set automaton (VA) incurs a necessary exponential
blowup.
This result is shown by identifying a connection between VAs and \emph{nondeterministic read-once branching programs} (NROBPs). This
general-purpose method can be used outside of the context of skylines, and in
fact we also use it to show a result of independent interest: there are
regex-formulas on which the \emph{natural join} operator incurs an unavoidable
exponential blowup (Theorem~\ref{thm:blowupalgebraic}).

We then move in Section~\ref{sec:complexity} from state complexity to the \emph{computational complexity} of
skyline evaluation for regular spanners. 
This task is clearly tractable in \emph{data complexity}, i.e., for a fixed
spanner and domination rule: we simply compute all captured mappings, and filter
out the non-maximal ones. More interestingly, assuming
$\mathsf{P} \neq \mathsf{NP}$,
we show that the task is
intractable in 
\emph{combined complexity}, i.e., as a function of the input spanner (Theorem~\ref{thm:nphardspanner}),
already in the case of the variable inclusion relation.
Hence, we cannot
tractably evaluate the skyline operator in combined complexity, even without
compiling it to an explicit VA.

Lastly, we study in more detail how the complexity of skyline computation depends
on the fixed domination relation: are there non-trivial domination rules
for which skyline computation is tractable in combined complexity? We show in
Section~\ref{sec:further} a sufficient condition on domination rules
which is satisfied by all example rules that we mentioned and which 
implies (conditional) intractability (Theorem~\ref{thm:maindicho}).
We then show that, for a class of domination rules called
\emph{variable-inclusion-like} rules, a variant of this condition can be used
for a dichotomy to classify which of these rules enjoy
tractable skyline computation (Theorem~\ref{thm:dichovarinclike}). We finish
with examples of tractable and intractable rules in the general case.

We conclude in Section~\ref{sec:conc}. For reasons of space, most
proofs are deferred to the Appendix.

\section{Preliminaries}
\label{sec:prelim}

\begin{toappendix}
	\section{Additional Preliminaries}\label{app:prelim}
	In this section, we give some additional preliminaries for notions that we use in the appendix.

        \subparagraph*{Ref-words.}
  In some proofs, we will work with \emph{ref-words}. Like
  in~\cite{phdrgxlog}, we define \emph{ref-words} to be words over $(\Sigma \cup
  \markers(X))^*$, and define the \emph{ref-word language} of a VA as the set of
  ref-words that it accepts when interpreting it as a usual finite automaton over the alphabet $\Sigma \cup
  \markers(X)$. 
  Note that, for a sequential VA, the ref-words in its
  ref-word language are \emph{valid}, i.e., for each variable, either its
  markers do not appear at all, or exactly one opening marker occurs and is
  followed by exactly one closing marker at some position later in the ref-word.

  \subparagraph*{Trimming.} 
  It will often be convenient in proofs to assume that VAs are \emph{trimmed} in
  the sense that we remove all states that are not part of an accepting run.
  (The case where the automaton has no accepting run is trivial, so we often
  implicitly exclude that case and assume that the resulting VA is well
  defined.) We can trim a VA in linear time, and it does not affect the spanner
  that it defines, nor does it affect sequentiality or functionality. Further,
  it can only make the number of states decrease. Having a trimmed
  sequential VA ensures
  that, for all \emph{partial} runs of the VA, we assign markers in a valid way,
  namely: for each variable, we first have no markers of that variable assigned,
  then we may assign one opening marker for that variable, and then we may
  assign one closing marker for the variable. (By contrast, on sequential VAs
  that are not trimmed, the condition may be violated on partial runs that
  cannot be completed to an accepting run.)

        \subparagraph*{$\bm{\epsilon}$-transitions.}
  In some proofs, we will consider
  a slightly modified
  VA model where we additionally allow \emph{$\varepsilon$-transitions}. Formally,
  the transitions of the VA then include letter transitions, marker transitions, and
  transitions labeled by~$\varepsilon$ that can be taken freely as part of a run.
  Note that this modification clearly does not change the expressive power or
  conciseness of our VA formalisms, because a VA with $\varepsilon$-transitions can
  be rewritten to one without such transitions and with the same number of states in the usual
  way: make final all states having a path of $\varepsilon$-transitions to a
  final state, and add, for every letter or marker transition from a
  state~$q$ to a state~$q'$, for every state $q''$ having a path of
  $\varepsilon$-transitions to~$q$, a transition from~$q''$ to~$q'$ having the same
  label. This translation can be performed in polynomial time, and does not
  change the captured ref-words, so in particular it does not affect the fact
  that the VA is sequential or functional.

\end{toappendix}

\subparagraph*{Languages, spans, mappings, and spanners.}
We fix an
\emph{alphabet}~$\alphabet$ which is a finite set of letters.
A \emph{word} $w$ 
is a finite sequence of letters of~$\Sigma$:
we write $\alphabetstar$ the set of all words.
We write $|w|$ for the length of~$w$ and denote the empty word by $\epsilon$,
with $|\epsilon| = 0$.
A \emph{language}~$L \subseteq \Sigma^*$ is a set of words.
The \emph{concatenation} of two languages $L_1$ and $L_2$
is the language
$L_1 \conR L_2 = \{w_1 w_2 \mid w_1 \in L_1, w_2 \in L_2\}$.
The \emph{Kleene star} of a language $L$ is the language $L^* = \bigcup_{i \in
\NN} L^i$, where we define inductively $L^0 = \{\epsilon\}$ and $L_{i+1} = L
\cdot L_i$ for all $i > 0$.
As usual in the context of document spanners, a \emph{document} is simply a word
of~$\alphabetstar$.
A \emph{span}~$\mspan{i}{j}$ 
is an interval 
$s = [i,j\rangle$ with $0\leq i\leq j$.
Its \emph{length} is $j-i$.
We denote by $\Spans$ the set of all spans.
The \emph{spans} of a document $d$ are the spans $\mspan{i}{j}$ of $\Spans$
with $j \leq |d|$.
We write $d_{[i,j\rangle}$ to mean the contiguous subword of~$d$ at a span
$[i,j\rangle$, 
for example $\text{``qwertyqwerty''}_{[2, 5 \rangle} = 
\text{``qwertyqwerty''}_{[8, 11 \rangle} = 
\text{``ert''}$.
Note that we have $d_{[i,i\rangle} = \epsilon$ for all $0 \leq i \leq |d|$.
A span $\mspan{i}{j}$ is \emph{included} in a span $\mspan{i'}{j'}$ if
$i' \leq i$ and $j' \geq j$.
Two spans \emph{overlap} if there is a non-empty span included in both of
them; otherwise we call them \emph{disjoint}.

We fix an infinite set $\Vars$ of variable names.
A \emph{mapping}~$m$ of a document $d \in \Sigma^*$ is a function 
from a finite set of variables $X \subseteq \Vars$, called the
\emph{domain} $\dom(m)$ of~$m$, to the set of spans of $d$; the variables
of~$\dom(m)$ are said to be \emph{assigned} by~$m$.
We denote the set of all mappings on variables of~$\Vars$ by~$\Maps$.
A mapping~$m$ is called \emph{compatible} with a mapping~$m'$, in symbols
$m \mcomp m'$, if for all $x\in \dom(m) \cap \dom(m')$
we have $m(x) = m'(x)$.

A \emph{spanner} is a function mapping every document $d$ to a finite set of
mappings whose spans are over~$d$, i.e., are included in $\mspan{0}{|d|}$.
For a spanner~$P$, we denote by $\SVars(P)$ the variables appearing in the domain
of at least one of its mappings,
formally $\SVars(P) \colonequals \{x \in \Vars \mid \exists d \in \alphabetstar,
\exists m \in P(d), x \in \dom(m)\}$.
A spanner~$P$ is \emph{schema-based} if 
all its output mappings assign exactly the
variables of $\SVars(P)$, i.e.,
for every $d \in \alphabetstar$ and $m \in
P(d)$, we have $\dom(m) = \SVars(P)$. Otherwise, $P$ is called
\emph{schemaless}~\cite{PeterfreundFKK19},
or 
\emph{incomplete}~\cite{DBLP:conf/pods/MaturanaRV18}.
We say a spanner $P$ 
\emph{accepts} or \emph{captures} a mapping $m \in \Maps$ on a document $d \in \alphabetstar$ 
if $m \in P(d)$.

\subparagraph*{Variable-set automata.}
We focus mostly on the \emph{regular spanners}, that can be expressed using 
\emph{variable-set automata} (or VAs). These
are intuitively nondeterministic automata where each transition is labeled
either by a letter or by a \emph{marker} indicating which variable is opened
or closed.
Formally, for a set $X$ of variables, we denote by $\markers(X)$ the set of \emph{markers}
over~$X$: 
it contains for each variable $x\in X$ the \emph{opening marker} $\vop{x}$
and the \emph{closing marker}~$\vcl{x}$.
Then, a VA on alphabet $\Sigma$ is 
an automaton $\calA = (Q, q_0, F,
\delta)$ where $Q$ is a finite set of \emph{states}, $q_0\in Q$ is the
\emph{initial state}, $F \subseteq Q$ are the \emph{final states}, and
$\delta \subseteq Q \times (\Sigma \cup \markers(X)) \times Q$ is the
\emph{transition relation}: we write the transitions $q \rightarrow^\sigma q'$
to mean that $(q, \sigma, q') \in \delta$. Note that the transitions contain
both \emph{letter transitions}, labeled by letters of~$\Sigma$, and \emph{marker
transitions}, labeled by markers of~$\markers(X)$.

A \emph{run} of~$\calA$ on a document $d \in \Sigma^*$ is a sequence $\rho : q_0
\rightarrow^{\sigma_1} q_1 \cdots q_{n-1} \rightarrow^{\sigma_n} q_n$ such that
the restriction of $\sigma_1 \ldots \sigma_n$ to the letters
of~$\Sigma$ is exactly~$d$; it is \emph{accepting} if we have $q_n \in F$. We say that
$\rho$ is \emph{valid} if, for each variable $x \in X$, either the markers
$\vop{x}$ and $\vcl{x}$ do not occur in $\sigma_1 \cdots \sigma_n$, or they
occur exactly once and $\vop{x}$ occurs before $\vcl{x}$. We say that $\calA$ is
\emph{sequential} if all its accepting runs are valid.
In this paper, we always assume that VAs are sequential, and only speak of VAs
to mean sequential VAs.
The run $\rho$ then defines a mapping~$m$ on~$d$ by intuitively assigning the
variables for which markers are read to the span delimited by these markers.
Formally, we associate to each index $0 \leq k \leq n$ of the run a position
$\pi(k)$ in~$d$ by initializing $\pi(0) \colonequals 0$ and setting $\pi(k+1)
\colonequals \pi(k)$ if the transition $q_k
\rightarrow^{\sigma_{k+1}} q_{k+1}$ reads a marker, and $\pi(k+1) \colonequals
\pi(k) + 1$ if it reads a letter; note that $\pi(n) = |d|$.
Then, for each variable $x$ whose markers are read in~$\rho$,
letting $\sigma_{i} = \vop{x}$ and $\sigma_{j} = \vcl{x}$ with $i < j$
because the run is valid, we set $m(x) \colonequals \mspan{\pi(i)}{\pi(j)}$.

A sequential VA $\calA$ thus defines a spanner $P_\calA$ that maps each document~$d$ to
the set $P_\calA(d)$ of mappings obtained from its accepting runs as we
explained. Note that different accepting runs may yield the same mapping.
We 
sometimes abuse notation and identify VAs with the spanners
that they define.
The \emph{regular spanners} are 
those that can be defined by
VAs, or, equivalently~\cite{DBLP:conf/pods/MaturanaRV18}, by sequential
VAs.
A sequential VA is \emph{functional} if it defines a schema-based
spanner, i.e., every mapping
assigns every variable that occurs
in the transitions of the VA.

\subparagraph*{Regex formulas.}
Our examples of spanners in this paper will be given not as VAs but in the more
human-readable formalism of \emph{regex formulas}.
The \emph{regex formulas} over an alphabet~$\alphabet$ are the expressions defined
inductively from the empty set~$\emptyset$, empty word~$\epsilon$, and single
letters $a \in \alphabet$, using the three regular operators of disjunction
($e_1 \lor e_2$), concatenation ($e_1 e_2$), and Kleene star ($e^*$), along with
\emph{variable captures} of the form $x\{e_1\}$ where~$x$ is a variable. A
regex-formula $r$ on a document~$d\in \alphabetstar$ defines a spanner on the
variables occurring in~$r$. Intuitively, 
every match
of~$r$ on~$d$ yields a mapping where the variables are assigned to well-nested
spans following the captures; see~\cite{faginformal} for details. We require
of regex-formulas  that, on every document $d \in \alphabetstar$, they assign
each variable at most once; but we allow them to define schemaless spanners,
i.e., they may only assign a subset of the variables.

It is known that regex formulas capture a strict subset of the regular spanners;
see~\cite{faginformal} in the case of schema-based spanners and
\cite{DBLP:conf/pods/MaturanaRV18} for the case of schemaless spanners.

\subparagraph*{Cartesian Products.}
Given two spanners $P_1$ and $P_2$ where $X_1 = \SVars(P_1)$ and $X_2 =
\SVars(P_2)$ are disjoint,
the \emph{Cartesian product} $P_1\times P_2$ of $P_1$ and $P_2$ is the spanner
on variables $X_1\cup X_2$ which on every document $d$ captures the mappings
$(P_1\times P_2)(d) \colonequals P_1(d) \times P_2(d)$. Here, we interpret a pair $(m_1, m_2) \in P_1(d) \times P_2(d)$ 
as the merge of the two mappings, i.e., the mapping defined according to~$m_1$
on~$X_1$ and according to~$m_2$ on~$X_2$.
If $P_1$ and $P_2$ are given as sequential VAs, then one can compute in polynomial time a sequential VA for $P_1\times P_2$.

\begin{toappendix}
\subparagraph*{Spanner algebra.}
We here introduce some operators on spanners and their properties.

For every spanner $P$ and every subset $Y\subseteq \SVars(P)$, we define the projection operator $\pi_Y$ by saying that $\pi_{Y} P$ is the spanner that extracts on every document $d$ the set $(\pi_{Y}P)(d) \colonequals \{m|_{Y}\mid m\in P(d)\}$ where $m|_{Y}$ is the restriction of $m$ to $Y$.

The \emph{natural join} $P_1 \Join P_2$ of two spanners $P_1$ and~$P_2$ is a
spanner which 
accepts all the mappings $m$
which are the union of two compatible mappings 
$m_1$ accepted by $P_1$ and $m_2$ accepted by $P_2$. Said differently,
$(P_1 \Join P_2)(d) \colonequals
\{m \in \Maps \mid
\exists m_1 \in P_1(d), \exists m_2 \in P_2(d), m_1 \mcomp m \land m_2 \mcomp m
\land \dom(m_1) \cup \dom(m_2) = \dom(m)
\}$.

We remark that if $\SVars(P_1) \cap \SVars(P_2) = \emptyset$ then the join operator 
is the Cartesian product defined before.

The intersection operator $\cap$ is defined to compute the spanner $P_1\cap P_2$ which
on every document computes the set $(P_1\cap P_2)(d)\colonequals P_1(d) \cap P_2(d)$.
Observe that that if $\SVars(P_1) = \SVars(P_2)$ and both spanners are schema-based, then the join operator 
is the intersection: $(P_1 \Join P_2)(d) = P_1(d) \cap P_2(d)$.

The union $P_1\cup P_2$ is defined to as the spanner which
on every document computes the set $(P_1\cup P_2)(d)\colonequals P_1(d) \cup P_2(d)$.

The \emph{difference}, $P_1 - P_2$ is a binary operator which 
accepts all mappings accepted by $P_1$ which are not accepted by
$P_2$. Said differently,
$(P_1 - P_2)(d) \colonequals P_1(d) \setminus P_2(d)$.
Note that this is the usual difference operator on sets, and \emph{not} the difference
operator defined in~\cite{phdrgxlog} which accepts mappings of~$P_1$ for which
no compatible mapping is accepted by~$P_2$.

It is known that the projection, natural join operator and union operators do not increase the expressive
power of regular spanners, see~\cite{faginformal} for the case of schema-based
spanners and \cite{DBLP:conf/pods/MaturanaRV18} for schemaless
spanners. It follows that the same is true for the Cartesian product
operator. We will show later in Section~\ref{sct:decompose} that intersection does not increase the expressivity, either. As for the difference operator, the same result is proven
in~\cite{faginformal} for schema-based regular spanners, but we are not aware of the same
result for schemaless spanners and for our semantics of difference.

  \subsection{Schemaless Regular Spanners are Closed under Difference}
We show the following:

\begin{propositionrep}
  \label{prp:diffclose}
  The (schemaless) regular spanners are closed under difference: given two regular spanners $P_1$
  and $P_2$, the difference $P_1 - P_2$ can be expressed as a regular spanner.
\end{propositionrep}
Note that, if the spanners $P_1, P_2$ from the proposition are given as sequential VAs,
then the VA constructed for $P_1-P_2$ in the proof of Proposition~\ref{prp:diffclose} is generally exponentially bigger.

\begin{proof}
  Let $\calA_1$ and $\calA_2$ be the two sequential VAs for $P_1$ and $P_2$ respectively, let $X$
  be the set of their variables, and recall that $\markers(X)$ is the set of
  markers of the form $\vop{x}$ or $\vcl{x}$ for $x \in X$. 
  We use the notion of \emph{ref-words} (see Appendix~\ref{app:prelim}). As the VAs are sequential, their accepting runs only capture valid ref-words.  What is more, on a document $d \in \alphabetstar$,
  the set of mappings produced by a
  sequential VA can
  be obtained from the words of its ref-word language whose \emph{erasure}
  (i.e., removing the marker symbols) yields~$d$: for each such ref-word $w$, we
  obtain a mapping that assigns the variables whose markers appear in~$w$, at
  the span defined by the unique opening and closing markers. However, the same
  mapping can be obtained by multiple ref-words, because the order between
  markers may differ. For instance, the ref-words $\vop{x} \vop{y} a \vcl{x}
  \vcl{y}$ and $\vop{x} \vop{y} a \vcl{y} \vcl{x}$ define the same mapping.

  To avoid this, we will normalize the automata (we note that a similar
  normalization is done in the schema-based case in~\cite{faginformal} under the
  name of \emph{lexicographic} VAs). Let $<$ be a total order on
  $\markers(X)$: we impose that all opening symbols come before all closing
  symbols, i.e., $\vop{x} < \vcl{y}$ for each $x, y \in X$, to ensure that
  ref-words remain valid. A sequential VA is \emph{ordered} relative to~$<$ if, for any
  ref-word in its ref-word language, for every contiguous subsequence of
  markers, then they are ordered relative to~$<$.

  It is not hard to see that we can rewrite sequential VAs, up to an exponential blowup, to
  ensure that they are ordered. Given an input sequential VA $\calA = (Q, q_0, F, \delta)$,
  build the ordered VA $\calA' = (Q \cup Q', q_0, F \cup F', \delta')$ where $Q'$ is a
  primed copy of~$Q$ (i.e., $Q' = \{q' \mid q \in Q\}$), where $F'$ is defined
  in the same way, and where $\delta'$ is
  initialized to perform the letter transitions of~$\delta$ but from primed and
  unprimed states to unprimed states: formally, for each letter transition
  $(q_1, a, q_2) \in \delta$, we add two letter transitions $(q_1, a, q_2)$ and
  $(q_1', a, q_2)$ to~$\delta$.

  Now, consider every path in~$\calA$ that traverses only marker transitions, and,
  letting $q_1$ and $q_2$ be the initial and final states, add a path from $q_1$
  to~$q_2$ in~$\calA'$ that goes via fresh states and reads the same markers but in
  the order given by~$<$.

  After this modification, we claim that $\calA'$ accepts the same ref-word language
  as $\calA$. Indeed, any path from an initial to a final state in~$\calA$ can be
  replayed in~$\calA'$: whenever we take marker transitions in~$\calA$, then we follow a
  path in~$\calA'$ bringing us to a primed state which is then indistinguishable
  from the corresponding unprimed state. Conversely, any path in~$\calA'$ can be
  replayed in~$\calA$. Further, $\calA'$ is ordered relative to~$<$: this is because all
  paths of contiguous marker symbols are ordered by construction. (Note that the
  reason why we distinguish between primed and unprimed states is to ensure
  that, once we have taken a path, then we cannot continue with another path
  without taking a letter transition.)

  Coming back to our input VAs $P_1$ and $P_2$, we apply this
  transformation to them, to obtain VAs  $P_1'$ and $P_2'$ that are ordered
  relative to the same order. Now, we can conclude with some standard
  automata-theoretic manipulations on~$P_2$, and then by taking a product
  automaton. Specifically, first modify $P_2$ to be complete, i.e., add a sink state
  $q_\bot$ 
  and ensure that all states have a transition for all letters and markers from
  every state (including $q_\bot$) to~$q_\bot$: note that the resulting VA is
  still sequential because the new state is never part of an accepting run.
  Second, make $P_2$ deterministic as an automaton on $\Sigma \cup \markers(X)$,
  by applying the standard determinization procedure for finite automata. The resulting VA accepts
  the same language of ref-words, hence it is still sequential and ordered
  relative to the same order: but the VA is
  now deterministic in the sense that every ref-word of its language of
  ref-words (hence, every mapping) has precisely one accepting run, and each
  ref-word is the label of a path in the automaton from the initial state (the
  path is generally non-accepting, i.e., it may end in the sink).

  Having modified $P_2'$ in this way, let us construct the product automaton.
  Write $P_1' = (Q_1, q_{0,1}, F_1, \delta_1)$ and
  $P_2' = (Q_2, q_{0,2}, F_2, \delta_2)$. We 
  construct $(Q_1 \times Q_2, (q_{0,1}, q_{0,2}), F_1
  \times (Q_2 \setminus F_2), \delta)$, where we define $\delta$ to do
  transitions in both components, i.e., for each marker or letter~$\ell$, for 
  each $(q_1, \ell, q_1') \in
  \delta_1$ and $(q_2, \ell, q_2') \in \delta_2$, add $((q_1, q_2), \ell, (q_1',
  q_2'))$ to~$\delta$. 

  We claim that the resulting VA is sequential, because the projection of any
  accepting run to the first component yields an accepting run of~$P_1$, which
  is sequential. Further, it is ordered for the same reason.
  Now, we claim that the automaton accepts the ref-word language
  which is the difference of that of~$P_1'$ and that of~$P_2'$: this is because we
  are doing the standard product construction on automata and the
  automaton~$P_2'$ is deterministic. Formally, any ref-word produced by the
  product gives an accepting run for~$P_1'$ of that word and the unique run
  for~$P_2'$ on that ref-word which is non-accepting, hence witnesses that $P_2$
  reject that ref-word because it is deterministic. Conversely, any ref-word
  accepted by~$P_1'$ and rejected by~$P_2'$ gives an accepting run for that
  ref-word in the product, using the fact that $P_2'$ is complete.

  We now conclude because the resulting VA accepts precisely the ref-words
  accepted by~$P_1'$ and rejected by~$P_2'$: as $P_1'$ and $P_2'$ are ordered,
  the mappings accepted by their difference correspond to the difference of the
  ref-word languages, concluding the proof.
\end{proof}
\end{toappendix}

\begin{toappendix}
\subsection{Decomposing VAs into Functional VAs, and Closure Under Intersection}\label{sct:decompose}

Remember that a functional VA $\calA$ is a sequential VA such that every
  mapping $m$ captured by~$\calA$ on any document $d$ assigns a span
  to every variable of the domain of~$\calA$.
  Given a spanner $P$ and a variable set $X$, we let $P^{[X]}$ denote the spanner that on every document $d$ extracts $P^{[X]}(d)\colonequals \{m\in P(d)\mid \dom(m) = X\}$.

The following result will be useful in several places:
\begin{lemma}\label{lem:capturesubsetfunctional}
	For every regular spanner $P$ and every variable set $X$, there is a functional VA defining $P^{[X]}$.
\end{lemma}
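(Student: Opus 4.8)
The plan is to take a sequential VA $\calA$ for $P$ and turn it into a functional VA for $P^{[X]}$ by ``hard-coding'' into the states which variables of $X$ have already been opened and closed along the run. First I would assume, using the trimming observation from the additional preliminaries, that $\calA$ is trimmed, so that on every partial run the markers seen so far are consistent (each variable is untouched, or opened, or opened-and-closed). Then I would build a new VA whose state set is $Q \times 2^X \times 2^X$, where a state $(q, O, C)$ records the current state $q$ of $\calA$ together with the set $O$ of variables of $X$ whose opening marker has already been read and the set $C \subseteq O$ of those whose closing marker has also been read. Transitions update the second and third components in the obvious way on marker transitions of variables in $X$ (rejecting, i.e.\ having no transition, if a marker is read that would violate validity, e.g.\ re-opening a variable in $O$), leave them unchanged on letter transitions and on markers of variables not in $X$, and the initial state is $(q_0, \emptyset, \emptyset)$. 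Crucially, I make a state $(q, O, C)$ final exactly when $q \in F$ \emph{and} $O = C = X$: this is what forces the accepted mappings to assign precisely the variables of $X$.

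The key correctness steps are then: (i) every accepting run of the new VA projects to an accepting run of $\calA$ reading the same word, so the new VA captures a subset of $P(d)$ on each document $d$; (ii) conversely, every accepting run of $\calA$ that happens to assign exactly the variables of $X$ lifts uniquely to an accepting run of the new VA, because the bookkeeping components are determined by the markers read so far, and the final-state condition $O = C = X$ is met precisely because the run assigns all of $X$ and (being a run of $\calA$ on variable set possibly larger than $X$) we only track the $X$-part; (iii) hence the new VA captures exactly $\{m \in P(d) \mid \dom(m) = X\} = P^{[X]}(d)$. Sequentiality of the new VA is inherited from that of $\calA$ (the projection of any accepting run is a valid run of $\calA$, and the extra components never introduce new marker reads), and it is functional by construction since every accepting run reads the opening and closing marker of every variable in $X$ exactly once and reads no markers outside $X$ on an accepting run — here I should note that if $\calA$ has markers for variables outside $X$, an accepting run of the new VA may still read such markers, so to be fully rigorous I would additionally delete from $\delta$ all marker transitions for variables not in $X$ (equivalently, intersect with the requirement that no marker outside $X$ is read), which is harmless since such markers can never appear in a mapping with domain exactly $X$ anyway; alternatively one first projects $\calA$ to $X$ as a preprocessing step.

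The main obstacle, though a minor one, is the interaction between ``$\dom(m) = X$'' and variables that $\calA$ may mention outside of $X$: one has to be careful that filtering to $\dom(m) = X$ is not the same as filtering to ``$m$ assigns every variable of $X$'', since a mapping could assign $X$ and also some extra variable. This is handled either by the preprocessing-projection step mentioned above or by explicitly forbidding, in the transition relation of the new VA, any marker transition whose variable lies outside $X$; I would state this cleanly so the reader sees that the construction genuinely yields domain exactly $X$. Everything else is routine, and the size of the new VA is $|Q| \cdot 4^{|X|}$ (one can restrict to pairs $C \subseteq O$, giving $3^{|X|}$), which is fine since the lemma only asserts existence; I would remark on this blowup in passing, matching the tone of the surrounding text.
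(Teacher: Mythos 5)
Your construction is correct, but it is not the route the paper takes. The paper first builds $\calA^{\le X}$ (delete all marker transitions for variables outside $X$, capturing the mappings with domain $\subseteq X$) and, for each $x\in X$, an automaton capturing the mappings with domain $\subseteq X\setminus\{x\}$; it then expresses $P^{[X]}$ as the set difference $P_{\calA^{\le X}} - P_{\calA^{<X}}$ and invokes the closure of schemaless regular spanners under difference (Proposition~\ref{prp:diffclose}), whose proof requires normalizing to ``ordered'' VAs, completing, determinizing, and taking a product. Your direct bookkeeping product on $Q\times 2^X\times 2^X$ (restricted to $C\subseteq O$), with acceptance forced at $O=C=X$ and marker transitions outside $X$ deleted, sidesteps all of that machinery and yields an explicit, self-contained bound of $|Q|\cdot 3^{|X|}$ states; both approaches are exponential in $|X|$, but yours does not depend on the heavier difference-closure result (which the paper needs anyway for other purposes, so reusing it there is essentially free). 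You correctly isolate the one genuine subtlety --- that ``$\dom(m)=X$'' is stronger than ``$m$ assigns all of $X$'' when $\calA$ mentions variables outside $X$ --- and your fix (forbidding markers outside $X$, or projecting first) is exactly right; note also that the degenerate case $X\not\subseteq\SVars(P)$ is handled automatically by your construction, since $O=X$ is then unreachable and the resulting VA vacuously defines the empty (hence functional) spanner, matching the paper's explicit dismissal of that case.
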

\begin{proof}
	Assume that $X\subseteq \SVars(P)$; otherwise the statement is trivial since $P^{[X]}$ captures no mappings on any document. Let $\calA$ be a sequential VA defining $P$. We show how to construct a functional VA for $P^{[X]}$.
	
	In a first step, we construct a VA~$\calA^{\le X}$ by deleting from $\calA$ all marker transitions for variables not in $X$. On any document $d$, the VA $\calA^{\le X}$ captures exactly the set $\{m\in P(d)\mid \dom(m) \subseteq X\}$. Note that $\calA^{\le X}$ is sequential because $\calA$ is. 
	
        In the same way, for every $x\in X$, we construct a sequential VA $\calA^{\le (X\setminus x)}$ capturing $\{m\in P(d)\mid \dom(m) \subseteq X\setminus \{x\}\}$. Then, we can construct a sequential VA $\calA^{< X}$ capturing $\{m\in P(d)\mid \dom(m) \subsetneq X\}$ on every document as follows: by renaming, assume that the state sets of all $\calA^{\le (X\setminus x)}$ are disjoint. Add a new initial state $s$, connect it to the initial states of all $\calA^{\le (X\setminus x)}$ by an $\epsilon$-edge (recall the definition of $\epsilon$-transitions in Appendix~\ref{app:prelim}). Moreover add a new final state $t$ which is connected to all final states of all $\calA^{\le (X\setminus x)}$ by an $\epsilon$-edge. Let $s$ be the initial state of $\calA^{< X}$ and let $t$ be its single final state. It is easy to see that the resulting VA $\calA^{< X}$ is sequential and accepts the desired mappings.
		
	Remember that for every VA $\calA'$, we denote by $P_{\calA'}$ the spanner defined by $\calA'$. Then
	\begin{align*}
		P^{[X]} =  P_{\calA^{\le X}} - P_{\calA^{<X}}.
	\end{align*}
	With Proposition~\ref{prp:diffclose} we get a sequential VA computing $P^{[X]}$. Moreover, since all captured mappings assign spans to all variables in $X$, that VA is functional.
\end{proof}

We get the following direct consequence whose proof is immediate:

\begin{lemma}\label{lem:decomposeregular}
	Let $P$ be a regular spanner. Then for every $X\subseteq \SVars$ there is a functional VA defining $P^{[X]}$ and we thus have $P = \bigcup_{X\subseteq \SVars(P)} P^{[X]}$.
\end{lemma}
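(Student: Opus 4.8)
The plan is to obtain both assertions directly from Lemma~\ref{lem:capturesubsetfunctional} and from the definitions, with essentially no extra work. For the first part, fix a regular spanner $P$ and a variable set $X \subseteq \SVars$. If $X \subseteq \SVars(P)$, then Lemma~\ref{lem:capturesubsetfunctional} already produces a functional VA defining $P^{[X]}$, so there is nothing more to prove. If $X \not\subseteq \SVars(P)$, then by definition of $\SVars(P)$ no mapping captured by $P$ on any document assigns all and only the variables of~$X$ (indeed some variable of $X$ is never assigned by any captured mapping), so $P^{[X]}$ is the empty spanner, which is trivially defined by a VA with no accepting run; that VA is vacuously functional. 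Hence the claim holds for every $X \subseteq \SVars$.

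For the decomposition equality, I would argue document by document. Fix $d \in \alphabetstar$. The inclusion $\bigcup_{X \subseteq \SVars(P)} P^{[X]}(d) \subseteq P(d)$ is immediate, since $P^{[X]}(d) = \{m \in P(d) \mid \dom(m) = X\} \subseteq P(d)$ for every~$X$. For the converse inclusion, take any $m \in P(d)$ and set $X_m \colonequals \dom(m)$. By definition of $\SVars(P)$, every variable assigned by some mapping captured by $P$ on some document belongs to $\SVars(P)$, so $X_m \subseteq \SVars(P)$; and then $m \in P^{[X_m]}(d)$ directly from the definition of $P^{[X_m]}$. Thus $m$ lies in the right-hand union, giving $P(d) \subseteq \bigcup_{X \subseteq \SVars(P)} P^{[X]}(d)$. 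As $d$ is arbitrary, $P = \bigcup_{X \subseteq \SVars(P)} P^{[X]}$.

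I would close with the remark that, since $P$ is regular, it is definable by a sequential VA whose transitions mention only finitely many variables, so $\SVars(P)$ is finite and the displayed union is a \emph{finite} union of functional VAs; in particular this re-expresses every regular spanner as a finite union of schema-based pieces. There is no genuine obstacle in this proof: the only points to verify are that $P^{[X]}(d) \subseteq P(d)$ and that $\dom(m) \subseteq \SVars(P)$ for $m \in P(d)$, both of which are immediate unfoldings of the definitions — the real content lies in Lemma~\ref{lem:capturesubsetfunctional}, which has already been established.
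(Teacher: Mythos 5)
Your proof is correct and matches the paper's approach: the paper itself declares this lemma an immediate consequence of Lemma~\ref{lem:capturesubsetfunctional}, and your write-up simply spells out that immediate argument (invoking the lemma for $X \subseteq \SVars(P)$, noting $P^{[X]}$ is empty otherwise, and partitioning each $P(d)$ by the domains of its mappings). Nothing is missing.
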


  We deduce that regular spanners are closed under intersection. (Note that, unlike the schema-based case, in the schemaless case the intersection operator on sets cannot be immediately expressed using the join operator.)
\begin{lemma}
	\label{lem:intersectionclose}
	The (schemaless) regular spanners are closed under intersection: given two regular spanners $P_1$
	and $P_2$, the difference $P_1 \cap P_2$ can be expressed as a regular spanner.
\end{lemma}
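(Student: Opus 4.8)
The plan is to reduce the schemaless case to the schema-based case using the functional decomposition from Lemma~\ref{lem:decomposeregular}. First I would write $P_1 = \bigcup_{X \subseteq \SVars(P_1)} P_1^{[X]}$ and $P_2 = \bigcup_{Y \subseteq \SVars(P_2)} P_2^{[Y]}$, where by Lemma~\ref{lem:capturesubsetfunctional} each $P_i^{[X]}$ is defined by a functional VA, hence is a schema-based spanner whose schema is exactly $X$ (unless it captures no mapping on any document, which is a trivial case that we can discard). Since the intersection operator on sets distributes over unions, on every document $d$ we have $(P_1 \cap P_2)(d) = \bigcup_{X,Y} \bigl(P_1^{[X]} \cap P_2^{[Y]}\bigr)(d)$, the union ranging over $X \subseteq \SVars(P_1)$ and $Y \subseteq \SVars(P_2)$.

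Next I would observe that two mappings with distinct domains are never equal, so the spanner $P_1^{[X]} \cap P_2^{[Y]}$ captures no mapping on any document whenever $X \neq Y$. Therefore the double union collapses to $P_1 \cap P_2 = \bigcup_{X} \bigl(P_1^{[X]} \cap P_2^{[X]}\bigr)$, a finite union over the subsets $X \subseteq \SVars(P_1) \cap \SVars(P_2)$.

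For each fixed $X$, the spanners $P_1^{[X]}$ and $P_2^{[X]}$ are schema-based with the same variable set $X$, so by the identity recalled in the preliminaries (for schema-based spanners with equal schema, intersection coincides with natural join) we get $\bigl(P_1^{[X]} \cap P_2^{[X]}\bigr)(d) = \bigl(P_1^{[X]} \Join P_2^{[X]}\bigr)(d)$ for all $d$. Since regular spanners are closed under natural join, $P_1^{[X]} \cap P_2^{[X]}$ is a regular spanner, and in fact a VA for it can be computed from the functional VAs obtained for $P_1^{[X]}$ and $P_2^{[X]}$ via Lemma~\ref{lem:capturesubsetfunctional}. Finally, closure of regular spanners under finite union yields a VA for $P_1 \cap P_2$, completing the argument.

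I do not expect a genuine obstacle here, since the heavy lifting—closure under difference (Proposition~\ref{prp:diffclose}) and the functional decomposition (Lemma~\ref{lem:decomposeregular})—is already in place. The only points requiring care are the bookkeeping that $P_1^{[X]} \cap P_2^{[Y]}$ is empty for $X \neq Y$, the check that the schema of each nonempty $P_i^{[X]}$ is genuinely $X$ (so that the schema-based ``intersection equals join'' identity legitimately applies), and the degenerate cases in which some $P_i^{[X]}$ is the empty spanner, which are handled trivially.
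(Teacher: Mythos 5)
Your proposal is correct and follows essentially the same route as the paper's own proof: decompose each spanner into its functional pieces $P_i^{[X]}$ via Lemma~\ref{lem:decomposeregular}, discard the cross terms with mismatched domains, replace intersection by natural join on each schema-based piece, and conclude by closure under join and union. The extra care you take about empty pieces and the distribution of intersection over union is fine but not a substantive difference.
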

\begin{proof}
We use Lemma~\ref{lem:decomposeregular} to decompose $P_1 = \bigcup_{X\subseteq \SVars(P)} P_1^{[X]}$ and $P_2 = \bigcup_{X\subseteq \SVars(P)} P_2^{[X]}$. Whenever $X_1, X_2\subseteq \SVars(P_1)\cup \SVars(P_2)$ are different, we have that $(P_1^{[X_1]}\cap P_2^{[X_2]})(d) = \emptyset$ on every document, because they $P_i^{[X_i]}$ extract mappings with different domains. It follows that 
\begin{align*}
	P_1 \cap P_2 = \bigcup_{X\subseteq \SVars(P_1)\cap \SVars(P_2)} (P_1^{[X]}\cap P_2^{[X]}).
\end{align*}
Since for every $d$, all mappings in $P_1^{[X]}(d)$ and $P_2^{[X]}(d)$ assign precisely the variables in~$X$, in this specific context the intersection operator can be expressed as a natural join, like in the schema-based setting:
  \begin{align*}
	P_1^{[X]}\cap P_2^{[X]} = P_1^{[X]}\Join P_2^{[X]}.
\end{align*}
Since, as discussed before, regular spanners are closed under joins, we have that for every $\subseteq \SVars(P_1)\cap \SVars(P_2)$ the spanner $P_1^{[X]}\cap P_2^{[X]}$ is regular. The lemma then follows by closure of regular spanners under union.
\end{proof}
\end{toappendix}

\section{The Skyline Operator}\label{sec:skyline}

In this paper, we define and study a new
operator called the \emph{skyline operator}.
Its goal is to only extract mappings 
that contain the maximum amount
of information in a certain sense. 

\subparagraph*{Domination relations.}
We begin by defining \emph{domination relations} which describe how to compare
the information given by two mappings on a given document~$d$.

\begin{definition}
  A \emph{pre-domination relation} $\domrela{}$ for a document~$d$ is a relation on
  the set of mappings $\Maps$ of~$d$.
  We say that it is a \emph{domination relation} if it is a
  (non-strict) partial order, i.e., it is reflexive, transitive, and
  antisymmetric.
  For $m_1, m_2 \in \Maps$, we say that $m_2$ \emph{dominates} $m_1$ if $m_1
  \domrela{} m_2$, and write
  $m_1 \not{\domrela{}} m_2$ otherwise.
\end{definition}

The goal of the domination relation is to define which mappings are preferred to others, intuitively because they 
contain more information; 
it may depend on the document, though we will present many
examples where it does not.

We introduce several domination relations that, as discussed in the Introduction, are part of practical systems and which we consider throughout this
paper:

\begin{example}
  \label{exa:selfdom}
The simplest relation is the trivial \emph{self domination} relation
  $\domrela{self}$ where every mapping only dominates itself, i.e., the pairs in the relation are $(m,m)$ for $m\in \Maps$.
\end{example}

\begin{example}\label{ex:varinc}
  The \emph{variable inclusion relation} $\domrela{varInc}$ contains the
  pairs $(m_1, m_2)$ such that for all $x\in \Vars$, if $m_1(x)$ is defined, then $m_2(x)$ is defined as well and $m_1(x)= m_2(x)$. Intuitively, we have $m_1\domrela{varInc} m_2$ whenever $\dom(m_1) \subseteq \dom(m_2)$ and $m_1 \mcomp m_2$,
  i.e.,
  when $m_2$ is an extension of $m_1$ that potentially assigns more variables
  than~$m_1$.
\end{example}

\begin{example}
  \label{exa:spaninc}
  The \emph{span inclusion relation} $\domrela{spanInc}$ contains the pairs $(m_1, m_2)$ of mappings with the same domain ($\dom(m_1) = \dom(m_2)$)
  such that for every $x\in \dom(m_1)$ the span $m_1(x)$ is included in $m_2(x)$.
  Intuitively, $m_1$ and $m_2$ match the same variables in the same parts of a document, but the matches of variables in $m_1$ are subwords of their matches in~$m_2$.
\end{example}

\begin{example}\label{exa:leftotoright}
	The \emph{left-to-right relation} $\domrela{ltr}$ contains the pairs $(m_1, m_2)$ of mappings with the same domain 	
	such that, for every variable $x$ on which $m_1$ and $m_2$ are defined, the
        spans $m_1(x)$ and $m_2(x)$ start at the same position but $m_2(x)$ is no shorter than $m_1(x)$.
\end{example}

\begin{example}
  \label{exa:spanlen}
The \emph{span length relation} $\domrela{spanLen}$ contains the pairs $(m_1, m_2)$ of mappings with the same domain 
  where for every $x\in \dom(m_1)$ the span $m_2(x)$ is no shorter
  than $m_1(x)$.
  Intuitively, $\domrela{spanLen}$ prefers longer spans over shorter ones,
  anywhere in the document.
\end{example}

\subparagraph*{Domination rules.}
We now introduce \emph{domination rules} which associate to each document~$d$
a domination relation over~$d$. In this paper, we express domination
rules as spanners
on specific domains.  To this end, given
a set of variables $X$, we write $\dagg{X}$ to mean a set of annotated
copies of the variables of~$X$, formally $\dagg{X}\colonequals \{\dagg{x} \mid x \in X\}$.
We extend the notation to mappings by defining $\dagg{m}$ for a mapping~$m$
to be the mapping with domain $\dom(\dagg{m}) = \dagg{\dom(m)}$ such that
for all $x \in \dom(m)$ we have $\dagg{m}(\dagg{x})\colonequals m(x)$.
We then define:
\begin{definition}
    A \emph{pre-domination rule} $\domrule{}$ on a set of variables $X \subseteq \Vars$
    is 
    a (schemaless) spanner with 
    $\SVars(\domrule{}) \subseteq X \cup \dagg{X}$.
    For every document $d \in \alphabetstar$, 
    we see $\domrule{}(d)$ as a pre-domination relation~$\domrela{}$ on~$d$
    defined by the mappings captured
    by~$\domrule{}$ on~$d$, the left-hand-side and right-hand-side of the
    comparability pairs being
    the restrictions of the mappings to~$X$ and to~$\dagg{X}$ respectively.
    Formally, the relation~$\domrela{}$ is:
    $R \colonequals \{(m\restrict{X},m') \mid m \in \domrule{}(d),
    \dagg{(m')} = m\restrict{\dagg{X}}\}
    \}$.

    We say that $\domrule{}$ is a \emph{domination rule} if, on every document
    $d \in \alphabetstar$, the pre-domination
    relation~$R$ defined above is a domination relation, i.e., it correctly
    defines a partial order.
\end{definition}

Intuitively, for every document $d$, the domination rule $\domrule{}$ defines
the domination relation~$\domrela{}$ where each
mapping $m \in \domrule{}(d)$ denotes a pair, i.e., the restriction of $m$ to $X$
is dominated by the restriction of $m$ to $\dagg{X}$ (renaming the variables
from~$\dagg{X}$ to~$X$).
Note that pre-domination rules are just 
an intermediary notion; in the
sequel, we only consider 
domination rules.

\begin{example}
  For any set~$X$ of variables,
  each of the domination relations introduced in 
  Examples~\mbox{\ref{exa:selfdom}--\ref{exa:leftotoright}}
can be defined by a domination rule
  expressed by a regular spanner on~$X$ (for the span length domination relation
  of Example~\ref{exa:spanlen}, see 
  Lemma~\ref{lem:pumping1}). At the end of the section, we explain
  how to express them in a more concise \emph{variable-wise} way that does not depend
  on~$X$.
\end{example}

\subparagraph*{The skyline operator.}
We have introduced domination rules as a way to define domination relations over
arbitrary documents. We can now introduce
the \emph{skyline operator} 
to extract maximal mappings, i.e., mappings that
are not dominated in the domination relation:

\begin{definition} 
  \label{def:skyop}
  Given a domination rule $\domrule{}$, the \emph{skyline operator}
    $\domoper{\domrule{}}$ of $\domrule{}$ applies to a spanner~$P$ and defines
    a spanner $\domoper{\domrule{}}P$ in the following way:
    given a document $d$, 
    writing $\domrela{}$ to denote the domination relation $\domrule{}(d)$ given
    by $\domrule{}$ on~$d$,
    the result of $\domoper{\domrule{}}P$ on~$d$ is
    the set of maximal mappings of~$P(d)$ under the domination
    relation~$\domrela{}$. Formally, we have:
    $(\domoper{\domrule{}} P)(d) \colonequals
    \{m \in P(d) \mid \forall m' \in P(d)
    \setminus \{m\}\colon  m \not{\domrela{}} m'\}$.
\end{definition}

Intuitively, the operator
filters out the mappings that are dominated by another mapping according to the domination relation
defined by the domination rule over the input document.

\begin{example}\label{ex:skyline}
In Figure~\ref{fig:exskyline} we show the effect of the skyline operator with
  respect to some of our example domination relations. Assume that we are given a
  spanner $P$ in variables $\{x,y\}$ that on a given document $d$ extracts the
  mappings given in Figure~\ref{fig:exskylinea} (here a dash~``$-$'' means that the variable is not assigned by a mapping).
  We show the result of applying the skyline operators with (possibly non-regular) domination rules defining
  the variable inclusion domination relation $\domrela{varInc}$ (Figure~\ref{fig:exskylineb}),
  the span inclusion domination relation $\domrela{spanInc}$ (Figure~\ref{fig:exskylinec}),
  and the span length domination relation~$\domrela{spanLen}$ (Figure~\ref{fig:exskylined}).
Note that, for the variable inclusion domination rule, the skyline only makes sense for
schemaless spanners, as two distinct mappings that assign the same variables are always 
incomparable.%
\begin{figure}
\begin{subfigure}[t]{.23\textwidth}
\centering
\begin{tabular}{ll}
$m(x)$&$m(y)$\\
\hline
\mspan{1}{2}& \mspan{2}{3}\\
$-$ & \mspan{2}{3}\\
\mspan{0}{2}& \mspan{2}{3}\\
\mspan{4}{6}& \mspan{4}{10}\\
\end{tabular}
\caption{The extracted mappings $P(d)$.}
\label{fig:exskylinea}
\end{subfigure}
\hfill
\begin{subfigure}[t]{.25\textwidth}
\centering
\begin{tabular}{ll}
	$m(x)$&$m(y)$\\
	\hline
	\mspan{1}{2}& \mspan{2}{3}\\
	\mspan{0}{2}& \mspan{2}{3}\\
	\mspan{4}{6}& \mspan{4}{10}\\
                \null\\
\end{tabular}
  \caption{Skyline under the variable inclusion relation.}
\label{fig:exskylineb}
\end{subfigure}
\hfill
\begin{subfigure}[t]{.23\textwidth}
	\centering
	\begin{tabular}{ll}
		$m(x)$&$m(y)$\\
		\hline
$-$ & \mspan{2}{3}\\
\mspan{0}{2}& \mspan{2}{3}\\
\mspan{4}{6}& \mspan{4}{10}\\
                \null\\
	\end{tabular}
        \caption{Skyline under the span inclusion relation.}
\label{fig:exskylinec}
\end{subfigure}
\hfill
\begin{subfigure}[t]{.23\textwidth}
	\centering
	\begin{tabular}{ll}
		$m(x)$&$m(y)$\\
		\hline
$-$ & \mspan{2}{3}\\
		\mspan{4}{6}& \mspan{4}{10}\\
                \null\\
                \null\\
	\end{tabular}
	\caption{Skyline under the span length relation.}
\label{fig:exskylined}
\end{subfigure}
\caption{Extracted mappings before and applying different skyline operators; see Example~\ref{ex:skyline}.}
	\label{fig:exskyline}
\end{figure}
\end{example}

\subparagraph*{Variable-wise rules.}
We have defined our skyline operator relative to domination rules expressed as
spanners on explicit sets of variables. However,
it will often be convenient to define the rules as
as products of rules on a single variable by applying the product operator.
This 
ensures that the rule is ``symmetric'' in the sense that all variables
behave the same:

\begin{definition}
	Let $\domrule{}$ be a domination rule in a single variable $x$, i.e., a spanner
        using variables of $\{x, \dagg{x}\}$.
        For $y \in \Vars$, we let $\domrule{}^y$ be the domination rule where we
        replace $x$ and $\dagg{x}$ by $y$ and $\dagg{y}$, i.e., on every
        document $d$, the set of mappings $\domrule{}^y(d)$ consists of one mapping $m^y$
        for each mapping $m \in \domrule{}^y(d)$ with $m^y(y)$ and $m^y(\dagg{y})$ 
        defined like $m(x)$ and $m(\dagg{x})$.
        The \emph{variable-wise domination rule} defined by $\domrule{}$ on a variable
        set $X$ is then simply
        $\bigtimes_{y\in X} \domrule{}^y$.
        A domination rule is said to be \emph{variable-wise} if it can be
        expressed in this way.
\end{definition}

We will often leave the set of variables~$X$ implicit, and may abuse notation
to identify domination rules in a single variable with the variable-wise
domination rule that they can define on an arbitrary variable set.

\begin{example}\label{ex:varwise}
  The self domination rule (Example~\ref{exa:selfdom}) is variable-wise, because
        it can be obtained from the following trivial domination rule on a single
        variable:
\begin{align*}
	\domrule{self} = \alphabetstar \dagg{x}\{x\{\alphabetstar\}\}
        \alphabetstar \lor \alphabetstar.
\end{align*}
        The $\alphabetstar$ term above is used to ensure reflexivity and express the vacuous domination relation between the mapping
        where $x$ is not assigned and the mapping where $\dagg{x}$ is not
        assigned.

  The span inclusion domination rule, left-to-right domination rule, and variable inclusion domination rule (Examples~\ref{ex:varinc}--\ref{exa:leftotoright}) are also variable-wise with the single-variable rules:
\begin{align*}
  \domrule{spanInc} & = \alphabetstar \dagg{x}\{\alphabetstar
  x\{\alphabetstar\}\alphabetstar\} \alphabetstar \lor \alphabetstar. \\
  \domrule{ltr} & = \alphabetstar \dagg{x}\{
x\{\alphabetstar\}\alphabetstar\} \alphabetstar \lor \alphabetstar.\\
\domrule{varInc} & =
\alphabetstar \dagg{x} \{ \alphabetstar \} \alphabetstar\lor  \domrule{self}.
\end{align*}
Here, $\alphabetstar \dagg{x} \{ \alphabetstar \} \alphabetstar$ expresses that
  assigning a variable is better than not assigning it.
\end{example}

As for the variable
length domination rule (Example~\ref{exa:spanlen}), it is also
variable-wise, but a standard pumping argument shows that it cannot be 
defined by a regular spanner:

\begin{lemmarep}\label{lem:pumping1}
    The single-variable span length domination rule $\domrule{spanLen}$ is not expressible as a regular spanner.
\end{lemmarep}
\begin{proof}
We use a standard pumping argument.
  By way of contradiction, assume that there is a sequential VA $P$ that defines
  $\domrule{spanLen}$, say on the variables $x$ and~$\dagg{x}$.
  Let $n$ be the number of states of $P$.
  Then, on input $a^{2n+3}$, the spanner $P$ must extract the mapping $m$ with $m(x)= \mspan{0}{n+1}$ and $m(\dagg{x})= \mspan{n+1}{2n+3}$. Since $|m(x)| > n$, 
  in an accepting run for this mapping, between opening $x$ with $\vdash x$ and closing it again with $\dashv x$, 
  the automaton must visit one state twice.
  Let $i$ be the length of the cycle between these visits. Then on input $a^{2n+3+2i}$, we can follow this cycle $3$ times instead of once, so $P$ extracts $m'$ with $m'(x)= \mspan{0}{2i+ n+1}$ and $m'(\dagg{x})= \mspan{2i+n+1}{2n+3+2i}$. However, since $i \ge 1$, we have that $|m'(x)| = 2i+n+1 > 2n+2 = |m'(\dagg{x})|$, so the mapping~$m'$ should not have been extracted. This contradicts the assumption that $P$ defines $\domrule{spanLen}$, and concludes the proof.
\end{proof}

\section{Closure under the Skyline Operator}\label{sct:closure}

We have defined the skyline operator relative to domination rules expressed by
regular spanners.
One natural question is then to understand
whether the skyline operator under such rules extends the expressive power of
spanner formalisms, or whether it can be defined in existing models. This is
what we investigate in this section.

\subparagraph*{Regular spanners.}
We first focus on regular spanners, and show that they are closed under the
skyline operator for domination rules expressed as regular spanners. We do so by
showing how the skyline operator can be expressed with operations under which
regular spanners are closed, 
namely join, intersection and difference
(see Appendix~\ref{app:prelim} for definitions).

\begin{toappendix}
  \subsection{Proof of Theorem~\ref{thm:closure}}
\end{toappendix}

\begin{theoremrep}\label{thm:closure}
	There is an algorithm that, given a sequential VA defining a regular spanner $P$ and a
        sequential VA defining a domination rule $\domrule{}$, computes
        a sequential VA for $\domoper{\domrule{}}P$.
\end{theoremrep}

\begin{proof}
	For the proof, we will use the operators on spanners introduced in Section~\ref{app:prelim} and the fact that, as discussed there, regular spanners are closed under all of the operators. 
	
    Let $P$ be a spanner defined by a sequential VA and let $\domrule{}$ be the spanner defining the domination relation.
    We define 
    $P_{\sky} = P - \pi_X ((P \times \dagg{P}) \cap (\domrule{} - \domrule{self}))$.
    Intuitively, the spanner $\domrule{} - \domrule{self}$
    extracts all the strict domination pairs (i.e., those which have two distinct mappings),
    and the spanner $P \times \dagg{P}$ extracts all the pairs 
    where both mappings would be extracted by $P$.
    Here $\dagg{P}$ denotes the spanner obtained from~$P$ that captures the
    mappings $\dagg{m}$ for each mapping~$m$ captured by~$P$.
    Then the intersection of these two spanners and projection onto $X$ lets us
    obtain the dominated mappings of $P$. Removing them leaves exactly the
    non-dominated mappings extracted by $P$, so $P_\sky$ extracts exactly the
    mappings in~$\domoper{\domrule{}}P$.

    It remains to show that $P_\sky$ is a regular spanner. But this is immediate from the fact that it is constructed from regular spanners by applying spanner operations under which, as we already discussed above, regular spanners are closed.
  \end{proof}

  \cref{thm:closure} generalizes a result of Fagin et
  al.~\cite[Theorem~5.3]{DBLP:journals/tods/FaginKRV16} on the
  expressiveness of transitive ``denial pgds.'' In our terminology,
  their theorem states that the class of \e{complete} regular spanners
  is closed under the restriction to maximal answers defined by a
  regular domination rule. \cref{thm:closure} thus extends their result
  to schemaless
  regular spanners.

Theorem~\ref{thm:closure} implies that taking the skyline relative to
regular
domination rules does not increase the expressivity of regular spanners. However, like the result of~\cite{DBLP:journals/tods/FaginKRV16},
our construction may compute VAs that are exponentially bigger than the input VA.
In Section~\ref{sct:blowups}, we will see that this is unavoidable for any
sequential VA expressing the skyline.

As an application of Theorem~\ref{thm:closure} we get in particular that regular spanners are closed under the skyline operator for most of the examples presented
earlier, i.e., Examples~\ref{exa:selfdom}--\ref{exa:leftotoright}.

\begin{toappendix}
  \subsection{Proof of Corollary~\ref{cor:closerules}}
\end{toappendix}

\begin{corollaryrep}
  \label{cor:closerules}
	There are algorithms that, given a sequential VA $P$, compute
        sequential VAs for
        $\domoper{\mathit{self}} P$,
        $\domoper{\mathit{varInc}} P$, $\domoper{\mathit{ltr}} P$, and
        $\domoper{\mathit{spanInc}}P$, respectively.
\end{corollaryrep}
\begin{proof}
  We have seen in Example~\ref{ex:varwise} that the domination relations
  $\domrela{self}$, $\domrela{varInc}$, and $\domrela{spanInc}$ can be expressed
  as variable-wise rules expressed as a single-variable domination rule defined by a regex-formula,
  and thus also by a sequential VA. We can compute which variables are used
  in~$P$, and take the Cartesian product to obtain a spanner defining the
  domination relation on the right variable sets. Note that Cartesian products
  of sequential VAs can be expressed by sequential VAs, so this gives us a
  sequential VA defining the domination
  rule that we need. We can then conclude directly with Theorem~\ref{thm:closure}.

  Notice that, in the statement, the case of $\domoper{\mathit{self}} P$ is in fact trivial because taking
  the closure under the self-domination rule has no effect so the result is
  equal to~$P$. We only state the result with
$\domoper{\mathit{self}}$
  for completeness.
\end{proof}

By contrast, Theorem~\ref{thm:closure} does not apply to the span-length
domination relation, as it is not expressible as a regular spanner
(Lemma~\ref{lem:pumping1}). In fact, we can show that taking the skyline under
this domination relation is generally \emph{not} expressible as a regular
spanner:

\begin{toappendix}
  \subsection{Proof of Proposition~\ref{prp:spanl}}
\end{toappendix}

\begin{propositionrep}
  \label{prp:spanl}
        There is a sequential VA $P$ such that 
        $\domoper{\mathit{spanLen}}P$ is not regular.
\end{propositionrep}
\begin{proof}
  We show that $\domoper{\mathit{spanLen}}P$ cannot be expressed as a VA, and
use a similar pumping argument as for Lemma~\ref{lem:pumping1}. Let $P$ be
  the spanner in the variable $x$ and the alphabet $\alphabet= \{a,b\}$ defined
  by the regex-formula:
\begin{align*}
x\{a^*\} b a^* \lor a^* b x\{a^*\}.
\end{align*}
Then on input $a^i b j^j$ with $i,j\in \mathbb{N}$, the two spans $\mspan{0}{i}$
  and $\mspan{i+1}{i+j+1}$ are extracted (before applying the skyline), and the skyline operator should remove one of these mappings (if $i \neq j$) or none (if $i = j$). We will show that
  $\domoper{\mathit{spanLen}}P$ cannot be expressed by a VA.
  By way of contradiction, assume that
  there is a VA $P'$ extracting the skyline; we assume without loss of
  generality that it is sequential. Then, on any document of
  the form $a^i b a^j$, then $P'$ extracts the span $\mspan{0}{i}$ if $i>j$ and
  $\mspan{i+1}{i+j+1}$ if $j>i$. Let $n$ be the number of states of~$P'$.
  Then on $a^{n+1} b a^{n+2}$, the VA must extract $\mspan{n+2}{2n+4}$. As in
  the proof of Lemma~\ref{lem:pumping1}, $P'$ must follow a cycle of some length
  $k$ on the run on $a^{n+1} b a^{n+2}$ that extracts this span.
  Then, going through this cycle $4$ times, $P'$ on input $a^{n+3k+1}b a^{n+2}$ extracts the span $\mspan{n+3k+2}{2n+3k+4}$. However, since $n+3k+1 > n+2$,
  this span should not have been extracted as it is dominated by $\mspan{0}{n+3k+1}$.
  Hence, we have contradicted the assumption, which concludes the proof.
\end{proof}

\subparagraph*{Other spanner formalisms.} It is natural to ask whether closure
results such as Theorem~\ref{thm:closure} also hold for other spanner
formalisms. In particular, we can ask this for the language of \emph{core
spanners}, which extend regular spanners with string equalities;
see~\cite{faginformal} for the precise definitions and \cite{schmid2020purely}
for the schemaless case. We can show that core spanners, contrary to regular
spanners, are \emph{not} closed under the skyline operator:

\begin{toappendix}
\subsection{Non-Closure of Core Spanners under Skylines}\label{app:non-closure}

  We now show that core spanners are \emph{not} closed under the skyline
  operator. To this end, let us introduce the \emph{string equality selection} operator
  used in core spanners. For any two variables~$x$ and~$y$, we
  write $\zeta^=_{x,y}$ for two variables $x$ and $y$ to denote the unary
  operator on spanners defined in the following way: given a spanner $P$ on a
  document $d$, the application $\zeta^=_{x,y} P$ of the operator to $P$ on~$d$
  captures a subset of the mappings captured by~$P$ on~$d$, consisting of those
  where one of $x$ or $y$ is undefined or they are both assigned to spans
  containing the same word. Formally, $(\zeta^=_{x,y} P)(d) = \{m \in P(d) \mid
  x \notin \dom(m) \text{~or~} y \notin \dom(m) \text{~or~} d_{m(x)} =
  d_{m(y)}\}$. As in~\cite{schmid2020purely}, we extend the notation for string equality selection as follows: for every set $X$ of variables, we let $(\zeta^=_Y P)(d) = \{m \in P(d) \mid
  \forall x,y\in Y\colon x \notin \dom(m) \text{~or~} y \notin \dom(m) \text{~or~} d_{m(x)} =
  d_{m(y)}\}$. Then, for a subset $\mathcal E = \{Y_1, \ldots, Y_r\}\subseteq \mathcal P(\SVars(P))$ we define
  $\zeta^=_{\mathcal E}(P) \colonequals \zeta^=_{Y_1}(\zeta^=_{Y_2}(\ldots \zeta^=_{Y_r}(P)\ldots ))$ (note that the order in which we apply the string equality selections does not change the outcome so this is well-defined).
  
  The way we define string string equality selections follows the usual semantics that unassigned variables correspond to
  missing information, potentially compatible with any value; see
  also~\cite{schmid2020purely} which also uses this definition for string
  equality in the schemaless case. Then \emph{core spanners} are the spanners
  that can be defined using the regular spanners as the base case and applying operators for projection, union, join and string equality selection iteratively.

The core spanners as we defined them above are schemaless, since the mappings they extract may assign to different sets of variables. In contrast, the original definition of core spanners in~\cite{faginformal} was schema-based. To make this difference explicit, let us define a notion of \emph{schema-based core spanners}:
  \begin{definition}
    \label{def:sbcore}
    A \emph{schema-based core spanner} is a core spanner in which:
\begin{itemize}
	\item the regular spanners $P$ in the definition of core spanners are all schema-based, i.e., all their output mappings always have as domain the domain $\SVars(P)$ of the spanner,
	\item for every application $\zeta^=_{\mathcal E}(P)$, we have for all $Y\in \mathcal E$ that $Y\subseteq \SVars(P)$, and
	\item whenever a union $P_1\cup P_2$ is made, we have $\SVars(P_1)\cup \SVars(P_2)$.
\end{itemize}
  \end{definition}
  An easy induction shows that schema-based core spanners in the sense of this definition indeed give core spanners that are schema-based, i.e., ensure that all captured mappings assign all the variables of the domain of the spanner. We will show the converse result later (Lemma~\ref{lem:makecoreschemabased}).  We also remark that our definition for schema-based core spanners coincides with the traditional definition of core spanners, i.e., the one in~\cite{faginformal}.

  We will show the following result:
\end{toappendix}

\begin{theoremrep}
  \label{thm:corespanclosure}
    The core spanners 
    are not closed under the skyline operator 
    with respect to the span inclusion domination relation $\domrela{spanInc}$, even
    on schema-based spanners:
    there is a schema-based core spanner $P$ such that $\domoper{\mathit{spanInc}} P$ cannot be
    expressed as a core spanner.  The same is true of the 
    skyline~$\domoper{\mathit{varInc}}$ with the variable inclusion
    domination rule. 
  \end{theoremrep}

  This result was already shown in~\cite{DBLP:journals/tods/FaginKRV16} for the
  span inclusion domination relation, but that result only showed inexpressibility as a
  schema-based core spanner. Our result extends to the schemaless setting, and
  also establishes the result for the variable inclusion domination rule.
See Appendix~\ref{app:non-closure} for the formal definitions and the proof.

We leave open the question of extending other formalisms with the skyline
operator, e.g., the 
\e{generalized core spanners} which extend core spanners 
with the difference operator~\cite{PeterfreundCFK19}, or 
the \e{context-free spanners}~\cite{DBLP:conf/icdt/Peterfreund21}
that define spanners via context-free grammars.
Note that, by contrast, closure is easily seen to hold in the formalism of \e{RGXlog}
programs, where spanners are defined using Datalog
rules~\cite{PeterfreundCFK19}. Indeed, this class consists of precisely the
polynomial-time spanners (under data complexity). Thus, for any domination
rule~$D$
for which the maximal answers can be computed in polynomial time data complexity
(in particular, for domination rules expressed as regular spanners), the result
of the skyline operator for~$D$ on an RGXlog program can be expressed as an RGXlog
program.

In the rest of this paper, we focus on applying the skyline operators
to regular spanners, with domination relations also defined via regular
domination rules.

\begin{toappendix}
  We prove Theorem~\ref{thm:corespanclosure} in the rest of this appendix. We start with some additional preparations.
 First, we will show a result that is essentially a variant of Lemma~\ref{lem:capturesubsetfunctional} for core spanners. Remember that for a spanner $P$ and a variable set $X$, the spanner $P^{[X]}$ on every document $d$ captures exactly the mappings $\{m\in P(d)\mid \dom(m)= X\}$.
 
 \begin{lemma}\label{lem:capturesubsetcore}
 	Let $P$ be a core spanner and $X$ a variable set. Then $P^{[X]}$ can be expressed as a schema-based core spanner.
 \end{lemma}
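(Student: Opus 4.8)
The plan is to prove the statement by structural induction on a fixed expression witnessing that $P$ is a core spanner, i.e., on how $P$ is built from regular spanners using projection~$\pi_V$, union, natural join~$\Join$, and string-equality selection~$\zeta^=_{\mathcal{E}}$. Since a core spanner uses only finitely many variables, $\SVars$ of every sub-expression is finite, so all unions below will be finite. I will maintain the stronger invariant that if $P^{[X]}$ captures some mapping on some document, then it is expressible as a schema-based core spanner (in the sense of Definition~\ref{def:sbcore}) whose variable set is exactly~$X$; the case where $P^{[X]}$ captures nothing anywhere is trivial, as it then equals the empty spanner, expressible by the regex formula~$\emptyset$.

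The engine of the induction consists of four routine identities, each immediate from the definitions, describing how $P \mapsto P^{[X]}$ commutes with the core operators: first, $(P_1 \cup P_2)^{[X]} = P_1^{[X]} \cup P_2^{[X]}$; second, $(P_1 \Join P_2)^{[X]} = \bigcup_{X_1 \cup X_2 = X} (P_1^{[X_1]} \Join P_2^{[X_2]})$, the union ranging over the finitely many pairs of subsets of $X$ with union $X$, using that a join of a domain-$X_1$ spanner with a domain-$X_2$ spanner only produces mappings of domain $X_1 \cup X_2$; third, $(\pi_V Q)^{[X]} = \bigcup_Z \pi_V(Q^{[Z]})$, the union ranging over subsets $Z$ of $\SVars(Q)$ with $Z \cap V = X$; and fourth, $(\zeta^=_{\mathcal{E}} Q)^{[X]} = \zeta^=_{\mathcal{E}'}(Q^{[X]})$ where $\mathcal{E}' = \{Y \cap X \mid Y \in \mathcal{E}\}$, the point being that on a mapping of domain $X$ an equality constraint involving a variable outside $X$ is vacuous, so the equality classes may be trimmed to $X$ without changing the result.

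For the base case, $P$ is a regular spanner, and Lemma~\ref{lem:capturesubsetfunctional} directly produces a functional VA for $P^{[X]}$, that is, a schema-based regular spanner, hence a schema-based core spanner whose variable set is $X$ (when it is nonempty). For each inductive case, I apply the corresponding identity, invoke the induction hypothesis on the finitely many sub-expressions on the right-hand side — discarding the empty ones, after which every remaining term has variable set exactly $X_1$, $X_2$, $Z$, or $X$ as appropriate — and then check that the operator applied at the top preserves schema-basedness in the sense of Definition~\ref{def:sbcore}: projecting a schema-based core spanner onto $V$ again yields a schema-based one (common domain $\SVars \cap V$ for all its mappings), joining two schema-based core spanners yields a schema-based one whose variable set is the union of the two, the trimmed selection $\zeta^=_{\mathcal{E}'}$ meets the requirement that each of its equality classes $Y \cap X$ is contained in $X = \SVars(Q^{[X]})$, and every union I form is between spanners that all have the same variable set $X$, as the definition of schema-based core spanner demands. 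Hence $P^{[X]}$ is a finite union of schema-based core spanners of common variable set $X$, so it is itself a schema-based core spanner of variable set $X$.

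I expect the only subtle part to be this last bit of bookkeeping: an intermediate spanner $Q^{[Z]}$ may happen to capture nothing on any document, in which case its set of occurring variables is $\emptyset$ rather than $Z$, which would break the equal-variable-set precondition for unions in Definition~\ref{def:sbcore}; this is exactly why the invariant is phrased conditionally and why, when assembling each union, the empty terms are removed first (they contribute nothing as spanners). An alternative that bypasses the induction would be to first rewrite $P$ into the core-simplification normal form $\pi_V \zeta^=_{\mathcal{E}} R$ with $R$ a regular spanner (as in~\cite{faginformal}, and~\cite{schmid2020purely} for the schemaless case) and then run the commutation argument only once on this single expression, applying Lemma~\ref{lem:capturesubsetfunctional} to the inner $R$.
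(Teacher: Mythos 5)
Your primary argument is correct but takes a genuinely different route from the paper. The paper does not induct over the expression: it first invokes the Core Simplification Lemma (Lemma~\ref{lem:coresimplification}, imported from~\cite{schmid2020purely}) to put $P$ into the normal form $\pi_Y(\zeta^=_{\mathcal E}(P_R))$ with $P_R$ regular, then uses Lemma~\ref{lem:decomposeregular} to decompose $P_R$ as $\bigcup_{X'\subseteq\SVars(P_R)} P_R^{[X']}$ into functional pieces, commutes $\pi_Y$ and $\zeta^=_{\mathcal E}$ with this union, keeps exactly the terms with $X'\cap Y=X$, and trims the equality classes to $\mathcal E_{X'}=\{Z\cap X'\mid Z\in\mathcal E\}$; this is precisely the ``alternative that bypasses the induction'' you sketch in your final sentence, so your commutation identities for projection and selection coincide with the paper's two steps. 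What your structural induction buys is self-containedness: you never need the Core Simplification Lemma, at the cost of two additional cases (union, and join with its finite union over pairs $X_1\cup X_2=X$) and of threading a nonemptiness invariant through the construction. All four of your commutation identities are correct, and your explicit treatment of the empty-term corner case (needed so that the equal-variable-set requirement on unions in Definition~\ref{def:sbcore} is met) is if anything more careful than the paper's, which keeps every term with $X'\cap Y=X$ in its final union without comment; just note that the same discarding must also be applied to join terms $P_1^{[X_1]}\Join P_2^{[X_2]}$ that happen to be semantically empty even though both factors are not. Since the lemma asserts only expressibility, it is immaterial that deciding which terms are empty may not be effective.
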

 
 For the proof we will use the following \emph{Core Simplification Lemma} from~\cite{schmid2020purely} which is a variant of an analogous result for schema-based core spanners in~\cite{faginformal}.
 
  \begin{lemma}[\cite{schmid2020purely}, Lemma~2.2]\label{lem:coresimplification}
 	For every core spanner $P$, there is a regular spanner $P_R$, $\mathcal E\subseteq \mathcal P(\SVars(P_R))$ and $Y\subseteq \SVars(P_R)$ such that $P = \pi_Y(\zeta^=_{\mathcal E}(P_R))$.
 \end{lemma}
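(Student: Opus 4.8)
The plan is to prove this by structural induction on the expression building the core spanner~$P$, showing that the class $\mathcal N$ of spanners of the stated normal form $\pi_Y(\zeta^=_{\mathcal E}(P_R))$ (with $P_R$ regular, $\mathcal E \subseteq \mathcal P(\SVars(P_R))$, and $Y \subseteq \SVars(P_R)$) is closed under all four operators from which core spanners are built: taking regular spanners, projection, string equality selection, natural join, and union. The base case is immediate, since a regular spanner $P$ equals $\pi_{\SVars(P)}(\zeta^=_{\emptyset}(P))$. Since core spanners are exactly the spanners obtainable from regular spanners by these operators, establishing closure of $\mathcal N$ under each operator suffices.

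The two easy inductive cases rely on simple commutations. For projection, projections compose: $\pi_{Y'}(\pi_Y(\zeta^=_{\mathcal E}(P_R))) = \pi_{Y' \cap Y}(\zeta^=_{\mathcal E}(P_R))$. For a string equality selection $\zeta^=_Z$ applied to a normal-form spanner with output variables~$Y$, we have $Z \subseteq Y \subseteq \SVars(P_R)$, and since the tested variables are retained by the projection the selection commutes past it, giving $\zeta^=_Z(\pi_Y(\zeta^=_{\mathcal E}(P_R))) = \pi_Y(\zeta^=_{\mathcal E \cup \{Z\}}(P_R))$, which is again in~$\mathcal N$. The join case is handled by first renaming, in the two normal forms $\pi_{Y_i}(\zeta^=_{\mathcal E_i}(R_i))$, the \emph{internal} variables $I_i \colonequals \SVars(R_i) \setminus Y_i$ to fresh names so that $I_1, I_2$ are pairwise disjoint and disjoint from $Y_1 \cup Y_2$. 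After this renaming the only shared variables lie in $Y_1 \cap Y_2$, on which join enforces span-equality (hence word-equality), so both the projection and the selection families distribute over the join: $P_1 \Join P_2 = \pi_{Y_1 \cup Y_2}(\zeta^=_{\mathcal E_1 \cup \mathcal E_2}(R_1 \Join R_2))$. As regular spanners are closed under join, $R_1 \Join R_2$ is regular and the result is in~$\mathcal N$.

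The main obstacle is the \textbf{union} case, because a naive combination fails: if we simply take $R_1 \cup R_2$ (with internal variables renamed disjoint) and apply $\zeta^=_{\mathcal E_1 \cup \mathcal E_2}$, then a selection in $\mathcal E_2$ whose variables all lie in the shared external set $Y_1 \cap Y_2$ would wrongly filter mappings coming from the first branch. My plan to resolve this is a \emph{shadowing} trick exploiting two features of the schemaless setting: a regular spanner (VA) can assign two variables to the \emph{same} span by emitting both markers at identical positions, and, under the schemaless string-equality semantics, any test involving an undefined variable is vacuously satisfied. Concretely, for each branch~$i$ I build a modified regular spanner $R_i'$ that still assigns the \emph{true} external variables $Y_i$ with their original names, but additionally assigns fresh \emph{shadow} copies $\widetilde{Y_i}$ (to the same spans as $Y_i$) and fresh renamed internal variables, with all shadow and internal variables of the two branches pairwise disjoint; I then rewrite each selection family $\mathcal E_i$ to refer only to these fresh shadow/internal variables, obtaining $\widetilde{\mathcal E_i}$. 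Taking $R \colonequals R_1' \cup R_2'$ (regular, by closure under union) and applying $\zeta^=_{\widetilde{\mathcal E_1} \cup \widetilde{\mathcal E_2}}$, every test of $\widetilde{\mathcal E_2}$ involves only variables undefined on branch-$1$ mappings (and symmetrically), so each selection acts only on its own branch, while $\widetilde{\mathcal E_1}$ reproduces the effect of $\mathcal E_1$ on branch~$1$ because the shadows carry the same words as the originals. Projecting onto the true external set $Y \colonequals Y_1 \cup Y_2$ discards all shadows and internal variables and recovers $P_1 \cup P_2$ with the correct per-branch domains, yielding a spanner in~$\mathcal N$. I expect verifying the correctness of this shadowing construction—in particular that the projection recovers exactly the union and that no cross-branch selection survives—to be the most delicate step.
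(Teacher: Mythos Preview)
The paper does not prove this lemma at all; it merely cites it from~\cite{schmid2020purely}. Your structural-induction plan is the standard one, and your base case, projection, selection, and union cases are fine (the shadowing trick for union is exactly the right idea). However, your \emph{join} case has a real gap, for precisely the reason you yourself identified as the obstacle in the union case.

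You claim $P_1 \Join P_2 = \pi_{Y_1\cup Y_2}\bigl(\zeta^=_{\mathcal E_1\cup\mathcal E_2}(R_1\Join R_2)\bigr)$ after renaming internal variables, because ``join enforces span-equality on $Y_1\cap Y_2$''. But in the schemaless setting a mapping $m_1\in R_1(d)$ may leave some $x\in Y_1\cap Y_2$ \emph{undefined}, so a selection $Z\in\mathcal E_1$ with $x\in Z$ is vacuously satisfied by~$m_1$; after joining with an $m_2\in R_2(d)$ that \emph{does} assign~$x$, the merged mapping now assigns~$x$ and the selection~$Z$ can fail. Concretely, on $d=ab$ take $R_1$ capturing $\{z\mapsto\mspan{0}{1}\}$ and $\{x\mapsto\mspan{0}{1},\,z\mapsto\mspan{0}{1}\}$, with $\mathcal E_1=\{\{x,z\}\}$, $Y_1=\{x\}$; take $R_2$ capturing $\{x\mapsto\mspan{1}{2}\}$, $\mathcal E_2=\emptyset$, $Y_2=\{x\}$. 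Then $P_1(d)\ni\{\}$, so $(P_1\Join P_2)(d)=\{\{x\mapsto\mspan{1}{2}\}\}$; but $(R_1\Join R_2)(d)$ contains only $\{x\mapsto\mspan{1}{2},\,z\mapsto\mspan{0}{1}\}$, which $\zeta^=_{\{x,z\}}$ rejects (words ``$b$'' vs.\ ``$a$''), so your formula yields~$\emptyset$.

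The fix is to apply your shadowing trick to the join case as well: give each $R_i$ fresh shadow copies of its $Y_i$-variables and rewrite $\mathcal E_i$ to mention only shadow and internal variables, all private to branch~$i$. Then no variable occurring in any selection can be filled in by the other side of the join, and the distribution goes through. Once shadowing is used uniformly for both join and union, your induction is correct.
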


  \begin{proof}[Proof of Lemma~\ref{lem:capturesubsetcore}]
    With the Core Simplification Lemma (Lemma~\ref{lem:coresimplification}), we may assume that $P = \pi_Y(\zeta^=_{\mathcal E}(P_R))$, where $P_R$ is regular. By Lemma~\ref{lem:decomposeregular}, we can then write $P_R = \bigcup_{X'\subseteq \SVars(P_R)} P_R^{[X']}$ where all $P_r^{[X']}$ may be assumed to be defined by functional VAs. By definition, string equality selection commutes with unions and projection commutes with union as well, so 
  		\begin{align*}
  		P &= \pi_Y\left(\zeta^=_{\mathcal E}\left(\bigcup_{X'\subseteq \SVars(P_R)} P_R^{[X']}\right)\right)\\ 
  		&= \pi_Y\left(\bigcup_{X'\subseteq \SVars(P_R)} \zeta^=_{\mathcal E}(P_R^{[X']})\right)\\ 
  		&= \bigcup_{X'\subseteq \SVars(P_R)} \pi_Y(\zeta^=_{\mathcal E}(P_R^{[X']})).
  		\end{align*}
  	For every $X'$, the spanner $\pi_Y(\zeta^=_{\mathcal E}(P_R^{[X']}))$ is schema-based, because $P_R^{[X']}$ is schema-based and applying string equality selection and projection does not change this. Moreover, $\SVars(\pi_Y(\zeta^=_{\mathcal E}(P_R^{[X']})))= Y\cap X'$. Thus, $\pi_Y(\zeta^=_{\mathcal E}(P_R^{[X']}))$ contributes tuples to $P^{[X]}$ if and only if $X'\cap Y= X$. It follows that
  	\begin{align*}
  		P^{[X]} = \bigcup_{X'\subseteq \SVars(P_R)\colon X'\cap Y = X} \pi_Y(\zeta^=_{\mathcal E}(P_R^{[X']}))
  	\end{align*}
  Now define for every $X'$ the set $\mathcal E_{X'}\colonequals \{Z\cap X'\mid Z\in \mathcal E\}$. Then, since undefined values are ignored by string equality selection and the mappings captured by $P^{[X']}$ assign exactly $X'$, we have $\zeta^=_{\mathcal E}(P_R^{[X']})= \zeta^=_{\mathcal E_{X'}}(P_R^{[X']})$, so 
  	\begin{align*}
	P^{[X]} = \bigcup_{X'\subseteq \SVars(P_R)\colon X'\cap Y = X} \pi_Y(\zeta^=_{\mathcal E_{X'}}(P_R^{[X']})).
\end{align*}
We claim that this is a representation of $P^{[X]}$ as a schema-based core spanner. Indeed, 
\begin{itemize}
	\item the $P^{[X']}$ are all schema-based by Lemma~\ref{lem:decomposeregular},
	\item for every application $\zeta^=_{\mathcal E_{X'}}(P^{[X']})$, we have for all $Z\in \mathcal E_{X'}$ that $Z\subseteq X'= \SVars(P^{[X']})$ by construction of $E_{X'}$, and
	\item in the union, all terms are over spanners with domain $X$.
\end{itemize}
Thus we have represented $P^{[X]}$ as a regular core spanner, as required. 
  \end{proof}

We will mostly be interested in the following consequence of Lemma~\ref{lem:capturesubsetcore}.
\begin{lemma}\label{lem:makecoreschemabased}
  Let $P$ be a core spanner that is schema-based. Then $P$ can be represented as a schema-based core spanner in the sense of Definition~\ref{def:sbcore}.
\end{lemma}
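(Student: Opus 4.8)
The plan is to derive this immediately from Lemma~\ref{lem:capturesubsetcore}, the only work being to pick the right variable set and to check that no mappings are lost.

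First I would set $X \colonequals \SVars(P)$. Since $P$ is assumed to be schema-based, by definition every mapping $m \in P(d)$, on every document $d$, satisfies $\dom(m) = X$. Recall that $P^{[X]}$ is the spanner that on every document $d$ extracts exactly $\{m \in P(d) \mid \dom(m) = X\}$; the previous observation thus says that $P^{[X]}(d) = P(d)$ for every document $d$, i.e., $P$ and $P^{[X]}$ are literally the same spanner. (The degenerate case in which $P$ captures no mapping at all on any document is not special: then $X = \emptyset$ and $P = P^{[\emptyset]}$ still holds vacuously.)

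Then I would invoke Lemma~\ref{lem:capturesubsetcore} with this choice of $X$: it produces a representation of $P^{[X]}$, and hence of $P$, as a schema-based core spanner in the sense of Definition~\ref{def:sbcore}, which is exactly what we want.

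I expect no real obstacle here. All the technical content — decomposing the underlying regular spanner into functional VAs (Lemma~\ref{lem:decomposeregular}), pushing the string-equality selections and the projection through the union, and restricting each equality set in $\mathcal{E}$ to the variables actually assigned — is already done inside the proof of Lemma~\ref{lem:capturesubsetcore}. The only point to keep in mind is that ``schema-based'' as applied to the core spanner $P$ in the hypothesis is a \emph{semantic} property (about the domains of the captured mappings), whereas Definition~\ref{def:sbcore} is a \emph{syntactic} restriction on how the spanner expression is built; the lemma bridges precisely this gap, and the identity $P = P^{[\SVars(P)]}$ is what licenses its application.
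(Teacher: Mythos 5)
Your proposal is correct and matches the paper's intent exactly: the paper presents Lemma~\ref{lem:makecoreschemabased} as an immediate consequence of Lemma~\ref{lem:capturesubsetcore}, and the only content is precisely the observation you make, namely that $P = P^{[\SVars(P)]}$ when $P$ is schema-based. Nothing is missing.
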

  What Lemma~\ref{lem:makecoreschemabased} intuitively says is that, if a core spanner is schema-based in the sense that all its captured mappings assign to exactly the same variable set, then allowing schemaless subterms in the spanner does not increase the expressivity of the model. In other words, if a schema-based spanner cannot be expressed as a core spanner in the traditional schema-based sense of~\cite{faginformal}, then it also cannot be expressed as a core spanner in the schemaless setting (i.e., with schemaless subterms, even though the overall spanner is schema-based).

  We are now ready to give the proof of
Theorem~\ref{thm:corespanclosure} for the variable inclusion domination relation:

\begin{proposition}
  \label{prp:core1}
    The core spanners
    are not closed under the skyline operator 
    with respect to variable inclusion domination relation $\domrela{varInc}$.
\end{proposition}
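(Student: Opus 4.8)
The plan is to exhibit a (schemaless) core spanner $P$ whose variable-inclusion skyline encodes the \emph{string inequality} relation, and then to invoke the fact that string inequality is not a core spanner. Work over $\Sigma=\{a,b\}$ with three variables $x,y,z$. Let $R_1 \df (\alphabetstar x\{\alphabetstar\}\alphabetstar)\times(\alphabetstar y\{\alphabetstar\}\alphabetstar)$ be the regular spanner that on every document $d$ captures all mappings $\{x\mapsto s,\ y\mapsto t\}$ with $s,t$ ranging over all spans of $d$, and let $R_2$ be the analogous threefold Cartesian product for $x,y,z$, capturing all mappings $\{x\mapsto s,\ y\mapsto t,\ z\mapsto u\}$. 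Set $P \df R_1 \cup \zeta^=_{x,y}(R_2)$. Then $P$ is a core spanner (built from regular spanners by string-equality selection and union), it is schemaless since its mappings have domain $\{x,y\}$ or $\{x,y,z\}$, and it is not regular. Note that using string equality in $P$ is essential: if $P$ were regular, Corollary~\ref{cor:closerules} would make $\domoper{\mathit{varInc}}P$ regular, hence a core spanner.

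Next I would compute $\domoper{\mathit{varInc}}P$. Every mapping with domain $\{x,y,z\}$ survives the skyline, since $\{x,y,z\}$ is the maximal domain occurring in $P$ and two distinct mappings with the same domain are $\domrela{varInc}$-incomparable. A mapping $m=\{x\mapsto s,\ y\mapsto t\}$ survives iff it is not dominated, i.e., iff $P(d)$ contains no $m'$ with $m\mcomp m'$ and $\dom(m)\subsetneq\dom(m')$; the only candidates are mappings $\{x\mapsto s,\ y\mapsto t,\ z\mapsto u\}$ lying in $P^{[\{x,y,z\}]}(d)=\zeta^=_{x,y}(R_2)(d)$, and such a mapping exists for some span $u$ (which always exists, e.g.\ $\mspan{0}{0}$) exactly when $d_s=d_t$. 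Hence $m$ survives iff $d_s\neq d_t$, so $(\domoper{\mathit{varInc}}P)^{[\{x,y\}]}$ is precisely the string-inequality spanner on $\{x,y\}$: on every document $d$ it captures $\{x\mapsto s,\ y\mapsto t\}$ for all span pairs with $d_s\neq d_t$.

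Now assume for contradiction that $\domoper{\mathit{varInc}}P$ is a core spanner. By Lemma~\ref{lem:capturesubsetcore}, $(\domoper{\mathit{varInc}}P)^{[\{x,y\}]}$ is then a core spanner as well, and since it is schema-based, Lemma~\ref{lem:makecoreschemabased} turns it into a core spanner in the traditional schema-based sense of Definition~\ref{def:sbcore}. Thus the string-inequality spanner over $\{a,b\}$ would be a (schema-based) core spanner, contradicting the known fact that core spanners cannot express string inequality (core spanners are not closed under complement; see, e.g.,~\cite{faginformal}). If one wants this last fact re-derived within the paper, it follows from the Core Simplification Lemma (Lemma~\ref{lem:coresimplification}): a core spanner for string inequality would have the form $\pi_Y\zeta^=_{\mathcal E}(R)$ with $R$ regular, and a pumping argument on documents of the shape $a^n b a^m$ shows that the conjunction of word equalities selected by $\zeta^=_{\mathcal E}$ cannot robustly certify $d_x\neq d_y$ as the underlying VA varies the two $a$-block lengths, yielding a contradiction.

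The main obstacle is exactly this last point: proving (or properly citing) that string inequality is not a core spanner. Everything else is routine bookkeeping with the closure properties of core spanners and the restriction lemmas of the appendix; the genuine content is the complementation limitation of core spanners, which is the standard hard ingredient in non-expressibility results of this kind.
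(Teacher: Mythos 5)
Your proof is correct, but it takes a genuinely different route from the paper's. The paper also argues by reducing a known core-inexpressible spanner to the skyline, but its target is the $0$-chunk Boolean spanner $P_{0\chunk}$ of~\cite{faginformal} (``the document ends with a $0$-chunk strictly longer than all others''), and its construction is correspondingly more elaborate: it combines a substring selection $\zeta^{\sqsubseteq}$, a flag variable $z$ assigned differently by $r_1$ and $r_2$ to detect whether the skyline retains a mapping not assigning $z$, and a final join and empty projection to produce a Boolean answer. Your target is the string-inequality spanner, and your construction is cleaner and more conceptual: the variable-inclusion skyline of $R_1 \cup \zeta^=_{x,y}(R_2)$ performs exactly the test ``there is no compatible extension with $d_{m(x)}=d_{m(y)}$,'' i.e., it complements the string-equality selection, which is precisely the kind of negation core spanners lack. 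Your bookkeeping is right: the surviving $\{x,y\}$-mappings are exactly those with $d_{m(x)}\neq d_{m(y)}$ (a witness span for $z$ such as $\mspan{0}{0}$ always exists), all $\{x,y,z\}$-mappings trivially survive, and Lemma~\ref{lem:capturesubsetcore} legitimately extracts $(\domoper{\mathit{varInc}}P)^{[\{x,y\}]}$ as a schema-based core spanner from the hypothetical core representation (invoking Lemma~\ref{lem:makecoreschemabased} on top of that is redundant). The one point requiring care is the final ingredient: the inexpressibility of the string-inequality selection by core spanners is indeed an established result of Fagin et al.\ (it is among their flagship inexpressibility results, proved via the core simplification lemma), so you should cite it precisely rather than derive it from the vaguer statement that core spanners are not closed under complement; your proposed re-derivation (``a pumping argument on $a^n b a^m$'') is not a proof and should not be relied upon. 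With a precise citation in place, your argument is complete; the paper's choice of $P_{0\chunk}$ buys uniformity with the span-inclusion case (Proposition~\ref{prp:core2}), where the same target spanner is reused.
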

\begin{proof}
	We fix the alphabet $\alphabet = \{0,1\}$. We define a $0$-chunk in a word in $\alphabetstar$ to be a maximal subword consisting of only $0$, where maximality is with respect to subword inclusion. The spanner $P_{0\chunk}$ is the Boolean spanner, that is $\SVars(P_{0\chunk})=\emptyset$ that extracts the empty mapping on an input $d$ if and only if $d$ ends with a $0$-chunk that is strictly longer than all other $0$-chunks in $d$. It is known that $P_{0\chunk}$ cannot be expressed as a schema-based core spanner~\cite{faginformal} and thus, by Lemma~\ref{lem:makecoreschemabased}, it can also not be expressed as a schemaless core spanner. We will show that if core spanners were closed under skylines with respect to variable inclusion, then we could write $P_{0\chunk}$ as a core spanner, and the proposition follows directly.
	
	So let $r_1 = z\{\}\alphabetstar$, $r_2 = z\{\Sigma\}\cdot \alphabetstar$ and $r_{\mathrm{end}}= \alphabetstar\cdot x\{0^*\}$. Finally, let $\zeta^\sqsubseteq_{\{x,y\}}$ be the spanner operator such that for every spanner $P$ we have that $\zeta^\sqsubseteq_{\{x,y\}}(P)$ is a spanner capturing on every document $d$ the mappings $\{m\in P(d)\mid d_{m(x)} \text{ is a subword of } d_{m(y)}\}$. It is known that schema-based core spanners are closed under $\zeta^\sqsubseteq_{\{x,y\}}$, so in particular  $r_3 = \zeta^\sqsubseteq_{\{x,y\}}(\alphabetstar y\{\alphabetstar\}\cdot \alphabetstar \cdot x\{0^+\})$ is a core spanner.
        (Here $0^+$ abbreviates $0 0^*$.) Then we claim that 
	\begin{align}\label{eq:0chunk}
		P_{0\chunk} = \pi_\emptyset ( r_2 \Join \domoper{varInc}((\pi_x(r_3) \Join r_1))\cup r_{\mathrm{end}}).
	\end{align}
	If $d$ is the empty document, then $r_2(d)= \emptyset$, so the right-hand side does not capture any mappings. This is also true for $P_{0\chunk}$, because there are no $0$-chunks in $d$ and thus, by definition, $P_{0\chunk}$ does not capture anything. So for the empty document, (\ref{eq:0chunk}) is true.
	
        Now let $d$ be non-empty, then $r_1$ maps $z$ to $\mspan{0}{0}$ while $r_2$ maps it to $\mspan{0}{1}$. Moreover, $r_3$ maps $x,y$ to spans $m(x), m(y)$ such that $m(x)$ is a span at the end of $d$ consisting only $0$, $m(y)$ is such that $d_{m(y)}$ contains a $0$-chunk that has length at least $|m(x)|$. So $\pi_x(r_3)$ contains all suffixes of $d$ consisting only of $d$ such that there is a subword elsewhere in $d$ consisting also only of $0$ that has at least the same length. In particular, if $d$ has a $0$-chunk $c$ at the end that is longer than all other $0$-chunks in $d$, then the span of $c$ is in $r_{\mathrm{end}}(d)$ but not in $\pi_x(r_3(d))$. On the contrary, if there is no such chunk $c$, then $r_{\mathrm{end}}(d)\subseteq \pi_x(r_3)(d)$. The set $(\pi_x(r_3) \Join r_1))\cup r_{\mathrm{end}}$ contains all of $r_1(d)$ and the mappings in $(\pi_x(r_3))(d)$ to which additionally map $z$ to $\mspan{0}{0}$. When applying the skyline operator on this, the mappings in $\pi_x(r_3)(d)\cap r_{\mathrm{end}}(d)$ get eliminated. So, by what we said before, the skyline contains mappings not assigning $z$ if and only if $d$ ends with a $0$-chunk that is strictly longer than all other $0$-chunks in the document. Now, since the assignments to $x$ in $r_2$ and $r_1$ are incompatible, the join with $r_2$ leaves the resulting set of mappings non-empty if and only if there is a mapping in the skyline that does not assign $z$. So the right-hand-side of (\ref{eq:0chunk}) captures the empty mapping if and only if $d$ ends with a $0$-chunk strictly longer than any other $0$-chunk. This proves (\ref{eq:0chunk}).
	
	The spanners $r_1$, $r_2$, $r_3$ and $r_{\mathrm{end}}$ are all core spanners. Moreover, core spanners are closed under projection, join and union. So if they were closed under skylines with respect to variable inclusion, then the spanner $P_{0\chunk}$ would be a core spanner. This proves the proposition.
\end{proof}

Second, we prove the result for the span inclusion domination relation:

\begin{proposition}
  \label{prp:core2}
    The core spanners 
    are not closed under the skyline operator 
    with respect to the span inclusion domination relation $\domrela{spanInc}$.
\end{proposition}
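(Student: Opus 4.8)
The plan is to adapt the argument of Proposition~\ref{prp:core1}. Recall from its proof the Boolean spanner $P_{0\chunk}$ over $\alphabet = \{0,1\}$, which extracts the empty mapping on a document $d$ exactly when $d$ ends with a $0$-chunk that is strictly longer than all other $0$-chunks of $d$; it is known that $P_{0\chunk}$ is not a schema-based core spanner~\cite{faginformal}, hence, being schema-based, not a core spanner at all by Lemma~\ref{lem:makecoreschemabased}. It therefore suffices to build a (schema-based) core spanner $P$ over $\{0,1\}$ from which, using only projection, join and regular spanners, one can recover $P_{0\chunk}$: closure of core spanners under $\domoper{\mathit{spanInc}}$ would then make $P_{0\chunk}$ a core spanner, a contradiction.

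The spanner $P$ uses a single variable $x$ and, on a document $d$ of length $n$, extracts at most two spans. First, when $d$ ends with $0$, it extracts the span $s_{\mathrm{fin}}$ of the last $0$-chunk of $d$, using the regex-formula $r_{\mathrm{fin}} \colonequals \alphabetstar\, 1\, x\{0^+\} \lor x\{0^+\}$; the read $1$ is necessarily the last $1$ of $d$ (everything after it must be $0$), so $x$ is placed on the whole trailing block of $0$s. Second, whenever $d$ ends with $1\,0^k$ with $k \geq 1$ and some $0$-chunk of $d$ situated before that trailing block has length at least $k$, $P$ also extracts the span $s_{\mathrm{kill}} \colonequals \mspan{n-k-1}{n}$, that is, the suffix of $d$ of length $k+1$ (whose content is $1\,0^k$). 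The crucial point is that $s_{\mathrm{fin}} = \mspan{n-k}{n}$ is strictly included in $s_{\mathrm{kill}}$, so $s_{\mathrm{kill}}$ dominates $s_{\mathrm{fin}}$ for $\domrela{spanInc}$. To obtain $s_{\mathrm{kill}}$ with core-spanner operations we use $r_{\mathrm{kill}} \colonequals \alphabetstar\, v'\{0^+\}\, \alphabetstar\, x\{1\, v\{0^+\}\}$, which places $x$ on a suffix of the form $1\,0^j$ (hence on $s_{\mathrm{kill}}$, with $v$ on the trailing $0$-chunk) and $v'$ on some nonempty block of $0$s lying before the last $1$; then we keep only the mappings $m$ with $d_{m(v')} = d_{m(v)}$ by a string-equality selection $\zeta^=_{\{v',v\}}$, and project onto $x$. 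Such a $v'$ exists precisely when the prefix of $d$ ending right before its last $1$ contains a block of $0$s of length at least $k$, i.e., when some $0$-chunk of $d$ other than the trailing one has length at least $k$. We thus set $P \colonequals r_{\mathrm{fin}} \cup \pi_x\bigl(\zeta^=_{\{v',v\}}(r_{\mathrm{kill}})\bigr)$, which is a core spanner.

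It remains to check that $s_{\mathrm{fin}} \in (\domoper{\mathit{spanInc}} P)(d)$ if and only if $d$ is accepted by $P_{0\chunk}$. Since $P(d) \subseteq \{s_{\mathrm{fin}}, s_{\mathrm{kill}}\}$ and $s_{\mathrm{fin}}$ is strictly included in $s_{\mathrm{kill}}$ whenever the latter is extracted, $s_{\mathrm{fin}}$ is removed by the skyline exactly when $s_{\mathrm{kill}}$ is extracted, i.e., by construction, exactly when some $0$-chunk of $d$ other than the trailing one has length at least $k$, i.e., when the trailing $0$-chunk is not strictly the longest. A short case distinction covers the boundary cases where $d$ does not end with $0$ (then $P(d) = \emptyset$) and where $d$ has a single $0$-chunk (then the selection in $r_{\mathrm{kill}}$ is unsatisfiable and $d$ is vacuously accepted by $P_{0\chunk}$). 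Since moreover $r_{\mathrm{fin}}$ captures exactly $\{x \mapsto s_{\mathrm{fin}}\}$ when $d$ ends with $0$ and nothing otherwise, we conclude that $P_{0\chunk} = \pi_\emptyset\bigl( r_{\mathrm{fin}} \Join \domoper{\mathit{spanInc}} P \bigr)$, the join being nonempty exactly when $s_{\mathrm{fin}}$ survives the skyline. As $r_{\mathrm{fin}}$ is a regular spanner and core spanners are closed under join and projection, closure of core spanners under $\domoper{\mathit{spanInc}}$ would make $P_{0\chunk}$ a core spanner, contradicting~\cite{faginformal} together with Lemma~\ref{lem:makecoreschemabased}.

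The step I expect to be the main obstacle is the design of $P$ itself, namely forcing $P$ to emit the ``killer'' span $s_{\mathrm{kill}}$, using only core-spanner operators, exactly when some $0$-chunk before the trailing block is at least as long as that block; the rest is a routine but slightly delicate case analysis on the shape of $d$ that pins down which spans $P$ extracts and how they compare under span inclusion.
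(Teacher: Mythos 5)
Your proof is correct and follows essentially the same route as the paper's: both arguments reduce to the known inexpressibility of the ``longest trailing $0$-chunk'' Boolean spanner $P_{0\chunk}$ (using Lemma~\ref{lem:makecoreschemabased} to pass from the schema-based to the schemaless setting), by applying $\domoper{\mathit{spanInc}}$ to a core spanner built with one string-equality selection and then post-processing with core operators. The only difference is in the gadget: the paper has the skyline select the longest suffix of $0$s that reoccurs elsewhere and then tests its length with a second, post-skyline string equality, whereas you have a single ``killer'' span conditionally dominate the trailing chunk and need only a join and projection afterwards --- an equally valid, slightly leaner variant.
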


\begin{proof}
To prove this result, let us again use the spanner $P_{0\chunk}$ from the proof of Proposition~\ref{prp:core1}.
    We will show that the spanner $r_{0\chunk}$ is in the closure of the core spanners under
    the operators $\zeta^=, \pi, {\Join}, \domoper{\mathit{spanInc}}$. This suffices to
    show the result: as core spanners are closed under string equality,
    projection, and join, if they were also closed under skyline for
    the span inclusion domination relation, then $P_{0\chunk}$ would be a core spanner. But then, by Lemma~\ref{lem:makecoreschemabased} it could also be expressed as a schema-based core spanner, which is shown to be untrue in~\cite{faginformal}.

    To express $P_{0\chunk}$, 
    let  $r_1 = \domoper{\mathit{spanInc}} \pi_{\{x\}} \zeta^=_{x, y} (\alphabetstar
    y\{0^*\}\alphabet^+ \Join \alphabetstar x\{0^*\})$. 
    (Here $\alphabet^+$ abbreviates $\alphabet \alphabet^*$.) Intuitively, $r_1$
    selects one mapping assigning $x$ to the 0-chunk at the end of the
    word which also occurs somewhere else in the word. In other
    words, if the document should be accepted by~$P_{0\chunk}$, then this will select a
    strict suffix of the longest $0$-chunk at the end of the document; otherwise
    it will select the $0$-chunk at the end of the document. To
    distinguish these cases, let
    $r_2 = \alphabetstar 0z\{0^*\}$ which selects a non-maximal suffix of the input containing only $0$, and
    let $r_3 = \pi_{\emptyset} \zeta^=_{x, z} (r_1 \Join r_2)$.
    We can now see that $r_3$ will accept the input word if and only if $r_1$
    selected a strict suffix of the $0$-chunk at the end of the document,
    i.e., if and only if the document is accepted by~$P_{0\chunk}$. Thus indeed
    $r_3$ is equivalent to $P_{0\chunk}$, and 
    by construction $r_3$ is in the closure of the core spanners under string
    equality, projection, join, and the skyline operator under the span
    inclusion domination relation. This concludes the proof.
\end{proof}

Theorem~\ref{thm:corespanclosure} follows from Proposition~\ref{prp:core1}
and~\ref{prp:core2}.
\end{toappendix}

\section{State Complexity of the Skyline Operator}\label{sct:blowups}

We have seen 
how the skyline operator does not increase
the expressive power of regular spanners, in the sense that it could be
expressed using regular operations. However, this does not 
account for the price of this transformation. In this section, we show that the size of
sequential VAs generally increases exponentially when applying the skyline operator. 
Specifically, we show the following lower bound,
for the variable
inclusion domination relation:

\begin{theorem}\label{thm:blowupskyline}
  For every $n\in \mathbb{N}$, there is a sequential VA $\calA$ with $O(n)$ states such that,
  letting~$P_\calA$ be the regular spanner that it defines, any sequential VA representing
  the regular spanner
  $\domoper{\mathit{varInc}} P_\calA$ must have $2^{\Omega(n)}$ states.
\end{theorem}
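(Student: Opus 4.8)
The plan is to exploit the connection between sequential VAs and nondeterministic read-once branching programs (NROBPs) announced in the introduction, and to build a family of spanners whose skyline under variable inclusion encodes a Boolean function known to require exponential-size NROBPs. First I would fix a suitable alphabet (say $\{a,b\}$ together with a separator) and design, for each $n$, a ``small'' sequential VA $\calA$ on $O(n)$ states with variables $x_1,\dots,x_n$ and an auxiliary variable (or a few). On an input document of a fixed shape --- $n$ consecutive ``cells'' each holding one letter --- the mappings of $P_\calA$ should be: (i) a family of ``candidate'' mappings that each assign \emph{all} variables $x_1,\dots,x_n$ to the cells in a way that reads off the letters of the document, together with (ii) a family of ``blocker'' mappings that assign only a \emph{strict} subset of the variables. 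The blockers are arranged so that a candidate mapping survives the $\domoper{\mathit{varInc}}$ skyline exactly when no blocker extends it, which --- by the variable-inclusion order --- happens exactly when the document lies in some hard language $L_n$. Thus $\domoper{\mathit{varInc}} P_\calA$, restricted to full-domain mappings, essentially recognizes $L_n$ with a Boolean gadget.

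Second, I would make the reduction to branching-program lower bounds precise. The key step is: any sequential VA for $\domoper{\mathit{varInc}} P_\calA$ of size $s$ can be converted into an NROBP of size $\mathrm{poly}(s,n)$ (or just $O(s\cdot n)$) computing the characteristic function of $L_n$ on the $n$ input bits encoded in the cells. The conversion uses that a run of the VA reads each cell exactly once (giving the read-once property) and that the VA's acceptance of the relevant full-domain mapping is a nondeterministic reachability condition (giving the branching-program structure); the variable markers are handled because the mapping we track is fixed, so marker transitions can be contracted. Here I expect to lean on a lemma of the form ``VAs are essentially NROBPs'' which the paper develops in this section for its independent-interest Theorem~\ref{thm:blowupalgebraic}.

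Third, I would instantiate $L_n$ with a standard function requiring $2^{\Omega(n)}$-size NROBPs --- a natural candidate is a permutation/matching-type function or an appropriately encoded version of the ``$\mathrm{PERM}_n$'' or inner-product-style languages known to be hard for nondeterministic read-once branching programs --- and verify that the blocker/candidate gadget can indeed express exactly this $L_n$ with only $O(n)$ states in the starting VA. Combining: if $\domoper{\mathit{varInc}} P_\calA$ had a sequential VA with $2^{o(n)}$ states, we would get an NROBP of size $2^{o(n)}$ for $L_n$, contradicting the lower bound; hence $2^{\Omega(n)}$ states are necessary.

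\textbf{Main obstacle.} The hard part will be engineering the starting VA so that it is simultaneously (a) small, i.e.\ $O(n)$ states, (b) sequential, and (c) has exactly the right interaction between ``candidate'' full-domain mappings and ``blocker'' partial mappings so that the variable-inclusion skyline cleanly computes the intended hard language --- without the VA itself already ``knowing'' $L_n$, which would make it large. A subtle point is that $\domrela{\mathit{varInc}}$ compares mappings on the same document, so the blockers must be constructible on every input by the same small VA; getting the blockers to fire precisely on the complement of $L_n$ while keeping the automaton linear-sized is the crux, and it is also where the read-once structure of accepting runs must be preserved so the NROBP translation goes through.
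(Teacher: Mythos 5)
Your high-level strategy — encode a hard Boolean function, build a candidate/blocker gadget whose $\domoper{\mathit{varInc}}$-skyline computes it, and transfer a VA upper bound to an NROBP upper bound via a ``VAs are essentially NROBPs'' lemma — is indeed the paper's strategy. But the gadget as you describe it cannot work, because you have the direction of the variable-inclusion order backwards. Under $\domrela{\mathit{varInc}}$, a mapping is dominated by its \emph{extensions}: $m_1 \domrela{\mathit{varInc}} m_2$ requires $\dom(m_1) \subseteq \dom(m_2)$ with agreement on $\dom(m_1)$. Hence a ``candidate'' that assigns \emph{all} of $x_1,\dots,x_n$ is automatically maximal and survives every skyline, and a ``blocker'' assigning a \emph{strict subset} of the variables can never dominate it. Your construction is therefore vacuous: all full-domain candidates pass through $\domoper{\mathit{varInc}}$ unconditionally, and no language $L_n$ is computed. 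The roles must be inverted — candidates are \emph{partial} mappings (encoding a Boolean assignment by which variables they assign) and blockers are near-full mappings that extend exactly the ``bad'' candidates. This inversion also undermines your second step: you then cannot read off membership in $L_n$ as ``some full-domain mapping is output,'' and an NROBP over the \emph{document letters} is not what the paper's conversion lemma (Lemma~\ref{lem:torobp}) produces; it produces an NROBP over indicator variables recording \emph{which spanner variables are assigned}, for one fixed document. (Note also that whether all variables have been assigned is not determined by the state of a sequential VA, so extracting an NFA/OBDD for ``a full-domain mapping exists'' from a small VA is itself nontrivial.)

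The obstacle you flag as ``the crux'' — making the blockers fire precisely on $\overline{L_n}$ without the $O(n)$-state VA already recognizing $L_n$ — is exactly where the paper's proof differs from your plan, and it is resolved by not asking the VA to do any recognition at all. The paper works on the \emph{empty document}, takes a read-3 monotone 2-CNF $\Phi$ with $n$ variables and $n_c$ clauses that requires $2^{\Omega(n)}$-size NROBPs (Proposition~\ref{prp:cnflb}; such formulas are trivially satisfiable, so no hard language is being decided), and uses only $n_c$ blocker mappings, one per clause: the $k$-th mask assigns everything except the variables $v_{i,k}$, so it dominates a candidate iff the candidate assigns no $v_{i,k}$, i.e.\ iff clause $k$ is violated. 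The check ``does this assignment satisfy $\Phi$'' is thus performed by the domination relation itself, not by the automaton, which is why $O(n)$ states suffice. If you want to salvage your version, you would need (i) to swap candidates and blockers as above, (ii) to route the lower bound through $\bool(\domoper{\mathit{varInc}}P_\calA, d)$ for a fixed $d$ rather than through a language of documents, and (iii) to replace $\mathrm{PERM}$/inner-product by a function whose ``violations'' decompose into constantly many local conditions per blocker — at which point you have reconstructed the clause-mask construction.
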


We will show in later sections how this lower bound on the state complexity of
the skyline operation can be complemented with computational complexity lower
bounds.

\subparagraph*{Proof technique: Representing Boolean functions as VAs.}
We show Theorem~\ref{thm:blowupskyline} using 
representations of Boolean
functions as sequential VAs, as we now explain.
Let $\SVars \subseteq \Vars$ be a finite set of variables (which will be used to define spanners),
and let $\vars_b\colonequals\{x_b\mid x\in \SVars\}$ be a set of Boolean variables. For
every mapping $m$ assigning spans to some of the variables in $\SVars$ (i.e.,
$\dom(m) \subseteq \SVars$),
we define a Boolean assignment $m_b\colon \vars_b \to \{0,1\}$ by setting
$m_b(x_b)\colonequals 1$ if and only if $x\in \dom(m)$, i.e., $x$
gets assigned a span by $m$. Let $P$ be a document spanner with variables
$\SVars$ and let $d$ be an input document. Then we denote by $\bool(P,d)$ the
Boolean function whose models are $\{m_b \mid m\in P(d)\}$.

Our intuitive idea is that, if the function $\bool(P,d)$ is hard to represent,
then the same should be true of the spanner $P$. To make this precise, let
us introduce the representations of Boolean functions that we work with:

\begin{definition}
A \emph{nondeterministic read-once branching program}\footnote{We remark that what we introduce here are sometimes called \emph{acyclic read-once switching and rectifier networks}, but theses are known to be equivalent to the more common definition of NROBPs up to constant factors~\cite{Razgon16}, so we do not make the difference here.} (NROBP) over the variable set $\vars_b$ is
a tuple $\Pi = (G, s, t, \mu)$
where $G = (V, E)$ is a directed acyclic graph, $s \in V$ and $t \in V$ are
respectively the \emph{source} and \emph{sink} nodes, and the function $\mu$
labels some of the edges 
with literals of variables in $\vars_b$, i.e., variables and their negations;
formally $\mu$ is a partial function from~$E$ to the literals over~$\vars_b$. We
require that, for every source-sink path $s = v_0, \ldots, v_n = t$, every
variable appears at most once in the literals labeling the edges of the path,
i.e., there are no two indices $0\leq i < j \leq n-1$ such that $\mu((v_i,
v_{i+1}))$ and $\mu((v_j, v_{j+1}))$ are both defined and map to literals of the
same variable.

  An NROBP $\Pi$ computes a Boolean function over~$\vars_b$ whose models are
  defined in the following way.
  An assignment $m_b\colon\vars_b \to \{0,1\}$ is a model of $\Pi$ if there
  is a source-sink path in~$G$ such that all literal labels on the path are
  satisfied by $m_b$, i.e., there is a sequence $s = v_0, \ldots, v_n = t$ such
  that, for each $0  \leq i < n$ for which $\ell \colonequals \mu((v_i,v_{i+1)})$ is defined, then
  the literal $\ell$ evaluates to true according to~$m_b$.
\end{definition}%

NROBPs are intuitively similar to automata. To formalize this
connection, we show how,
given a sequential VA and document, we
can efficiently compute an NROBP describing which
subsets of the variables can be assigned in captured mappings:

\begin{toappendix}
  \subsection{Proof of Lemma~\ref{lem:torobp}}
\end{toappendix}

\begin{lemmarep}\label{lem:torobp}
 Let $P$ be a regular spanner on variable set $\SVars$ represented by a sequential VA~$\mathcal A$ with $n$ states. Then, for every document $d$, there is an NROBP $G$ representing $\bool(P,d)$ with $O(|d| \times n \times |\SVars|)$ nodes.
\end{lemmarep}

\begin{proofsketch}
  We intuitively compute the product of the VA with the input document, to
  obtain a directed acyclic graph representing the runs of the VA on the
  document. We obtain the NROBP by relabeling the marker transitions and
  performing some other modifications.
\end{proofsketch}

\begin{proof}
  In this proof, we use VAs extended with $\epsilon$-transitions (see Appendix~\ref{app:prelim}).
  We further assume that the VA is \emph{$\varepsilon$-cycle-free}, namely, that
  there are no cycles or self-loops of $\varepsilon$-transitions. This condition
  can be enforced on the automaton in linear time. Specifically, considering the
  $\varepsilon$-transitions, compute the strongly connected components of the
  resulting directed graph. Now, for every strongly connected component (set of
  states), merge the corresponding states into a single state: make it initial
  if one of the merged states was the initial state, make it final if one of the
  merged states was final, and add transitions from the new state to reflect all
  transitions that can be performed from one of the merged states.
  Finally, remove all self-loops of the
  $\varepsilon$-transitions.
  Clearly this
  transformation does not affect the semantics of the automaton, because any
  accepting run in the original automaton yields one in the new automaton that
  goes via the merged states, does not need to use the removed
  $\varepsilon$-transitions, and reads the same letters and markers in the same
  order; conversely any accepting run of the new automaton can be rewritten
  by navigating using the removed $\varepsilon$-transitions. Note that this
  transformation also does not affect sequentiality of the automaton, and can only make the number of states decrease.

  Thanks to the addition of $\varepsilon$-transitions, we can assume without loss
  of generality that $\mathcal A$ has exactly one final state, and that this
  state has no outgoing transitions. Indeed, we can enforce it in linear time
  simply by adding a new final state which we can reach from every old final
  state by an $\varepsilon$-edge, and by making all old final states non-final.
  Clearly, this does not change the extracted spans, nor does it affect sequentiality or the previous transformations. Further, the number of states only increases by~$1$: this will not be a problem towards establishing the size bound.

  Last, we assume that the automaton is trimmed (see Appendix~\ref{app:prelim}).

  In the remainder, let $s$ be the initial state and let $t$ be the final state of $\mathcal A$. %
 We construct $G$ in several steps. The purpose of the first step is to get rid of the input word $d$ by essentially constructing the product of $\mathcal A$ and $d$: for every state $x$ of $G$ and for every prefix $p$ of $d$, we create a new state $(x,p)$. We connect these states as follows: if $x$ and $y$ are states in $\mathcal A$ such that there is an $\varepsilon$-edge from $x$ to $y$, we add for every prefix $p$ of $d$ the $\varepsilon$ edge $(x, p)(y,p)$. If there is an edge $xy$ in $\mathcal A$ on which a letter $a$ is read, then, for every prefix $p$ such that $p\cdot a$ is also a prefix of $d$, we add the edge $(x,p)(y, pa)$. In both cases, if there is a variable marker on the edge $xy$, we also add that operation on the new edges. Call the resulting digraph $G'$. Clearly, $G'$ has at most $|d|n$ vertices. Let $s'\colonequals (s, d)$ and $t'=(t, \varepsilon)$.
  Then there is a $s'$-$t'$-path in $G'$ with variable markers on a variable set $X\subseteq \SVars$ if and only if there is an accepting run of $\mathcal A$ that assigns spans to the variables in $X$.
 
  We now observe that, because the initial VA is $\varepsilon$-cycle-free, then the graph $G'$ is acyclic. Indeed, assume by contradiction that there is a cycle in~$G'$. By construction this cycle must correspond to a cycle in the VA. The cycle in~$G'$ must only involve vertices whose second component is a given prefix $p$ of~$d$, because there are no edges of~$G'$ going from a longer to a shorter prefix. Hence, the cycle in the VA cannot include any letter transitions. As the VA is sequential and trimmed, we know the cycle cannot involve a marker, as otherwise we can build from the cycle an accepting run where the same marker is assigned twice. Thus, the cycle only consists of $\varepsilon$-transitions, but this is a contradiction because the VA is cycle-free.

 We now turn $G'$ into a NROBP by substituting all opening variable markers $\vdash x$ for each variable $x\in \SVars$ by the literal $x_b$ and deleting all all other markers. Call the result $G''$. Again, since $\mathcal A$ is sequential and trimmed, for every $s'$-$t'$-path in $G'$ every marker $\vdash x$ is followed by a marker $\dashv x$ on the path. It follows that that $G''$ accepts exactly the assignments $m_b$ for which there is an accepting run of $\mathcal A$ on $d$ such that for all variables $x$ that are assigned a span, we have $m_b(x_b) = 1$. Note however that there might be variables $x_b$ such that $m_b(x_b)=1$ but $x$ is not assigned a span on the run. To prevent this from happening, we have to force all variables that are not seen on a path in $G''$ to $0$ in the corresponding model. To do so, we use a variant of the usual completion technique for ROBPs: for every node $u$, we compute a set $V(u)$ that contains all variables that appear on edges on paths from the source $s$ to $v$. Now iteratively, for every edge $uv$, when there is a variable $y\in V(v)\setminus V(u)$ that is not the label of $uv$, we substitute $uv$ by a path $uv_yv$ of length two where $v_y$ is a new edge, give $uv_y$ the label of $uv$ if it has one, and give $v_yv$ the label $\neg y$. Doing this exhaustively yields a new NROBP $G$ in which on every $s'$-$t'$-path for every variable $x\in \SVars$ there is exactly one edge having as edge label either $x_b$ or $\neg x_b$. Moreover, there is a bijection $\pi$ between $s'$-$t'$-paths in $G''$ and $G$ such that for every path $p$ in $G''$ we have: all variables that appear on $p$, also appear on $\Pi(p)$ and they appear there positively. All other variables appear negatively. It follows that $G$ computes $\bool(P, d)$. Observing that every edge in $G''$ gives rise to a path of length at most $|\SVars|+1$ establishes the bound on size and completes the proof.
\end{proof}

We will now use the fact that NROBPs are exponentially less concise than other
Boolean function representations. 
Namely, we define a \emph{read-3 monotone 2-CNF formula} on a set of
variables~$X$ as a conjunction of clauses which are disjunctions of 2 variables
from~$X$, where each variable appears at most~$3$ times overall. 
We use the fact that converting such formulas to NROBPs can incur an
exponential blowup. This result is known (see, e.g.,~\cite{BovaCMS14}) but we
give a proof in Appendix~\ref{apx:cnflb} for convenience:

\begin{toappendix}
  \subsection{Proof of Proposition~\ref{prp:cnflb}}
  \label{apx:cnflb}
\end{toappendix}

\begin{propositionrep}[\cite{BovaCMS14}]
  \label{prp:cnflb}
  For any $n \in \mathbb{N}$, there is a read-3 monotone 2-CNF formula $\Phi$
  on~$n$ variables having
  size $O(n)$ such that 
  every representation of $\Phi$ as an NROBP has size $2^{\Omega(n)}$.
\end{propositionrep}

\begin{proof}
  Given a graph $G=(V,E)$ of maximal degree~3, we construct a read-3 monotone 2-CNF $F_G$ in the variables $\{x_v\mid v\in V\}$ as
  \begin{align*}
  	F_G\colonequals \bigwedge_{uv\in E} x_u\lor x_v.
  \end{align*} 
In~\cite[Theorem~8.5]{AmarilliCMS20}, it is shown that any NROBP computing $F_G$ has size $2^{\Omega(\mathrm{tw}(G))}$ if the maximum degree of $G$ is bounded by a constant. Here $\mathrm{tw}(G)$ denotes the treewidth of $G$.
Using the fact that there exist graphs of maximal degree~3, size $O(n)$, and treewidth
  $\Omega(n)$~\cite[Proposition~1, Theorem~5]{GroheM09} yields the claim.
\end{proof}

We now conclude the proof of Theorem~\ref{thm:blowupskyline}, sketched below (see
Appendix~\ref{apx:blowups} for details):

\begin{proofsketch}%
  Given a read-3 monotone 2-CNF formula $\Phi$,
  we show how to construct a regular spanner on which the skyline operator
  captures mappings corresponding precisely to the satisfying assignments of~$\Phi$. As
  a sequential VA expressing this spanner can be efficiently converted to an NROBP by
  Lemma~\ref{lem:torobp}, we can conclude that, when applied to the family of
  formulas from Proposition~\ref{prp:cnflb}, all sequential VA representations have
  exponential size.
\end{proofsketch}

\subparagraph*{An independent result: Lower bound on the state complexity of
schema-less joins.}
We believe that the connection to Boolean functions used to show
Theorem~\ref{thm:blowupskyline} can 
be of independent interest as a general technique to show lower
bound on the state complexity of document spanners. 
Indeed, independently from the skyline operator, we can also use this connection 
to show a lower bound on the size of sequential VAs representing the
\emph{natural join} of two
regex-formulas. The \emph{natural join operator} is a standard operator on spanners that
merges together compatible mappings: see Appendix~\ref{app:prelim} for the
formal definition. We have:

\begin{theorem}\label{thm:blowupalgebraic}
	For every $n\in \mathbb{N}$, there are regex-formulas $e_n$ and $e_n'$
        of size $O(n)$ such that every sequential VA equivalent to $e_n \bowtie e_n'$ 
        has $2^{\Omega(n)}$ states. 
\end{theorem}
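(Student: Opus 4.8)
The plan is to reuse the NROBP lower bound machinery already developed for Theorem~\ref{thm:blowupskyline}, but now targeting the join operator directly rather than the skyline operator. Concretely, I start from the family of read-3 monotone 2-CNF formulas $\Phi$ given by Proposition~\ref{prp:cnflb}, whose satisfying assignments require $2^{\Omega(n)}$-size NROBPs. Write $\Phi = \bigwedge_{k=1}^{m} (x_{i_k} \lor x_{j_k})$ on variable set $\{x_1,\dots,x_n\}$, with $m = O(n)$ since each variable occurs at most three times. I will design two regex-formulas $e_n$ and $e_n'$, each of size $O(n)$, over a suitable document of the form $d_n = \#\,\#\,\cdots\,\#$ (one block per variable, say $d_n = a^n$ or a word with $n$ designated positions), such that a mapping $m$ of $e_n \bowtie e_n'$ on $d_n$ assigns exactly the set of variables $\{x_i : \text{bit } i \text{ is } 1\}$ and these surviving mappings are in bijection with the models of $\Phi$. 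The natural idea is to split the $m$ clauses into two groups and let $e_n$ enforce the clauses in one group while $e_n'$ enforces the clauses in the other; since each variable $x_i$ appears at most three times, we can 2-color the clause-occurrence structure so that each regex only needs to "read" each variable a bounded number of times, keeping both regexes of linear size. Each regex independently also sets up the schemaless choice "assign $x_i$ or not" for every $i$, using a subexpression like $(y_i\{\epsilon\} \lor \epsilon)$ at position $i$; compatibility of the merged mapping then forces the two regexes to agree on which variables are assigned, and the conjunction of the clause constraints from both sides is exactly $\Phi$.

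The key steps, in order, are: (1) fix the document $d_n$ and the variable set $\SVars = \{y_1,\dots,y_n\}$ and define the correspondence between mappings on $d_n$ and Boolean assignments via $m_b$ exactly as in the paragraph preceding Lemma~\ref{lem:torobp}; (2) partition the clauses of $\Phi$ into $\mathcal{C}_1 \uplus \mathcal{C}_2$ using the bounded-occurrence structure so that within each part each variable is touched $O(1)$ times, and build regex-formulas $e_n, e_n'$ of size $O(n)$ where $e_n$ has domain $\SVars$, freely chooses a subset of variables to assign (each $y_i$ either assigned to some fixed empty span or not), and rejects any choice violating a clause in $\mathcal{C}_1$ — and similarly $e_n'$ for $\mathcal{C}_2$; (3) verify that $\bool(e_n \bowtie e_n', d_n)$ is exactly the Boolean function $\Phi$: a mapping is in the join iff its assignment satisfies every clause of $\mathcal{C}_1$ (from $e_n$) and every clause of $\mathcal{C}_2$ (from $e_n'$) and the two sides agree on the assigned set (compatibility), i.e. iff the assignment models $\Phi$; (4) invoke Lemma~\ref{lem:torobp}: any sequential VA $\calA$ equivalent to $e_n \bowtie e_n'$ with $N$ states yields an NROBP for $\bool(P_\calA, d_n) = \Phi$ of size $O(|d_n| \cdot N \cdot |\SVars|) = O(n^2 N)$; (5) combine with Proposition~\ref{prp:cnflb} to conclude $O(n^2 N) \ge 2^{\Omega(n)}$, hence $N \ge 2^{\Omega(n)}$.

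The main obstacle I expect is step~(2): expressing a clause constraint "$y_i$ assigned $\lor\ y_j$ assigned" inside a single regex-formula of bounded size, given that regex-formulas must assign each variable at most once on every document and have limited power to "revisit" a position. The trick will be to not literally test each clause locally but to route the regex's parse through the document in a way that records, for each position $i$, the binary decision (assign/skip) and then uses disjunction to branch on the finitely many clause patterns; because each variable appears in at most three clauses total and we split into two halves, each half-regex only needs to branch on $O(1)$ clauses per variable, so the per-position gadget stays constant-size and the whole regex stays $O(n)$. Care is needed to ensure the two halves have the same domain $\SVars$ and impose compatible "assign vs. skip" choices — this is what makes the join's compatibility condition do the work of gluing the two clause-sets into one formula. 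A secondary, more routine concern is confirming that the construction stays within the regex-formula fragment (no illegal double assignment, only regular operators plus captures), but that should follow directly from the bounded-occurrence gadget design.
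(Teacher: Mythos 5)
Your outer frame (reduce to the read-3 monotone 2-CNFs of Proposition~\ref{prp:cnflb}, arrange that $\bool(e_n \bowtie e_n', d_n)$ is the formula $\Phi$, then apply Lemma~\ref{lem:torobp} and conclude) is exactly right, but step~(2) of your plan --- building a linear-size regex-formula $e_n$ whose captured mappings are precisely the assignments satisfying all clauses of $\mathcal{C}_1$ --- cannot work, for essentially the same reason the theorem is true. If such an $e_n$ existed, it would yield a sequential VA with $O(n)$ states, and Lemma~\ref{lem:torobp} would then give a polynomial-size NROBP for $\bigwedge \mathcal{C}_1$ reading the variables in document order. But the hard formulas come from bounded-degree graphs of treewidth $\Omega(n)$: under \emph{any} linear order of the vertices (and the two regexes must share one order, since they share the document) there is a cut crossed by a matching of $\Omega(n)$ edges, at least half of which land in one of $\mathcal{C}_1,\mathcal{C}_2$; a standard fooling-set argument on the function $\bigwedge_{uv}(x_u \lor x_v)$ then forces width $2^{\Omega(n)}$ at that cut for that half. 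So no clause partition rescues the plan: the difficulty is not how many clauses touch each variable, but that a clause ties together two distant positions, and $\Omega(n)$ such ties cross a common cut. Your proposed fix (``branch on the finitely many clause patterns per position'') does not address this, because a regex-formula is a tree, not a DAG, so the branches for overlapping clause intervals multiply rather than share state.

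The paper's proof avoids this by never asking either regex to enforce a non-local constraint on a \emph{shared} variable. It introduces three occurrence-copies $x_i^1, x_i^2, x_i^3$ of each CNF variable (one per clause occurrence) and works over the one-letter document $a$. The first regex, $r_=$, is a concatenation of per-variable gadgets $\bigl(x_i^1\{\epsilon\}x_i^2\{\epsilon\}x_i^3\{\epsilon\}\bigr) \lor \epsilon$ placed \emph{before} the letter $a$: its only job is to force the three copies of $x_i$ to be assigned consistently (all at $\mspan{0}{0}$ or none), a purely local constraint. The second regex, $r_C$, is a concatenation of per-clause gadgets placed \emph{after} the $a$, each a single disjunction over the two occurrence-copies private to that clause --- again purely local. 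The join's compatibility condition (a copy cannot be assigned both at $\mspan{0}{0}$ and at $\mspan{1}{1}$) is what couples the two families of local constraints into the global CNF. In other words, the join is made to do the ``gluing across occurrences of the same variable,'' whereas your plan made it do only the (redundant) synchronization of identical assignment choices while leaving all the real work inside each regex. If you adopt the occurrence-copy gadget, your steps (1) and (3)--(5) go through essentially as you wrote them, up to the minor post-processing the paper performs (erasing the markers read after $a$, collapsing the three copies to one variable, and flipping literal signs) before invoking Lemma~\ref{lem:torobp}.
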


This result is the counterpart for state
complexity of the \NP-hardness 
of evaluating the join of two
regex-formulas~\cite{PeterfreundFKK19}. It only holds in the schemaless case;
indeed
in the schema-based case it is known that the join of two
functional VAs can be computed as a functional VA
in polynomial
time~\cite{DBLP:conf/pods/FreydenbergerKP18}.

\begin{toappendix}
  \label{apx:blowups}
\subsection{Proof of Theorem~\ref{thm:blowupalgebraic}}\label{sct:blowupalgebraic}
  We first show in this section the proof of Theorem~\ref{thm:blowupalgebraic}, before showing in the next section the proof of Theorem~\ref{thm:blowupskyline}.

The proof is inspired by the \NP-hardness proof for evaluation of spanners given as the join of two regex formulas~\cite[Theorem~3.1]{PeterfreundFKK19}. We define two regexes $r_c$ and $r_=$ whose join will encode a CNF-formula. So let $F= C_1\land \ldots \land C_m$ be a 2-CNF-formula in variables $x_1, \ldots x_n$. We 
use the formulas from Proposition~\ref{prp:cnflb}. We remind the reader that those formulas are monotone $2$-CNF, i.e., all clauses contain only two variables and all variables appear only positively. Moreover, every variable appears in only three clauses. We first define $r_=$ as
\begin{align*}
 r_= \colonequals r_{x_1}\cdot \cdots \cdot r_{x_n}\cdot a
\end{align*}
where $a$ is a letter and 
\begin{align*}
 r_{x_i} \colonequals \left(x_i^1\{\varepsilon\}x_i^2\{\varepsilon\}x_i^3\{\varepsilon\}\right) \lor \varepsilon.
\end{align*}
Note that, considering a successful evaluation of $r_=$ on the single letter
  document $a$, for each $i\in [n]$, either the evaluation maps 
the three variables $x_i^1, x_i^2, x_i^3$ to the span $[0,0\rangle$,
  or it does not map any of these three variables. The fact that the three
  variables have to be matched in the same way 
  will allow us to simulate variable assignments to $x_i$ consistently later on.

Next we construct $r_C$ by setting 
\begin{align*}
 r_C = a \cdot r_{C_1}\cdot \ldots \cdot r_{C_m}
\end{align*}
where
\begin{align*}
 r_{C_j} \colonequals x_{j_1}^{p(j,j_1)}\{\varepsilon\} \lor x_{j_2}^{p(j,j_2)}\{\varepsilon\}
\end{align*}
where $x_{j_1}$ and $x_{j_2}$ are the two variables in $C_j$ and $p(j,j_2)$ is the number $s\in \{1,2,3\}$ 
such that the appearance of $x_j$ in $C_j$ is the $s$-th appearance of $x_j$ in $F$ when reading from left to right. Note that this is well-defined because every variable appears in at most three clauses.

Now consider the results of the spanner $r \colonequals r_=\bowtie r_C$ on the input
  $d=a$. A mapping $m$ is in $r(a)$ if and only if for every clause $C_j$ of $F$
  there is a variable $x_b$ in $C_j$ such that $x_b^{p(j,b)}$ is not mapped to $[0,0\rangle$ by $m$.

Let $\mathcal A$ be a sequential variable-set automaton for $r$ which we assume
  without loss of generality to be trimmed, and consider
  again the input $d=a$. Then, for every state $s$ of $\mathcal A$ we can
  determine if in an accepting run of $\mathcal A$ the state $s$ can be reached
  before the letter $a$ is read or after (note that both cannot happen for the
  same state $s$ in different runs: as the VA is sequential and trimmed, it would imply that the automaton can accept a document with more than one occurrence of~$a$, which is not allowed by~$r$).
  Now for all edges going from states in which $a$ has been read before, delete all variable markers. By what was said before, the resulting automaton $\mathcal A$ accepts exactly the mappings $m$ for which there is a satisfying assignment $m_b$ of $F$ such that $m$ assigns the variables $\{x_i^j\mid m_b(x_i) = 0\}$. Now deleting all variable markers in variables $x_i^2$ and $x_i^3$ and substituting all $x_i^1$ by $x_i$ yields a VA $\calA'$ representing a spanner $P'$ such that $\bool(P',a)= \{1-m_b\mid m_b \text{ is a model of } F\}$, i.e., the models of $F$ up to flipping all bits of the satisfying assignments. Note that this VA $\calA'$ has size linear in~$\calA$.

Now applying Lemma~\ref{lem:torobp} to~$\calA'$ we get an NROBP $G$ computing $\bool(P', a)$ and thus, by flipping the sign of all variables in $G$, we get an NROBP $G'$ computing $F$. Moreover, the size of $G'$ is $O(|\mathcal A| n)$. But by the choice of the CNF $F$, we know from Proposition~\ref{prp:cnflb} that any NROBP encoding $F$ has size $2^{\Omega(n)}$ which completes the proof.

  \subsection{Proof of Theorem~\ref{thm:blowupskyline}}\label{sct:blowupskyline}
  We next show the proof of Theorem~\ref{thm:blowupskyline}, which establishes that the skyline operator under the variable inclusion domination rule generally causes an exponential blowup. This result can be intuitively understood as a variant of proof establishing that evaluating this operator is 
  \NP-hard in combined complexity (Theorem~\ref{thm:nphardspanner}), shown later in the paper; the proof also uses similar ideas to the proof of Theorem~\ref{thm:blowupalgebraic} shown earlier.

We again fix a monotone 2-CNF formula $F$ in $n$ variables $\{x_1, \ldots, x_n\}$ in which 
  we assume that every variable appears in only three clauses. Let $C_1,
  \ldots, C_m$ be the clauses of $F$. We choose $F$ such that any
  NROBP for $F$ has size $2^{\Omega(n)}$, which is possible according to Proposition~\ref{prp:cnflb}.

Now consider two spanners defined by slightly modifying the two spanners in the
proof of Theorem~\ref{thm:nphardspanner}. To this end, let $T_i$ be defined as
the indices of the clauses in which $x_i$ appear; remember that $F$ is monotone so variables always appear positively. Moreover, introduce for every
$i\in [n]$ three spanner variables $v_{i,j}$, one for each $j\in T_i$, and an
additional variable $\bar x_i$. Then we set:
\begin{align*}
 r_\valid \colonequals \cdot_{i\in [n]} ((x_i\{\varepsilon \} \cdot \cdot_{j \in T_i} v_{i,j}\{\varepsilon\})\lor \bar{x}_i\{\varepsilon\}).
\end{align*}
Intuitively, in every match, the spanner assigns to some of the $x_i$ the span $[0,0\rangle$, which we interpret as setting the variable $x_i$ to true in the CNF. Moreover, the clauses that are made true by setting $x_i$ to true are set to true as well by assigning them $[0,0\rangle$. Finally, the assignments to $\bar x_i$ encode the negation of the assignments to the $x_i$.

For $r_\mask$ we set:
\begin{align*}
r_{\mask} = \bigvee_{k \in [m]} (\conR_{i\in [n]} (x_i\{\bar{x}_i\{\varepsilon\}\}( \conR_{j \in [T_i]\setminus \{k\}} v_{i,j}\{\varepsilon\})).
\end{align*}
Note that $r_\mask$ on input $\emptyword$ matches in exactly $m$ ways: in each
of these matches, all variables $x_i, \bar x_i$ are assigned $[0,0\rangle$.
Moreover, in the $k$-th match all of the $v_{i,j}$ except those of the form $v_{i,k}$ are assigned.

Then the complete regular spanner is $r\colonequals r_\mask \lor r_\valid$, and we apply the skyline
operator $\domoper{\mathit{varInc}}$ on $r$. We consider the evaluation of this spanner on the empty document $\emptyword$.

Let us first study which domination relation can hold on the mappings captured by this spanner. Note first that the matches of $r_\valid$ do not dominate each other, since in each of them for every $i\in [n]$ either $x_i$ or $\bar x_i$ is assigned and all those assignments differ. Similarly, the matches of $r_\mask$ cannot dominate each other since they are not assigning the $v_{i,k}$ for different values of $k$ and thus are pairwise incomparable. Finally, since all matches of $r_\mask$ assign all $x_i$ and $\bar x_i$, none of them is dominated by any match of $r_\valid$. So the only domination that may happen is that matches of $r_\mask$ can dominate matches of $r_\valid$. Moreover, this happens exactly for the matchings $m$ of $r_\valid$ for which there is an index $k\in[m]$ for which no variable of the form~$v_{i,k}$ is assigned, i.e., there is a clause $C_k= x_{i_1}\lor x_{i_2}$ of $F$ such that in the corresponding assignment
none of the two variables $x_{i_1}$ and $x_{i_2}$ are made true. In other words, a mapping of $r_\valid$ is dominated by a mapping of~$r_\mask$ exactly when the valuation of the $x_i$'s that it describes does \emph{not} satisfy the CNF~$F$.

Now consider the function $f$ that is $\bool(\domoper{\mathit{varInc}} r, \emptyword)$ projected to the variables $x_1, \ldots, x_n$. This function describes which subsets of~$x_1, \ldots, x_n$ are assigned a span (necessarily~$[0,0\rangle$) in the mapping. We claim these are precisely the assignments satisfying the CNF~$F$. Indeed, the captured assignments include the all-$1$ assignment (captured by $r_\mask$), which is indeed a satisfying assignment of~$F$: and it includes all assignments of $r_\valid$ that are not dominated by an assignment of~$r_\mask$, those which satisfy~$F$ by our preceding discussion. Conversely, for every satisfying assignment of~$F$ we obtain a mapping of~$r_\valid$ which is not dominated by a mapping of~$r_\mask$ and is part of the output (note that the all-$1$ assignment is captured both thanks to $r_\mask$ and thanks to $r_\valid$).
Thus, indeed the function $f$ is in fact exactly~$F$.

To conclude, fix a variable-set automaton $\mathcal A$ which represents $\domoper{\mathit{varInc}} r$.
Then using Lemma~\ref{lem:torobp}
and the fact that projection does not increase the size
of NROBP, we get an NROBP $G$ for $F$ of size $O(|\mathcal A|n)$. Since we have
chosen $F$ such that the size of the NROBP $G$ is at least $2^{\Omega(n))}$, it follows that $|\mathcal A| = 2^{\Omega(n)}$ which completes the proof of Theorem~\ref{thm:blowupskyline}.
\end{toappendix}

\section{Complexity of the Skyline Operator}
\label{sec:complexity}

We have shown that the skyline operator applied to regular spanners cannot be
expressed as a regular spanner without an exponential blowup in the size, even
for domination rules expressed as regular spanners (namely, for the variable
inclusion domination rule). We now study whether we can efficiently evaluate the skyline
operator without compiling it into the automaton. Formally, we study its computational
complexity of skyline extraction:

\begin{definition}
  \label{def:problem}
  The \emph{skyline extraction problem} is the following: given a document~$d$, a sequential VA~$\calA$ capturing a regular spanner $P_\calA$, and a domination rule $\domrule{}$ expressed as a
  sequential~VA, compute the set of mappings in the results of the skyline operator $(\domoper{\domrule{}} P_\calA)(d)$.
\end{definition}

\subparagraph*{Data complexity.}
We start by observing that skyline extraction is clearly tractable in the data
complexity perspective in which $d$ is the only input:

\begin{proposition}
  For any fixed sequential VA $\calA$ and domination rule expressed as a sequential VA~$\domrule{}$, 
  the skyline extraction problem for~$P_\calA$ and~$\domrule{}$ can be solved in
  polynomial time data complexity, i.e., in polynomial time in the input~$d$.
\end{proposition}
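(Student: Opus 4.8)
The plan is to reduce the skyline extraction problem, for fixed $\calA$ and $\domrule{}$, to a sequence of polynomial-time subroutines whose correctness follows directly from the definitions.

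\textbf{Step 1: Enumerate the candidate mappings.} First I would compute the set $P_\calA(d)$ of all mappings captured by~$\calA$ on~$d$. Since $\calA$ is fixed, it has a fixed number $k$ of variables, so every mapping in $P_\calA(d)$ is determined by choosing, for each variable, either ``unassigned'' or a span $\mspan{i}{j}$ with $0 \le i \le j \le |d|$. There are $O(|d|^{2k})$ such mappings, which is polynomial in~$|d|$ because $k$ is a constant. For each candidate mapping $m$, checking whether $m \in P_\calA(d)$ amounts to checking whether $\calA$ has an accepting run inducing~$m$; since $\calA$ is sequential this is a standard reachability question in the product of $\calA$ with the (linear) structure of~$d$ together with the fixed marker pattern prescribed by~$m$, solvable in time polynomial in~$|d|$. (Alternatively, one can directly enumerate $P_\calA(d)$ by a product construction; either way the result is a polynomial-size list.)

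\textbf{Step 2: Compute the domination relation.} Next, for every ordered pair $(m_1, m_2)$ of mappings in $P_\calA(d)$, I would decide whether $m_1 \domrela{} m_2$ holds according to the domination rule $\domrule{}(d)$. By definition of a domination rule, $m_1 \domrela{} m_2$ holds iff there is a mapping $m \in \domrule{}(d)$ with $m\restrict{X} = m_1$ and $m\restrict{\dagg{X}} = \dagg{(m_2)}$; equivalently, iff the explicit mapping $m$ built by taking $m_1$ on~$X$ and $\dagg{(m_2)}$ on~$\dagg{X}$ is captured by~$\domrule{}$ on~$d$. Since $\domrule{}$ is a fixed sequential VA, this is again an accepting-run / reachability check in the product of $\domrule{}$ with~$d$ and the fixed marker pattern of $m$, doable in polynomial time in~$|d|$. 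Doing this for all $O(|d|^{4k})$ pairs stays polynomial.

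\textbf{Step 3: Filter.} Finally, using Definition~\ref{def:skyop}, I output exactly those $m \in P_\calA(d)$ such that for no $m' \in P_\calA(d) \setminus \{m\}$ do we have $m \domrela{} m'$; this is a linear scan over the (polynomially many) pairs computed in Step~2. The output size is itself polynomial in~$|d|$, so the whole procedure runs in polynomial time. The only mild subtlety — which is not really an obstacle — is making sure the per-mapping membership tests in Steps~1 and~2 are genuinely polynomial despite the VA being nondeterministic; this is handled by the standard observation that, once the sequence of markers to be read (and their positions relative to the letters of~$d$) is fixed by the candidate mapping, testing existence of a matching accepting run is just reachability in a product graph of size $O(|Q| \cdot |d|)$. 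Since everything except~$d$ is constant, all the exponents that appear ($2k$, $4k$, $|Q|$, etc.) are constants, and we are done.
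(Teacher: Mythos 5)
Your proposal is correct and follows essentially the same route as the paper's proof: materialize $P_\calA(d)$ (polynomial-size since $\calA$ is fixed), test domination for each pair using $\domrule{}$, and return the non-dominated mappings. The extra detail you give on the per-mapping acceptance tests is fine (the only minor point, which does not cause trouble since the number of variables is constant, is that a single mapping may correspond to several marker orderings at the same position, so one may need to test constantly many ref-words).
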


\begin{proof}
  We simply materialize the set of all captured mappings $(P_\calA)(d)$, in polynomial time because $\calA$ is fixed. Then,
  for any pair of mappings, we can check if the domination relation holds
  using the domination rule $\domrule{}$; this is again in polynomial time. 
  We then return the set of maximal mappings in polynomial time.
\end{proof}
Note that this result would easily extend to fixed expressions using multiple skyline operators together with regular spanner operators, as all these operators are polynomial-time.

\subparagraph*{Combined complexity.}
We now turn to combined complexity settings in which the domination rule $\domrule{}$ or the spanner $P$ are considered as part of the input. Remember that we focus on regular spanners represented as sequential VAs, since for those it is known that the combined complexity of spanner evaluation is output polynomial~\cite{DBLP:conf/pods/MaturanaRV18}.

As we have seen in Section~\ref{sct:closure}, in terms of expressiveness, the regular spanners are closed under all domination rules expressible as regular spanners, in particular those of Examples~\mbox{\ref{exa:selfdom}--\ref{exa:leftotoright}}.
However, we have seen in Section~\ref{sct:blowups}
that compiling the skyline into the VA may generally incur an exponential
blowup, already for fixed domination rules. 
This bars any hope of showing tractability of the skyline extraction problem by applying 
known evaluation algorithms on the result of this transformation (e.g.,~those
from~\cite{DBLP:conf/pods/FreydenbergerKP18,DBLP:conf/pods/FlorenzanoRUVV18,constdelay}),

This leads to the question if there are other approaches to solve the skyline
extraction problem with efficient combined complexity, without materializing an equivalent
VA. In this section, we show that this is not the case, assuming
$\mathsf{P}
\neq \mathsf{NP}$. Our lower bound already holds for a fixed domination rule, namely, the variable inclusion domination rule; and in fact it even holds
in \emph{query complexity}, i.e., when
the document is fixed.

\begin{theorem}\label{thm:nphardspanner}
  There is a fixed document $d$ such that the following problem is \NP-hard:
  given a sequential VA $\calA$ encoding a regular spanner~$P_\calA$ and a number~$n \in \mathbb{N}$,
  decide whether 
  $(\domoper{\mathit{varInc}}P_\calA)(d)$
  contains more than~$n$ mappings.
\end{theorem}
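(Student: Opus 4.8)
The plan is to prove \NP-hardness by a polynomial-time reduction from \textsc{3-Sat}, taking the fixed document $d$ to be the empty word $\emptyword$. Working over $\emptyword$ is convenient because every captured span is forced to be $\mspan{0}{0}$, so the only thing that distinguishes two mappings is \emph{which set of variables} they assign, and the variable inclusion order $\domrela{varInc}$ collapses to plain inclusion of domains. Given a $3$-CNF instance $F = C_1 \wedge \dots \wedge C_m$ over propositional variables $x_1, \dots, x_n$ (we may assume $n \geq 1$), I will construct in polynomial time a sequential VA for a regular spanner $P = r_{\valid} \vee r_{\mask}$ whose skyline on $\emptyword$ has size exactly $m + \#(F)$, where $\#(F)$ is the number of satisfying assignments of $F$. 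Then, feeding this VA together with the threshold $n \colonequals m$ to the problem in the statement produces a ``yes'' instance if and only if $\#(F) \geq 1$, i.e., if and only if $F$ is satisfiable.

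The spanner variables are $x_i$ and $\bar x_i$ for $i \in [n]$, together with a variable $v_{i,k}$ for each $i$ and each clause index $k$ such that $x_i$ occurs in $C_k$; write $T_i^+$ (resp.\ $T_i^-$) for the set of indices $k$ such that $x_i$ occurs positively (resp.\ negatively) in $C_k$. The first gadget records a truth assignment together with the clauses it satisfies:
\[
  r_{\valid} \colonequals \conR_{i \in [n]}\Bigl(\bigl(x_i\{\emptyword\}\conR_{k \in T_i^+} v_{i,k}\{\emptyword\}\bigr) \vee \bigl(\bar x_i\{\emptyword\}\conR_{k \in T_i^-} v_{i,k}\{\emptyword\}\bigr)\Bigr).
\]
A match of $r_{\valid}$ on $\emptyword$ chooses, for each $i$, to assign either $x_i$ (encoding ``$x_i$ true'') or $\bar x_i$ (encoding ``$x_i$ false''), and assigns exactly the variables $v_{i,k}$ for the clauses $C_k$ that this literal satisfies; hence a clause $C_k$ is satisfied by the encoded assignment iff some variable $v_{i,k}$ is assigned. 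The second gadget provides, for every clause index $k$, one ``mask'' match that assigns \emph{all} the $x_i$ and $\bar x_i$ and \emph{all} the $v_{i,k'}$ with $k' \neq k$:
\[
  r_{\mask} \colonequals \bigvee_{k \in [m]} \conR_{i \in [n]}\Bigl(x_i\{\emptyword\}\,\bar x_i\{\emptyword\}\conR_{k' \in (T_i^+ \cup T_i^-)\setminus\{k\}} v_{i,k'}\{\emptyword\}\Bigr).
\]
Both $r_{\valid}$ and $r_{\mask}$ are regex-formulas of polynomial size that assign each variable at most once, so the standard translation produces a sequential VA for $r_{\valid} \vee r_{\mask}$ in polynomial time.

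The heart of the argument is a case analysis of $\domrela{varInc}$ (= domain inclusion) among the mappings captured on $\emptyword$, which establishes the skyline-size formula. The $m$ mask mappings are pairwise incomparable, because the one associated with $k$ is the only one that omits all variables $v_{i,k}$; and no mask mapping is dominated by a match of $r_{\valid}$, since a mask mapping assigns both $x_i$ and $\bar x_i$ for every $i$ whereas no $r_{\valid}$-match assigns both (using $n \geq 1$). So all $m$ masks survive the skyline. The matches of $r_{\valid}$ are pairwise incomparable as well, since two distinct ones disagree about which of $x_i$, $\bar x_i$ is assigned for some $i$. Therefore the only possible dominations are of the form ``mask $k$ dominates the $r_{\valid}$-match $\nu_S$'', where $\nu_S$ encodes the assignment setting $x_i$ true exactly for $i \in S$; and a direct check shows that $\dom(\nu_S)$ is contained in the domain of the $k$-th mask mapping iff no variable $v_{i,k}$ lies in $\dom(\nu_S)$, i.e., iff $C_k$ is \emph{not} satisfied by $S$. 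Consequently $\nu_S$ belongs to the skyline iff $S$ satisfies every clause of $F$, so the skyline consists exactly of the $m$ pairwise-distinct mask mappings together with one mapping per satisfying assignment of $F$; its size is $m + \#(F)$, which exceeds $n = m$ iff $F$ is satisfiable. The main obstacle I anticipate is carrying out this domination bookkeeping precisely — in particular verifying that the mask for $C_k$ dominates exactly the assignments falsifying $C_k$ and ruling out every other kind of domination between $r_{\mask}$-matches and $r_{\valid}$-matches — whereas the translation into a sequential VA and the polynomial size bound are routine.
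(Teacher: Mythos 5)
Your proposal is correct and follows essentially the same route as the paper's proof: a reduction from satisfiability on the empty document, with an $r_{\valid}$ spanner encoding truth assignments through which clause-variables $v_{i,j}$ get assigned, an $r_{\mask}$ spanner contributing one always-maximal mapping per clause that dominates exactly the assignments falsifying that clause, and a threshold equal to the number of clauses. The only (harmless) difference is that you add the $x_i,\bar x_i$ captures to make the $r_{\valid}$-matches pairwise incomparable and obtain an exact skyline count $m+\#(F)$, whereas the paper uses a single extra variable in $r_{\mask}$ and argues only that the skyline exceeds $n_c$ iff the formula is satisfiable.
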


This will imply that, conditionally, the skyline extraction problem is intractable 
in combined complexity.
We state this in the language of \emph{output-polynomial algorithms}, where an
algorithm for a problem $f\colon\alphabetstar\rightarrow\alphabetstar$ runs in
\emph{output-polynomial time} if, given an input~$x$,
it runs in time polynomial in $|x| + |f(x)|$.
Namely, we use the following folklore connection
between 
output-polynomial time and decision problems, see
e.g.~\cite{CapelliS19} for a similar construction:

\begin{lemmarep}\label{lem:outputpoly}
	Let $f:\alphabetstar \rightarrow \alphabetstar$ and let $p$ be a
        polynomial. Assume that it is $\mathsf{NP}$-hard, given an input $x$
        and integer $k\le
        p(|x|)$, to decide if $|f(x)|< k$.
        Then there is no output polynomial time algorithm for $f$, unless $\mathsf{P} = \mathsf{NP}$.
\end{lemmarep}
\begin{proof}
By way of contradiction, assume that there is an output polynomial time algorithm for $f$ that computes $f(x)$ in time $r(|x|+|f(x)|)$ for some polynomial $r$. Assume w.l.o.g.~that~$r$ is monotone. We show that there is then a polynomial time algorithm for the decision problem, and the lemma follows directly.

The algorithm works as follows: given $x$ and $k$, we simulate the output polynomial time algorithm for $r(|x|+ k+1)$ steps. If the simulation terminates in that many steps, we simply check whether $|f(x)| \geq k$. If the simulation does not terminate, we reject the input. 

We claim that the above is a polynomial time algorithm for the decision problem in the statement of the lemma. First note that for the runtime, it suffices to show that we simulate the output polynomial time algorithm only for a polynomial number of steps. But this is true because $r(|x|+k+1) \le r(|x|+ p(|x|)+1)$ is polynomially bounded in $|x|$.

It remains to show correctness. First, if the simulation terminates in the allowed number of steps, the output is obviously correct. So assume that the simulation does not terminate. If $f(|x|)< k$, then the output polynomial time algorithm terminates by assumption in $r(|x|+|f(x)|) \le r(|x|+k)$ steps, so the simulation must terminate. So from the fact that it does not terminate, we can infer that $|f(x)|\ge k$ and thus it is correct to reject.
\end{proof}

From Theorem~\ref{thm:nphardspanner} and Lemma~\ref{lem:outputpoly}, we directly get our intractability result:

\begin{corollary}
  \label{cor:varinc}
    Unless $\mathsf{P} = \mathsf{NP}$,
    there is no algorithm for the skyline extraction problem with respect to the
    variable inclusion domination rule that is
    output-polynomial in combined complexity (i.e., in the input sequential VA), even when the input document is fixed.
\end{corollary}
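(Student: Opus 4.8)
The plan is to derive Corollary~\ref{cor:varinc} directly from Theorem~\ref{thm:nphardspanner} via the generic translation provided by Lemma~\ref{lem:outputpoly}. I would fix the document $d$ furnished by Theorem~\ref{thm:nphardspanner} together with the variable inclusion domination rule, and let $f$ be the function mapping (an encoding of) a sequential VA $\calA$ to the set of mappings $(\domoper{\mathit{varInc}} P_\calA)(d)$. Since $d$ and the domination rule are fixed, $f$ is a restriction of the skyline extraction problem, so any output-polynomial algorithm for the latter would yield one for $f$; it therefore suffices to show that $f$ has no output-polynomial algorithm unless $\mathsf{P} = \mathsf{NP}$.

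To apply Lemma~\ref{lem:outputpoly} to $f$, I need a polynomial $p$ such that it is $\mathsf{NP}$-hard, given $\calA$ and an integer $k \le p(|\calA|)$, to decide whether $|f(\calA)| < k$. Theorem~\ref{thm:nphardspanner} gives precisely $\mathsf{NP}$-hardness of deciding whether $|f(\calA)| > n$; a polynomial-time test for $|f(\calA)| < k$ also decides the negation $|f(\calA)| \ge k$, equivalently $|f(\calA)| > k - 1$, so the two problems coincide up to complementation and a shift of the threshold by one, and hardness of the latter is exactly what Lemma~\ref{lem:outputpoly} requires. The only subtlety is the constraint $k \le p(|\calA|)$: I would observe that the $\mathsf{NP}$-hard instances produced in the proof of Theorem~\ref{thm:nphardspanner} use a threshold $n$ that is polynomially bounded in $|\calA|$ (by inspection of the reduction, which encodes a CNF formula into $\calA$ and compares the skyline size against a threshold that is linear in the formula), so that $k = n + 1$ is a legal choice for a suitable polynomial $p$.

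Lemma~\ref{lem:outputpoly} then yields that $f$, and hence the full skyline extraction problem for the variable inclusion rule, admits no output-polynomial algorithm unless $\mathsf{P} = \mathsf{NP}$; and since $d$ was held fixed throughout, this holds even when the document is not part of the input, which is the claimed statement. The main---essentially the only---obstacle is this bookkeeping around the threshold, namely confirming that the quantity $n$ in Theorem~\ref{thm:nphardspanner} can be taken polynomial in the size of the VA on the hard instances; everything else is immediate from the two cited results.
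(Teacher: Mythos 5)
Your proposal is correct and follows the same route as the paper, which derives the corollary directly from Theorem~\ref{thm:nphardspanner} via Lemma~\ref{lem:outputpoly}; you in fact spell out the two bookkeeping points (the complementation/threshold shift between ``more than $n$'' and ``fewer than $k$,'' which is harmless because a polynomial-time decision procedure for a problem also decides its complement, and the fact that the threshold $n=n_c$ is polynomially bounded in $|\calA|$) more explicitly than the paper does. No gaps.
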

Note that this result is incomparable to Theorem~\ref{thm:blowupskyline}: lower
bounds on the size of equivalent VAs generally do not preclude the existence of
other algorithms that are tractable in combined complexity, and conversely it
could in principle be the case that evaluation is intractable in combined
complexity but that there are small equivalent VAs that are intractable to
compute.
Besides, the proofs are also different. Namely, the proof of Theorem~\ref{thm:blowupskyline} used monotone 2-CNF formulas, for which we could compute spanners giving an exact representation of the satisfying assignments, but for which the satisfiability problem is tractable. As we will see, the proof of Theorem~\ref{thm:nphardspanner} uses the intractability of SAT on CNF formulas, but does not use an exact representation of the satisfying assignments.

\subparagraph*{Proving Theorem~\ref{thm:nphardspanner}.}
We give the proof of Theorem~\ref{thm:nphardspanner} in the rest of this section, together with an additional observation at the end. In the next
section, we will study how hardness can be generalized to other domination
relations (in particular all domination relations introduced in
Section~\ref{sec:skyline}
except the trivial self-domination relation), and will investigate the existence of
tractable cases.

\begin{inlineproof}[Proof of Theorem~\ref{thm:nphardspanner}]
    We reduce from the satisfiability problem SAT.
    Let $F$ be a CNF formula with $n_x$ Boolean variables $x_i$ with $i \in
    [n_x]$ and $n_c$ clauses $C_j$ with $j \in [n_c]$.
    For convenience, define the set $T_i = \{j \mid x_i \text{ appears positively in } C_j\}$,
    and define the set $F_i = \{j \mid x_i \text{ appears negatively in } C_j\}$.
    We will build a regular spanner on variables $v_{i,j}$ for $i\in [n_x]$ and $j\in [n_c]$, together with
    a special variable~$a$.

    We will define two spanners $r_{\valid}$ and $r_{\mask}$, both as regex
    formulas, and will evaluate them on the empty document $d = \epsilon$. 
    Let us first sketch the idea:
    the spanner $r_{\valid}$ will extract 
    one mapping for each possible assignment to the variables of $F$. 
    Each such mapping will encode which clauses get satisfied by which variable in the assignment,
    by assigning spans to the corresponding spanner variables $v_{i,j}$.
    The second spanner $r_{\mask}$ will capture $n_c$ additional mappings which
    will be maximal (thanks to the additional variable~$a$) and will each dominate 
    the mappings captured by $r_{\valid}$ for which the corresponding
    assignment     does \emph{not} satisfy a specific clause of~$F$.
    This will ensure that $F$ is satisfiable if and only if 
    there are strictly more than $n_c$ mappings in the skyline of $r_{\valid}
    \lor r_{\mask}$ on~$d$.
    
    Formally, we define the spanners as regex-formulas, where the dots denote
    concatenation:
    \[
    	\!\!\!\!\!r_{\valid} = \conR_{i \in [n_x]}((\conR_{j \in T_i} v_{i,j}\{\epsilon\}) \lor
        (\conR_{j \in F_i} v_{i,j}\{\epsilon\})) \hfill
        r_{\mask} = a\{\epsilon\}\conR\bigvee_{k \in [n_c]} \conR_{i \in [n_x], j
        \in[n_c]\setminus \{k\}} v_{i,j}\{\epsilon\}.
      \]
    This definition is in polynomial time in the input CNF~$F$.

    Note that the mappings captured by $r_\mask$ are never dominated. First, they do not dominate each other: each of them assigns no $v_{i,k}$ for some $k$. Further, all mappings of $r_\mask$ assign $a$ and all mappings of $r_\valid$ do not, so the latter cannot dominate the former.

    To construct a CNF variable assignment 
    from a mapping $m$ captured by $r_{\valid}$,
    we use the following encoding:
    if the mapping $m$ assigns the span $[0, 0\rangle$ 
    to the spanner variable $v_{i,j}$ then this encodes that 
    the variable $x_i$ appears in the clause $C_j$ 
    and $x_i$ is assigned in a way that satisfies~$C_j$.
    The definition of $r_{\valid}$ ensures that all variables appearing at
    least once will be assigned exactly one truth value among true or false.

        We claim that on $d = \epsilon$, the skyline
        $(\domoper{\mathit{varInc}}(r_\valid \lor r_\mask))(d)$ contains at least $n_c+1$ mappings if and only if $F$ is satisfiable.
    Assume first that $F$ is satisfiable, and let
    $v$ be a satisfying assignment.
    Then there is a corresponding mapping $m$ captured by $r_{\valid}$ encoding~$v$.
    Indeed, as $v$ satisfies all clauses,
    for every clause index $j \in [n_c]$ 
    there is a variable $x_i$ assigned by~$v$ in a way that makes~$C_j$ true,
    i.e., $v_{i,j}$ is assigned.
    Hence $m$ will not be dominated by any mapping captured by $r_{\mask}$. 
    Thus, the skyline of $r_{\valid} \lor r_{\mask}$ must contain some mapping
    captured by~$r_\valid$, namely, either $m$ or some other mapping captured by~$r_\valid$ which dominates~$m$. In all cases, the skyline must have at least $n_c + 1$ mappings.

    Now assume the skyline of $r_{\valid} \lor r_{\mask}$ has at least $n_c+1$ mappings.
    By construction, $r_{\mask}$ captures exactly $n_c$ maximal mappings, so
    there is at least one mapping $m$ in the skyline which is captured by $r_{\valid}$. This mapping $m$ encodes an assignment $v$ of the variables of~$F$. 
    As $m$ is not dominated by any mapping captured by $r_{\mask}$,
    for each clause index $j \in [n_c]$ 
    there must exist a variable index $i \in [n_x]$ 
    such that $v_{i,j}$ is assigned.
    Therefore $v$ is a satisfying assignment of~$F$.
    Overall, we have shown that $F$ is satisfiable if and only if
    $\domoper{\mathit{varInc}}(r_{\valid} \lor
    r_{\mask})$ has at least $n_c+1$ satisfying mappings, which concludes the proof.
    \end{inlineproof}

    We last notice that we can modify Corollary~\ref{cor:varinc} slightly:
    instead of applying to a fixed single-variable domination rule
    that implicitly represents a product, the result also applies when the domination rule is specified explicitly on the entire domain as a regular spanner:

\begin{corollaryrep}
	Assuming $\mathsf{P}\ne \mathsf{NP}$, there is no algorithm for the
        skyline extraction problem which is output polynomial in combined
        complexity even if the domination relation is given as one
        sequential VA 
        (not by implicitly taking the product of
        single-variable sequential VAs).
\end{corollaryrep}

\begin{proof}
  By inspection of the proof of Theorem~\ref{thm:nphardspanner}, we notice that
  hardness already holds for the variable inclusion domination relation on the
  empty document. Fortunately, we can express a domination rule that expresses
  this domination relation on the empty document in polynomial time in the
  domain $X$ of variables. Intuitively, the captured mappings should be those
  where we assign a certain subset $X'\subseteq X$ of the left-hand-side
  variables and assign a superset of $X'$ as the right-hand-side variables.
  Formally, we can take the domination rule:
  \[
    \conR_{x \in X} \left(\epsilon \vee \dagg{x}\{\epsilon\} \vee
    x\{\dagg{x}\{\epsilon\}\}\right)
  \]
  (Notice that this only works on the empty document, and crucially relies on
  the fact that the markers at every position can be ordered in an arbitrary way
  provided that for each variable the opening marker comes before the closing
  marker. If the document were
  non-empty, then we could not perform a concatenation across the successive
  variables.)

  Thus, given a CNF formula, we build a spanner~$P$ as in the proof of
  Theorem~\ref{thm:nphardspanner}, compute the domination rule on the requisite
  set of variables in polynomial time as expressed above, and conclude like in
  the earlier proof and using Lemma~\ref{lem:outputpoly}.
\end{proof}

\section{Intractable and Tractable Domination Rules}
\label{sec:further}

We have shown that the skyline extraction problem is intractable in combined
complexity for regular spanners, and this intractability already holds
in the case of a fixed variable-wise domination rule, namely, the variable inclusion rule. However, this leaves open
the same question for other domination rules, e.g., for the span inclusion rule
-- in particular if we restrict our attention to schema-based spanners, which are
typically better-behaved (e.g., for the complexity of the join and
difference operators~\cite{faginformal}).

In this section, we show that, unfortunately, hardness still holds in that
context. Specifically, we introduce a condition on domination rules, called having
\emph{unboundedly many disjoint strict domination pairs} (UMDSDP).
This condition is clearly satisfied
by our example domination rules (except self-domination). We then show that
UMDSDP is a sufficient condition for intractability: this result re-captures the hardness
of variable inclusion (Theorem~\ref{thm:nphardspanner}) and also shows hardness
for the span inclusion, left-to-right, and span length domination rules.

We then introduce a
restricted class of domination rules, called \emph{variable inclusion-like}, and
show that on this class a variant of the UMDSDP condition in fact
\emph{characterizes} the intractable
cases. In particular, all such domination rules without the condition
enjoy tractable skyline extraction. Last, we study additional examples for
general domination rules, and show that among rules not covered 
by UMDSDP, some are easy and some are hard.

\subparagraph*{The UMDSDP condition.}
To introduce our sufficient condition for intractability of skyline extraction, 
we first define 
\emph{disjoint strict domination pairs}.

\begin{definition}
    For a domination relation $\domrela{}$ on a document~$d$,
    a \emph{strict domination pair} of~$\domrela{}$ on~$d$
    is a pair $(s_1, s_2)$ of spans
    with $s_1 \domrela{} s_2$ and $s_1 \ne s_2$.
    Two such pairs 
  $\{s_1, s_2\}$ and $\{s_1', s_2'\}$
  are \emph{disjoint} if, letting $s$ be the smallest span containing $s_1$ and
  $s_2$, and letting $s'$ be the smallest span containing 
  $s_1'$ and $s_2'$,
  then $s$ and $s'$ are disjoint. Otherwise, they \emph{overlap}.
\end{definition}

When in a domination pair $(s_1, s_2)$ one of the $s_i$ is not assigned, we write this 
with a dash: ``$-$''. For the purpose of disjointness, we say that ``$-$'' is contained in every span.

\begin{example}
  The pairs $(\mspan{1}{3}, \mspan{2}{4})$ and $(\mspan{9}{10},
  \mspan{6}{8})$ are disjoint. The pairs $(\mspan{1}{3}, \mspan{7}{9})$ and
  $(\mspan{4}{6}, \mspan{10}{12})$ overlap (even though all of the constituent
  spans are disjoint). Finally, $(-, \mspan{1}{3})$ and $(\mspan{4}{6}, \mspan{10}{12})$
  are also disjoint.
\end{example}

We can now define the UMDSDP condition, which will be sufficient to show
hardness:

\begin{definition}
    A single-variable domination rule $\domrule{}$ has
    \emph{unboundedly many disjoint strict domination pairs} (UMDSDP)
    if,
    given $n\in \mathbb{N}$, 
    we can compute in time polynomial in~$n$ 
    a document $d \in
    \alphabetstar$ and $n$ strict domination pairs $S_1, \ldots, S_n$ of
    $\domrule{}(d)$ that are pairwise disjoint.
\end{definition}

\begin{example}
  \label{exa:umdsdp}
	$\domrule{self}$ does not satisfy UMDSDP as it has no strict domination pairs.

The span length domination rule satisfies UMDSDP. Indeed,
		for $n \in \mathbb{N}$,
		we can take the word $a^n$
		and the disjoint strict domination pairs 
		$\{(\mspan{i}{i}, \mspan{i}{i+1}) \mid i \in [0, n-1]\}$.
		The same pairs show that UMDSDP holds for the span inclusion
                rule and for the left-to-right rule.
		
		Finally, the variable inclusion domination rule satisfies UMDSDP with the 
		set of pairs $\{(-, \mspan{i}{i+1}) \mid i \in [0, n-1]\}$ (remember the convention
		that \textquotedblleft$-$\textquotedblright~denotes that a variable is not assigned).

	Consider the domination relation $\domrule{start}$ defining the
        domination relation $\domrela{start}$ that contains the
        pairs $\{(\mspan{1}{i}, \mspan{1}{j})\mid i,j\in \mathbb{N}, i\leq j\}$
        plus the trivial pair $(-, -)$ for reflexivity.
        Then $\domrela{start}$ has unboundedly many strict domination pairs, but
        no two of them are disjoint, so the UMDSDP condition is not respected.
        (However, we will still be able to show intractability for this rule; see
        Proposition~\ref{prp:start}.)
\end{example}

We remark that, for single-variable domination rules that are regular, 
the UMDSDP condition holds whenever there \emph{exist}
arbitrarily many pairwise disjoint
strict domination pairs (i.e., in this case we can always efficiently compute them); see Appendix~\ref{sec:decideUMDSDP} for
details.

\begin{toappendix}

\subsection{UMDSDP for Domination Rules Expressed as Sequential VAs}\label{sec:decideUMDSDP}

In this section, we show that, for domination rules expressed as
  sequential VAs,
  the UMDSDP
  condition does not need to assert that the pairs can efficiently be computed,
  simply that they exist. Let us define a weaker condition than UMDSDP
  where we do not impose the efficiency requirement:

\begin{definition}
  We say that a single-variable domination rule $\domrule{}$ is
  \emph{weakly-UMDSDP} if, for any $n \in \NN$, there exists a document $d$ and
  $n$ strict domination pairs $S_1, \ldots, S_n$ of $\domrule{}(d)$ that are
  pairwise disjoint.
\end{definition}

  Thus, a single-variable domination rule $\domrule{}$ that is UMDSDP is in
  particular weakly-UMDSDP. Conversely, if $\domrule{}$ is weakly-UMDSDP and we
  can efficiently compute the document and the strict domination pairs,
  then it is UMDSDP.

  Our precise claim is that, for domination rules expressed as a VA, then being 
  weakly-UMDSDP implies being UMDSDP:

\begin{proposition}
  \label{prop:umdsdpconstr}
  Let $\domrule{}$ be a single-variable domination rule expressed as a sequential
  VA $\calA$. If $\domrule{}$ is weakly-UMDSDP, then it is UMDSDP.
\end{proposition}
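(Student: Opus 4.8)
The plan is to show that a weakly-UMDSDP single-variable domination rule $\domrule{}$ given as a sequential VA $\calA$ is in fact UMDSDP by exhibiting a pumping/concatenation argument that produces the required document and disjoint strict domination pairs in polynomial time. First I would fix one strict domination pair: since $\domrule{}$ is weakly-UMDSDP, there exists some document $d_0$ carrying at least one strict domination pair $(s_1, s_2)$ with $s_1 \neq s_2$, i.e., a mapping $m_0 \in \domrule{}(d_0)$ whose restrictions to $x$ and $\dagg{x}$ give $s_1 \domrela{} s_2$. By an exhaustive search bounded in terms of the number of states of $\calA$ (at most quadratically many candidate spans on documents of bounded length), such a $d_0$ and such a pair can be found of size polynomial in $|\calA|$; crucially this is a \emph{constant} once the rule is fixed, so it costs nothing in the complexity bound.

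The key step is then to concatenate $n$ copies of $d_0$ to form $d \colonequals d_0^n$, and to argue that the pair $(s_1, s_2)$ can be "replayed" inside each of the $n$ blocks, yielding $n$ strict domination pairs that are pairwise disjoint because they live in disjoint factors of~$d$. To justify the replay, I would use the ref-word / run structure of the sequential VA: an accepting run of $\calA$ on $d_0$ witnessing $(s_1,s_2)$ visits the initial state at the start of the block and some state $q$ at the end; by reachability of $q$ from $q_0$ and of a final state from $q$ (using that $\calA$ is trimmed, or just that the original run on $d_0$ has this shape), one can insert before and after the relevant block runs that traverse the other $n-1$ copies of $d_0$ without placing any markers, so that the only markers assigned are those defining $(s_i, s_i')$ shifted by the block offset. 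This gives a mapping in $\domrule{}(d)$ whose two components are $s_1$ and $s_2$ translated into block~$i$, hence a strict domination pair contained in the factor corresponding to that block; distinct blocks give disjoint factors, hence disjoint pairs.

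One subtlety to handle carefully: the domination rule is a spanner on $\{x,\dagg{x}\}$, so a pair where one side is unassigned (the "$-$" case, relevant e.g. for variable inclusion) must also be allowed; here the convention that "$-$" is contained in every span makes disjointness automatic, and the replay argument is identical (we simply never open that variable's markers). A second subtlety is that there may be states along the witnessing run that force markers to be read, preventing a "marker-free detour"; but this is exactly where trimming and sequentiality of $\calA$ help — on a trimmed sequential VA every partial run assigns markers validly, and we can route the detour through states reachable from $q_0$ by a marker-free path, which exist because the original run on $d_0$ before the first marker of the pair and after the last marker of the pair consists only of letter transitions together with harmless marker-free portions. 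I expect the \textbf{main obstacle} to be making this detour construction fully rigorous: one must argue that reading extra copies of $d_0$ can be done by \emph{some} run of $\calA$ without opening/closing $x$ or $\dagg{x}$ outside the chosen block, which may require first slightly normalizing $d_0$ (e.g. replacing it by $d_0' \colonequals u\, d_0\, v$ where $u,v$ are words on which $\calA$ has a marker-free accepting-run segment, or observing that the witnessing run itself decomposes as a marker-free prefix, a core carrying the two markers of the pair, and a marker-free suffix). Once this decomposition is in place, the polynomial-time computation is immediate — output $d_0^n$ (or $(d_0')^n$) and the $n$ translated pairs — so the remaining work is purely the combinatorial justification of the replay, and the bound on $|d_0|$ in terms of $|\calA|$ which follows from a standard pumping argument on the states visited while the pair's markers are open.
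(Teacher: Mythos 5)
There is a genuine gap, and it is exactly at the step you flag as the ``main obstacle'': the claim that the witnessing run for the single pair $(s_1,s_2)$ on $d_0$ can be replayed inside an arbitrary block of $d_0^n$ by routing marker-free detours through the other $n-1$ copies. Nothing guarantees that reading a copy of $d_0$ from the initial state without placing markers brings the automaton back to a state from which the original witnessing run can continue (nor that, after the markers, the automaton can absorb further copies of $d_0$ and still reach a final state). A concrete counterexample is the rule $\domrule{start}$ from Example~\ref{exa:umdsdp}, definable by the regex-formula $\alphabet\, \dagg{x}\{x\{\alphabetstar\}\alphabetstar\}\alphabetstar$: it has a strict domination pair on $d_0 = aa$, but every strict domination pair on every document starts at position~$1$, so no two pairs are ever disjoint and the rule is not UMDSDP. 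Your construction would nevertheless ``produce'' $n$ disjoint translated pairs on $(aa)^n$. The structural reason your argument cannot work is that it consumes only the existence of \emph{one} strict domination pair, whereas the hypothesis of weak-UMDSDP supplies arbitrarily many \emph{pairwise disjoint} pairs --- and that extra strength is genuinely needed.

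The paper's proof is built around precisely this point. It first makes $\calA$ input-output deterministic, takes a document carrying roughly $(k+1)(K+2)$ disjoint pairs (where $k$ is the number of states and $K$ a Ramsey-type bound), and uses the pigeonhole principle to find a single state $q$ that the marker-free run reaches at the right endpoints of $K+2$ of these pairs; this already gives a factorization $d = d_1\cdots d_{K+3}$ where each middle factor loops on $q$ in the ``no markers yet'' mode. It then applies a Ramsey argument on the transition monoid of the automaton with markers erased (Lemma~\ref{lem:ramsey}) to extract a subfactor $v = d_s\cdots d_{t-1}$ that \emph{also} loops on the state $q''$ reached just after the markers of one of the pairs are placed. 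Only a factor that simultaneously loops in the pre-marker state and the post-marker state can be safely pumped, and then $uv^nw$ yields the $n$ computable disjoint pairs. If you want to repair your write-up, the fix is not a normalization of $d_0$ but this two-sided loop extraction, which in turn forces you to start from many disjoint pairs rather than one.
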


  We need for this an additional claim on finite automata (without marker
  transitions), which is somewhat esoteric and somewhat
  difficult to show. We write $\Sigma$ the alphabet, and we say that a
  deterministic finite automaton on~$\Sigma$ is \emph{complete} if its
  transition function is a total function, i.e., for every state and letter $a
  \in \Sigma$ there is only one outgoing transition. This can be imposed in
  linear time without loss of generality by adding a sink state. We then have:

\begin{lemma}
  \label{lem:ramsey}
  For any deterministic complete finite automaton $\calA$ with set of states~$Q$ of
  size~$k$, there is a number $K$ depending only on~$k$ such that the following is true:
  Let $\delta\colon Q \times \Sigma \to Q$ be the (total) transition function
  of~$\calA$, and extend it to the function $\delta^*\colon Q \times \Sigma^*
  \to Q$ on words in the expected way.
  Let $w_1, \ldots, w_{K+1}$ be arbitrary words, and let 
  $q_1, \ldots, q_K \in Q$ be arbitrary states.
  Then there exist indices $1 \leq s <  t \leq K+2$ with $s<t$ 
  such that, letting
  $q \colonequals \delta^*(q_s, w_{s+1} \cdots w_{t-1})$, we have
  $\delta^*(q, w_s \cdots w_{t-1}) = q$.
\end{lemma}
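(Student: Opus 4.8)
The plan is to deduce this from a Ramsey-type argument on the transition monoid of $\calA$. Let $M$ be the monoid of all transformations $Q\to Q$ induced by words over $\Sigma$, a finite monoid with $|M|\le k^k$, where we compose transformations so that $g\circ f$ means ``apply $f$, then $g$''. For $1\le i\le j\le K+2$ write $E_{i,j}\in M$ for the transformation induced by the word $w_iw_{i+1}\cdots w_{j-1}$ (the identity when $i=j$); splitting a concatenation yields the identity $E_{i,\ell}=E_{j,\ell}\circ E_{i,j}$ whenever $i\le j\le \ell$, and $\delta^*(p,w_i\cdots w_{j-1})=E_{i,j}(p)$ for every state $p$.

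Next I would color the complete graph on the vertex set $\{1,\dots,K+1\}$, giving the edge $\{i,j\}$ the color $E_{i,j}$. Choosing $K$ to be the multicolor Ramsey number forcing a monochromatic triangle for $|M|\le k^k$ colors --- a number depending only on $k$ --- produces indices $i_1<i_2<i_3$ in $\{1,\dots,K+1\}$ with $E_{i_1,i_2}=E_{i_2,i_3}=E_{i_1,i_3}=:e$. By the concatenation identity, $e=E_{i_1,i_3}=E_{i_2,i_3}\circ E_{i_1,i_2}=e\circ e$, so $e$ is an idempotent transformation, and hence $e$ fixes every state in its image (if $x=e(y)$ then $e(x)=(e\circ e)(y)=e(y)=x$).

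I would then take $s:=i_1$ and $t:=i_3$, so that $1\le s<t\le K+1\le K+2$ and $s\le K-1$, making $q_s$ well defined. Set $q:=\delta^*(q_s,w_{s+1}\cdots w_{t-1})$. Since $w_{s+1}\cdots w_{t-1}=(w_{i_1+1}\cdots w_{i_2-1})(w_{i_2}\cdots w_{i_3-1})$ and the second block induces $E_{i_2,i_3}=e$, we get $q=e\bigl(\delta^*(q_s,w_{i_1+1}\cdots w_{i_2-1})\bigr)$, so $q$ lies in the image of $e$ and therefore $e(q)=q$. On the other hand $w_s\cdots w_{t-1}=w_{i_1}\cdots w_{i_3-1}$ induces exactly $E_{i_1,i_3}=e$, whence $\delta^*(q,w_s\cdots w_{t-1})=e(q)=q$, which is precisely the required conclusion.

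The crux --- and the step I expect to be the real thing to spot --- is noticing that a monochromatic triangle already suffices: the three equalities $E_{i_1,i_2}=E_{i_2,i_3}=E_{i_1,i_3}$ simultaneously provide an idempotent $e$ and the coincidence that the long word $w_s\cdots w_{t-1}$ and its trailing block $w_{i_2}\cdots w_{i_3-1}$ induce the \emph{same} transformation $e$; this coincidence is exactly what lets one relate $q$ (obtained by reading only $w_{s+1}\cdots w_{t-1}$) to the action of the longer word $w_s\cdots w_{t-1}$. Everything else is routine bookkeeping once the composition order is fixed; the only minor point is that the middle block $w_{i_1+1}\cdots w_{i_2-1}$ may be empty when $i_2=i_1+1$, which is harmless, and we observe that the argument uses only the single state $q_s$ from the given list and does not touch $w_{K+1}$.
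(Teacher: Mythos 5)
Your proof is correct. It shares the paper's overall setup --- associate to each interval word $w_i\cdots w_{j-1}$ its induced transformation of $Q$, colour the edges of the complete graph on $\{1,\dots,K+1\}$ by these transformations, and invoke Ramsey's theorem --- but the way you exploit the monochromatic structure is genuinely different. The paper extracts a monochromatic clique of size $k+2$ and then runs a pigeonhole argument over the $k+1$ intermediate states visited by the run from $q_s$, finding a repeated state $q'$ that the common colour $f$ fixes and then arguing $q=q'$. You instead observe that a monochromatic \emph{triangle} already suffices: the splitting identity $E_{i_1,i_3}=E_{i_2,i_3}\circ E_{i_1,i_2}$ forces the common colour $e$ to be idempotent, and since $q$ is obtained by applying $e$ (the transformation of the trailing block $w_{i_2}\cdots w_{i_3-1}$) to some state, it lies in the image of $e$ and is therefore fixed by $e=E_{i_1,i_3}$. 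This is the standard ``Ramsey plus idempotent'' trick from semigroup theory; it buys you a smaller constant ($K$ on the order of $R(3,k^k)$ rather than $R(k{+}2,k^k)$) and eliminates the pigeonhole step entirely, at no cost in generality --- like the paper's argument, yours uses only the single state $q_s$ from the given list. The index bookkeeping ($s\le K-1$, $t\le K+1$, the possibly empty middle block) is handled correctly.
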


  To prove Lemma~\ref{lem:ramsey}, we will use the following variant of Ramsey's theorem, see e.g.~\cite{graham1991ramsey} for a proof.
\begin{theorem}\label{thm:ramsey}
	For every pair $k,r\in \mathbb{N}$, there is an integer $R(k,r)$, such that for every $r$-coloring of the edges of a complete graph with at least $R(k,r)$ vertices, there is a monochromatic clique of size $k$, i.e., a clique whose edges all have the same color.~
\end{theorem}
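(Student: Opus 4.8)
The plan is to prove the statement by a single induction that handles all colour classes simultaneously, using the classical vertex-deletion argument. Merging two colours into one would be an alternative route, but it changes the target clique size in an awkward way, so the cleanest approach is to prove a stronger \emph{off-diagonal} multicolour version from which the symmetric claim follows at once. For positive integers $k_1,\dots,k_r$, let $R(k_1,\dots,k_r)$ denote the least $N$ such that every $r$-colouring of the edges of the complete graph $K_N$ contains, for some colour $i$, a clique of size $k_i$ all of whose edges have colour $i$. The theorem as stated is then the diagonal case $R(k,r) \df R(k,\dots,k)$ with $k$ repeated $r$ times, so it suffices to prove that $R(k_1,\dots,k_r)$ is finite for all choices of arguments.

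First I would establish the recurrence
\[
  R(k_1,\dots,k_r) \;\le\; 1 + \sum_{i=1}^{r} R(k_1,\dots,k_i-1,\dots,k_r),
\]
which is the heart of the argument. To prove it, set $N$ equal to the right-hand side, fix an arbitrary $r$-colouring of $K_N$, and single out any vertex $v$. The $N-1$ edges incident to $v$ are distributed among the $r$ colours, so by the pigeonhole principle there is a colour $i$ for which at least $R(k_1,\dots,k_i-1,\dots,k_r)$ of these edges have colour $i$; let $S$ be the corresponding set of neighbours of $v$. Applying the inductive hypothesis to the colouring induced on $S$, either $S$ already contains a colour-$j$ clique of size $k_j$ for some $j$, in which case we are done, or it contains a colour-$i$ clique of size $k_i-1$; in the latter case every edge from $v$ into $S$ has colour $i$, so adjoining $v$ produces a colour-$i$ clique of size $k_i$.

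It then remains to supply base cases and convert the recurrence into a finite bound. The natural base case is that $R(k_1,\dots,k_r)=1$ whenever some $k_i=1$, since a single vertex is vacuously a monochromatic clique of size one in every colour; this terminates the recursion along each coordinate. An induction on the sum $k_1+\dots+k_r$ then yields finiteness of every $R(k_1,\dots,k_r)$, because each term on the right-hand side of the recurrence has a strictly smaller sum, and specialising to the diagonal produces an explicit (far from optimal) bound on $R(k,r)$. The only point requiring care is the bookkeeping of the multi-index induction and checking that the pigeonhole step always lands in a coordinate that is subsequently decremented; there is no genuine obstacle beyond this routine organisation, which is precisely why the result is ordinarily invoked as a black box.
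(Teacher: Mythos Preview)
Your argument is correct and is essentially the standard inductive proof of the multicolour Ramsey theorem. Note, however, that the paper does not supply its own proof of this statement: it is quoted as a black-box result with a reference to \cite{graham1991ramsey}, so there is no in-paper proof to compare against. Your write-up is precisely the kind of proof one finds in that reference.

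One minor point of phrasing: in the case split after applying the inductive hypothesis to~$S$, when you write ``a colour-$j$ clique of size $k_j$ for some $j$'' you implicitly mean $j\neq i$, since for $j=i$ the target size in the decremented tuple is $k_i-1$ rather than $k_i$; you handle that case separately in the next clause. This is clear from context, but it would be slightly cleaner to make the restriction $j\neq i$ explicit.
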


\begin{proof}[Proof of Lemma~\ref{lem:ramsey}]
  Let $C$ be the set of functions from $Q$ to $Q$. Observe that $C$ is finite
  and $|C| = k^k$ depends only on $k$.
  To every word $w \in \Sigma^*$,
  we associate a function $f_w$ from $C$ defined in the following way: for
  each state $q \in Q$, we set $f_w(q) \colonequals \delta^*(q, w)$, i.e., the state that we reach when we start
  the automaton at~$q$ and read~$w$\footnote{We remark that this is the \emph{transition monoid}
  from algebraic automata theory.}.
  Let us choose $K> R(k+2, |C|)$ according to Theorem~\ref{thm:ramsey}.

  Consider now input words $w_1, \ldots, w_{K+1}$ as in the statement 
  of the lemma. We define the
  complete graph on $K+1$ vertices, each edge $\{i, j\}$ corresponding to the
  element $f_{w_i \cdots w_{j-1}}$ of~$C$ for the word $w_i \cdots w_{j-1}$.
  By the choice of $K$ and the statement of Theorem~\ref{thm:ramsey}, this graph has a 
  monochromatic $(k+2)$-clique, so there is a color $f \in C$ and positions
  $i_1 < \cdots < i_{k+2}$ such that for each $1 \leq a < b \leq k+2$,
  letting $w_{a,b} = w_{i_a} \cdots w_{i_b-1}$, then we have $f_{w_{a,b}} = f$.
  To prove the statement of the claim, we take $s
  \colonequals i_1$ and $t \colonequals i_{k+2}$ and define accordingly $q
  \colonequals \delta^*(q_s, w_{s+1} \cdots w_{t-1})$.

  Let us show the equality in the statement of the lemma.
  Consider the run~$\rho$
  where the automaton starts at state $q_s$, reads the word $w_{s+1} \cdots
  w_{t-1}$, and reaches state~$q$ by definition.
  Consider the states reached in this run just before reading the $k+1$ factors
  $w_{i_2}, \ldots, w_{i_{k+2}}$.
  By the pigeonhole principle, the same
  state~$q'$ is repeated twice, meaning there is some state~$q'$ such that we go
  from state~$q'$ to state~$q'$ when reading $w_{a,b} = w_{i_a} \cdots
  w_{i_b-1}$, i.e., $\delta^*(q', w_{a,b}) = q'$ for some
  $2 \leq a < b \leq k+2$. Now, our use of Theorem~\ref{thm:ramsey} ensures that,
  for any $1 \leq a' < b' \leq k+2$, we have $\delta^*(q', w_{a',b'}) = q'$.
  In particular, by considering the factor
  $w_{b, k+2} = w_{i_b} \cdots w_{i_{k+2}-1}$, we have the following (*): $\delta^*(q', w_{b,k+2})
  = q'$.

  Now,
  the run~$\rho$ starting at $q_s$
  and reading $w_{s+1}\cdots w_{t-1}$ does the following: first it reads $w_{s+1}
  \cdots w_{i_b-1}$
  and ends up in state~$q'$, then it reads $w_{b,k+2}$ and ends up in
  state~$q'$ by (*) above. 
  But by definition,
  on $w_{s+1} \cdots w_{t-1}$ starting in $q_s$, the automaton reaches the state~$q$, so
  it follows that $q = q'$. Considering now the word $w_s \cdots w_{t-1} =
  w_{1,k+2}$, 
  we know again that $\delta^*(q', w_{1,k+2}) = q'$, and thus in fact
  $\delta^*(q, w_{1,k+2}) = q$, 
   which is what we needed to show.
\end{proof}

We now show the main result of this section.

  \begin{proof}[Proof of Proposition~\ref{prop:umdsdpconstr}]

    We assume without loss of generality that~$\calA$ is \emph{input-output
    deterministic},
    in the sense that for every mapping in the output there is
    exactly one accepting run that witnesses it and, in that run in every state, there is only one possible continuation. It is shown in Theorem~3.1 of~\cite{DBLP:conf/pods/FlorenzanoRUVV18} how to construct VAs of this type from a general VA\footnote{We remark that strictly speaking~\cite{DBLP:conf/pods/FlorenzanoRUVV18} shows a transformation into so-called
    \emph{deterministic extended VAs}, but those can trivially be translated into VAs in our sense
    while preserving input-output determinism.}.
    We remark that this transformation might increase the size of the automaton exponentially
    and thus in particular takes time exponential in $|\calA|$. However, 
    this is not a problem: the claim that the domination rule is
    UMDSDP is about tractably computing a witnessing document and strict
    domination pairs from an input number~$n$, and the size of the VA defining
    the domination rule is a constant. Similarly, it will not be a problem that some of
    the arguments below are not constructive: we only show that $d$ exists and that there is
    a polynomial time algorithm to compute it but do not actually show how the algorithm
    can be inferred from $\calA$.

    Let $Q$ be the set of states of~$\calA$ and let $k$ be the size of $Q$. Let $K$ be the
    number given by Lemma~\ref{lem:ramsey} (the lemma applies to automata not
    VAs, but we will explain later how it is used in our context). As the rule
    defined by~$\calA$ is weakly-UMDSDP, there
    is a document $d$ featuring
    $(k+1) \times (K+2)+3$ strict domination pairs that
    are pairwise disjoint:
    we call them $S_1,
    \ldots, S_{(k+1)\times (K+2)+3}$ in increasing order of the left endpoints of the minimal size spans
    which cover them; note that these spans are pairwise disjoint.

    Let us consider the indices $i_1, \ldots, i_{(k+1)\times (K+2)}$ of~$d$
    which are the right endpoints of the spans covering the pairs $S_1, \ldots,
    S_{(k+1)\times(K+2)}$, i.e., for each $1 \leq j \leq (k+1)\times (K+2)$,
    the index $i_j$ is the position of the letter in~$d$ which is read by the
    first letter transition after we are done reading the last marker of
    pair~$S_j$. Let us then define
    $q_j$ 
    to be the state reached just before reading the letter at position $i_j$ in
    the witnessing run $\rho$ for the last pair
    $S_{(K+2)\times(k+1)+3}$. 
    As $\calA$ is input-output deterministic, we know that the run does not read
    any variable markers until it reaches the state
    $q_{(k+1)\times (K+2)}$ just before reading the letter at position
    $i_{(k+1) \times (K+2)}$;
    indeed all markers of~$\rho$ are
    assigned while reading the strict domination pair $S_{(K+2)\times(k+1)+3}$,
    and we know that the covering span of this strict domination pair starts
    after the right endpoint $i_{(k+1) \times (K+2)}$ of the domination pair
    $S_{(k+1)\times(K+2)}$. Here, we use the fact that of the three domination
    pairs $S_{(K+2)\times(k+1)+1}$, $S_{(K+2)\times(k+1)+2}$,
    $S_{(K+2)\times(k+1)+3}$, there are at most two strict pairs which assign an
    empty span, so the last pair must indeed be starting at a position strictly after
    $i_{(k+1) \times (K+2)}$. 
    Thus, thanks to input-output-determinism, $q_j$ is in fact the unique state
    that $\calA$ can reach by reading the document~$d$ until position $i_j$
    (excluded) and not assigning any markers.

    By the pigeonhole principle,
    there is a state $q$ and $K+2$ indices $j_1, \ldots, j_{K+2}$ such that for each $1 \leq a
    \leq K+2$ the state $q_{j_a}$ is $q$.

    Now, decompose $d = d_1 \ldots d_{K+3}$ accordingly,
    i.e., $d_1$ is the prefix of~$d$ until position~$j_1$ (excluded),
    for each $2 \leq a \leq K+2$ then $d_a$ is the factor of~$d$ from position
    $j_{a-1}$ (included) to $j_{a}$ (excluded), and $d_{K+3}$ is the suffix of~$d$
    starting in position $j_{K+2}$ (included). Our choice of factors ensures that, when
    it does not assign any markers, then $\calA$ goes from the initial state
    to~$q$ when reading $d_1$, and (*) goes from~$q$ to~$q$ when reading~$d_a$ for any
    $2 \leq a \leq K+2$. (It can also finish the run from~$q$ by
    reading~$d_{K+3}$, assigning markers for the pair $S_{(K+2)\times(k+1)+3}$;
    but we will not use this.)

    Let us now consider, for each $2 \leq a \leq K+2$, the run $\rho_a$ that
    assigns the domination pair $S_{j_a-1}$. The run $\rho_a$ (1.) goes from the
    initial state to state~$q$ while reading $d_1 \cdots d_{a-1}$, then (2.) it reads
    $d_a$ whose last pair is $S_{j_a-1}$, and assigns markers while reading that
    pair. Let $q'_a$ be the state reached just before the following letter
    transition, i.e., the letter transition that reads the letter at position
    $i_{j_a}$. We know (3.) that the automaton can finish the run, i.e., $\rho_a$
    continues reading from state $q'_a$ the document $d_{a+1} \cdots d_{K+3}$
    without assigning any markers, and reaches an accepting state.

    Hence, let us define $\calA'$ the deterministic complete finite automaton
    obtained from~$\calA$ simply by dropping all marker transitions, and let us
    use Lemma~\ref{lem:ramsey} on~$\calA'$ (up to offsetting all indices by~$1$). We know that, for 
    a choice of $2 \leq s <
    t \leq K+2$, 
    considering (**) the state $q''$ (called~$q$ in the statement of
    Lemma~\ref{lem:ramsey}) reached by reading $d_{s+1} \cdots d_{t-1}$ from
    the state $q'_s$ (intuitively, just after assigning the markers), then  (***)
    reading $d_s \cdots d_{t-1}$ in~$\calA'$ goes from~$q''$ to~$q''$. Thus, coming
    back to~$\calA$, we know the following:
    \begin{enumerate}
      \item The run $\rho_s$ goes from the initial state
        to state~$q$ while reading $d_1 \cdots d_{s-1}$ (from~(1.) above),
      \item The run~$\rho_s$ then assigns the markers of
    the pair $S_{j_s-1}$ while reading $d_s \cdots d_{t-1}$ and goes to
        state~$q''$ (from~(2.) above on~$d_s$ and then (**) above on $d_{s+1}
        \cdots d_{t-1}$),%
  \item The run~$\rho_s$ then
    finishes without assigning any variables (by (3.) above).
  \item The VA $\calA$, when starting at state~$q''$ and reading $d_s \cdots
    d_{t-1}$ without assigning any markers, goes to
        state~$q''$ (by~(***) above);
  \item The VA $\calA$, when starting at state~$q$ and reading $d_s \cdots
    d_{t-1}$ without assigning any markers, goes to state~$q$ (by~(*)
        above).
    \end{enumerate}

    We are now finally ready to define a decomposition of~$d$ as $d = u v w$ where $v$
    consists of that subword $d_s \cdots d_{t-1}$ and $u$ and $w$ are respectively the prefix and
    suffix that precede and follow it.

    We claim that, for any $n \in \NN$, the VA $\calA$ on the document
    $uv^nw$ has $n$ strict domination pairs that are disjoint and can
    be easily computed. More specifically, there
    is one of the domination pairs in the factor $v^n$ in every copy of $v$ at the positions
    corresponding to the endpoints of the pair $S_{j_s-1}$.
    If this is true, then
    it establishes that the domination rule of $\calA$ is UMDSDP: given an
    integer $\NN$, we can compute the document $u v^n w$ 
    in polynomial time in the value of~$n$. Moreover, by some arithmetic on the 
    lengths of $u$ and $v$ and the position of the considered pair in $v$, we
    can easily compute the $n$ pairs in polynomial time.

    To see why we have the $n$ pairs, pick $n \in \NN$ and choose one index $1 \leq i
    \leq n$ and show why we can obtain the domination pair. By construction,
    when $\calA$ reads the prefix~$u$ without assigning any marker, then it goes
    to state~$q$ (point 1 above). Further, when it reads~$v$ without assigning any markers, it
    also goes to~$q$ (point 5 above).
    Hence, the same is true when reading $v^{\ell-1}$ for any $\ell\in [n]$. Now, we know that we can
    read~$v$ and assign markers corresponding to a strict domination pair within
    that factor, and go to state~$q''$ (point 2 above). Further,
    reading $v$ without assigning any markers goes from state~$q''$ to
    state~$q''$ (point 4 above), allowing us to continue the run without taking any marker transitions until we have read the last copy of $v$; and finally we
    can read~$w$ without
    assigning any markers and complete the run  to an accepting run (point 3 above). 
    This concludes the proof.
  \end{proof}
\end{toappendix}

\subparagraph*{UMDSDP implies hardness.}
We now show that the UMDSDP condition implies that skyline extraction is hard.
The proof is a variant of the one for variable
inclusion:

\begin{toappendix}
	\subsection{Hardness Proofs with the UMDSDP Condition}
\end{toappendix}

\begin{theoremrep}
\label{thm:maindicho}
    Let $\domrule{}$ be a single-variable 
    domination rule satisfying UMDSDP.
    The skyline extraction problem for~$\domrule{}$, given a 
    sequential VA $\calA$ and
    a document $d \in \alphabetstar$,
    is not output-polynomial unless $\mathsf{P} = \mathsf{NP}$.
\end{theoremrep}

\begin{proof}
	As in the proof of Theorem~\ref{thm:nphardspanner}, we again reduce from
        the satisfiability problem for CNF-formulas.

    Let $\domrule{}$ be a variable-wise domination rule satisfying UMDSDP.
    Let $F$ be a CNF formula, with $n_v$ Boolean variables $x_i$ and $n_c$ clauses $C_j$.
    We will show how to reduce satisfiability of $F$ to computing the skyline
    with respect to the rule $\domrule{}$ of a suitably constructed regex formula.
    To this end, we will construct a spanner on the variables $v_{i,j}$ for $i\in [n_v]$ and $j\in [n_c]$.
    We define the sets $T_i \colonequals \{j \mid $ $x_i$ appears positively in $C_j\}$,
    and $F_i \colonequals \{j \mid $ $x_i$ appears negatively in $C_j\}$.

    The idea of the proof is similar to that of Theorem~\ref{thm:nphardspanner} which we first sketch here again:
    We define two spanners $r_\valid$ and $r_\mask$.
    The spanner $r_\valid$ will again capture 
    one mapping $m$ for each possible CNF variable assignment $a$.  
    Each mapping $m$ will encode which clauses $C_j$ get satisfied by which variable $x_i$ 
    under the assignment $a$.
    The second spanner $r_\mask$ will again capture $n_c$ other mappings $m_j$
    each one dominating all mappings $m$ captured by $r_\valid$ encoding an assignment 
    which does not satisfy clause $C_j$.
    Together the mappings $m_j$ dominate all mappings captured by 
    $r_\valid$ encoding non-satisfying assignments.
    Then the CNF formula $F$ is satisfiable if and only if 
    there is a mapping $m$ captured by $r_\valid$ in the skyline of $r_\valid
    \lor r_\mask$.
    
	We now give the details of the construction.
    We first compute the input document and spans we will use.
    Since $\domrule{}$ satisfies UMDSDP,
    we can compute in polynomial time in~$n_v$ a document $d \in \alphabetstar$ 
    and 
    $n_v$ pairwise disjoint strict domination pairs $(s_1^1, s^1_2), (s_1^2, s^2_2), \ldots, (s^{n_v}_{1}, s_2^{n_v})$ in $\domrule{}(d)$.
    We cut the document $d$ into $n_v$ disjoint parts $d_1, \ldots, d_{n_v}$, such that
    for each $i\in [n_v]$ the part $d_i$ contains the strict domination pair $(s_1^i,s_2^i)$. This is possible, because by definition the $(s_1^i,s_2^i)$ do not overlap.
    Let $s^i$ be the span which gives the limits of $d_i$, i.e., $d_{s^i}= d_i$.
    Note that the spans $s_1^i$ and $s_2^i$, if defined, are included in the span $s^i$ for all $i\in [n_v]$.

    We now construct the spanners, starting with $r_\valid$.
    Given a variable index $i \in [n_v]$, 
    a set of clause indices $I \subseteq [n_c]$,
    and an index~$b \in \{1, 2\}$,
    we first define the spanner $r^i(I,b)$. The spanner will read the
    word~$d_i$, and will assign the variables $v_{i,j}$ for $j \in I$ to
    according to the value~$s_b^i$ in the domination pair. Formally, if $s_b^i$
    is not defined, then we read $d_i$ without assigning anything:
    \begin{align*}
      r^i(I, b) \colonequals d_i
    \end{align*} 
    and if $s_b^i$ is defined then writing $s_b^i = [p,q\rangle$, writing $[p',q'\rangle$ for the
    span that defines $d_i$, we have:
    \begin{align*}
      r^i(I, b) \colonequals d_{\mspan{p'}{p}} \left({\Join_{j \in I}} v_{i,j}\{
        d_{\mspan{p}{q}} \} \right)
      d_{\mspan{q}{q'}})
    \end{align*} 
    Note that the join here is easy to express for each concrete $s^i_b$ since all variables $v_{i,j}$ with $j\in I$ are mapped to the same span.

    Now, to construct $r_\valid$,
    we define the spanners $r_\valid^i$ for $i\in [n_v]$ 
    which, when reading $d_i$, captures two mappings $m_t, m_f$ each.
    The mapping $m_t$ (resp., $m_f$) encodes the clauses $C_j$
    satisfied when $x_i$ is assigned to true (resp., false) as follows:
    \begin{align*}
      r_{\valid}^i \colonequals ( r^i(T_i, 2) \Join r^i([n_c] \setminus T_i, 1))
    \lor ( r^i(F_i, 2) \Join r^i([n_c] \setminus F_i, 1)).
    \end{align*}
	Again, the construction is easy for each concrete situation, since all variables concerned are only mapped to two different spans $s^i_1$ and $s^i_2$, so $r^i_\valid$ can be easily expressed as a VA and also be computed in polynomial time.
    We take the concatenation of the spanners $r_\valid^i$ and obtain
    $r_\valid$ which reads the entire document $d$ and encodes 
    all possible assignments of the Boolean variables, so
    \begin{align*}
      r_\valid \colonequals r_\valid^1 \cdot \ldots \cdot  r_\valid^{[n_v]}.
    \end{align*}
    Since all the $r^i_\valid$ can be constructed in polynomial time and concatenation is easy to express for VAs, $r_\valid$ is a regular spanner whose VA can be computed in polynomial time.

    To construct a CNF variable assignment 
from a mapping $m$ captured by $r_\valid$ 
we will use the following encoding:
If the mapping $m$ assigns the variable~$v_{i,j}$ following $s_2^i$ (i.e., to
this span if it is a span, or not at all if it is undefined), then
the variable $x_i$ appears in the clause $C_j$ 
and $x_i$ is assigned so as to make that clause $C_j$ true.
Thanks to the way in which the $r_\valid^i$ are constructed,
the mapping gives a consistent choice of how to assign all the occurrence of any
given variable.

    Now let us construct $r_\mask$.
    We define the spanners $r_\mask^j$ for $j \in [n_c]$, 
    which read $d$ and capture one mapping $m^j$ each.
    The mapping $m^j$ will be assigned following the $s_2^i$ almost everywhere
    except for the variables $v_{i,k}$ where $k = j$ where it will be assigned
    following $s_1^i$.
    Recall that $s_2^i$ dominates $s_1^i$.
    \begin{align*}
      r_\mask^j \colonequals {\cdot_{i \in [n_v]}} (r^1(j, s_1^1) \Join r^1([n_c] \setminus j, s_2^1) ) \cdot \ldots \cdot (r^{[n_v]}(j, s_1^{[n_v]}) \Join r^{[n_v]}([n_c] \setminus j, s_2^{[n_v]}) )
    \end{align*}
     	where, as before, $\cdot$ denotes concatenations. As above, the $r^i(j, s_1^i)$ and $r^i([n_c], s_2^i)$ can be turned into a VA in polynomial time, so a VA for $r^j_\mask$ can be constructed in polynomial time as well.
    We take the union of the spanner $r_\mask^j$ and obtain 
    \begin{align*}
    r_\mask \colonequals \bigcup_{j\in [n_c]} r_\mask^j
	\end{align*}
which reads the entire document $d$. 
$r_\mask$ captures the $n_c$ mappings which together dominate all mappings 
captured by $r_\valid$ encoding a non-satisfying assignment.    Again, this can be efficiently encoded as a VA, and so $r_\valid \lor r_\mask$ can also be encoded in a VA in polynomial time.

    Let us now prove that the reduction is correct, 
    that is that $F$ is satisfiable if and only if 
    the skyline of $r_\valid \lor r_\mask$ has a mapping $m$ captured by
    $r_\valid$.
    Assume first that $F$ is satisfiable.
    Let $v$ be a satisfying assignment to the $x_i$.
    Then there is a corresponding mapping captured by $r_\valid$ encoding this satisfying assignment. 
    As $v$ satisfies all clauses,
    for every clause index $j \in [n_c]$ 
    there is a CNF variable index $i \in [n_v]$ 
    such that $v_{i,j}$ is assigned following $s_2^i$.
    Hence $m$ is not dominated by any mapping captured by $r_\mask$.
    This means that the skyline contains a mapping captured by~$r_\valid$: either $m$ or some other mapping which dominates $m$ and which by transitivity must also be a mapping captured by~$r_\valid$.

    Now assume a mapping $m$ captured by $r_\valid$ 
    is in the skyline of $r_\valid \lor r_\mask$.
    As $m$ is not dominated by any mapping captured by $r_\mask$,
    for all clause indices $j \in [n_c]$ 
    there must exists a variable index $i \in [n_v]$ 
    such that $v_{i,j}$ is assigned following $s_2^i$.
    Therefore the CNF assignment encoded by $m$ satisfies all clauses
    and thus it witnesses that $F$ is satisfiable.
    Overall, we have that $F$ is satisfiable if and only if 
    a mapping $m$ captured by $r_\valid$ is in the skyline of
    $r_\valid \lor r_\mask$.

    To check if a mapping $m$ in the skyline of $r_\valid \lor r_\mask$
    is captured by $r_\valid$ it is sufficient to check 
    that all clauses indices $j\in [n_c]$ have some variable index $i\in [n_v]$ such that 
    $v_{i,j}$ is assigned following $s_2^i$, which can be checked in linear time. 
     
     We complete the proof by showing that, if we can solve in output-polynomial time the skyline extraction problem for $\domrule{}$ on the sequential VA and document defined in the reduction (as is assumed in the theorem statement), then we have $\mathsf{P}= \mathsf{NP}$, concluding the proof. 
     This is a variant of Lemma~\ref{lem:outputpoly}, except that, now, we do not know the exact number of captured mappings (as, e.g., some mappings captured by $r_\mask$ may dominate each other).

     Assume that such an algorithm exists and it solves the problem in time $p(|P|+|d|+|\mathit{out}|)$ where $\mathit{out}$ is the output. We solve SAT in polynomial time as follows: given a CNF $F$, construct $P\colonequals r_{\valid}\lor r_{\mask}$
     and $d$ as above. Run the algorithm for skyline computation for $p(|P|+|d|+ n_c |m|)$ steps where $|m|$ is the encoding size of a single mapping.
     If this terminates in the allotted time, check if the output contains a mapping $m$ encoding a satisfying assignment and answer the satisfiability of $F$ accordingly. If the algorithm does not terminate, answer that $F$ is satisfiable. This is correct, because, similarly to the reasoning for Lemma~\ref{lem:outputpoly}, in that case the skyline contains at least $n_c+1$ mappings out of which only $n_c$ can be captured by $r_{\mask}$. So there is at least one mapping $m$ captured by $r_{\valid}$ in the skyline and it follows that $F$ is satisfiable.
\end{proof}

This implies the hardness of the other variable-wise domination rules
presented earlier, completing Corollary~\ref{cor:varinc}. Note that these
rules are schema-based spanners, and we can also notice that hardness
already holds if the input spanner is functional, i.e., schema-based:

\begin{corollaryrep}
  \label{cor:schemabased}
  There is no algorithm for the skyline extraction problem with respect to the
  span inclusion domination rule or the left-to-right domination rule or the span length domination rule which is
  output-polynomial in combined complexity, unless $\mathsf{P} = \mathsf{NP}$.
  This holds even if the input VA is required to be functional.
\end{corollaryrep}

\begin{proof}
  We have shown in Example~\ref{exa:umdsdp} that span inclusion and span length domination rules satisfy UMDSDP, so the result immediately follows from Theorem~\ref{thm:maindicho}.

  For the second part of the claim, if the input VA is required to be functional, then we can conclude by inspection of the proof of the theorem. Specifically, we know that the strict domination pairs for the span inclusion and span length domination relations always assign both the left-hand-side and right-hand-side span. Now, the spanners defined in the proof always assign all variables of the form~$v_{i,j}$ in the mappings that they capture: this can be seen on the $r_\valid^i$ for each value of~$i$ and on the $r_\valid^j$ for each~$j$. Thus, the constructed VAs are functional, and so hardness holds even in that case.
\end{proof}

\begin{toappendix}
  \subsection{Proofs for Variable Inclusion-Like Rules}
\end{toappendix}
\subparagraph*{Variable Inclusion-Like Rules.}
We have seen that the UMDSDP condition is a sufficient condition for skyline
extraction to be hard, but this leaves open the question of whether it is
necessary. We will now focus on a fragment of domination rules which we call \emph{variable inclusion-like domination rules}, where this is the
case. Formally, we
say that a domination relation $\domrela{}$ is \emph{variable inclusion-like} if
for all strict domination pairs $(m_1, m_2)$ we have for all $x\in \Vars$ that if $m_1(x)$
is defined, then $m_2(x)$ is defined as well and $m_1(x)= m_2(x)$.

In contrast with the variable inclusion rule
that
contains all such pairs $(m_1, m_2)$, we only require that a subset of them hold
in $\domrela{}$.
We will define variable inclusion-like domination rules in a variable-wise
fashion: for single-variable variable inclusion-like rules, the strict
domination pairs are necessarily of the form $(-, s)$ for a span~$s$. In other
words, a variable-wise inclusion-like domination rule is 
defined by indicating, on each document, which spans~$s$ can appear as the
right-hand-side of such a pair. Further,
for variable inclusion-like rules, two strict domination pairs are
disjoint if and only if their right-hand-sides are.

We can show that, on variable inclusion-like domination rules,
we have a dichotomy on a variant of the UMDSDP condition:
\begin{theoremrep}\label{thm:dichovarinclike}
  Let $\domrule{}$ be a single-variable domination rule which is variable
  inclusion-like. If $\domrule{}$ satisfies the UMDSDP condition or accepts a
  pair of the form $(-, \mspan{i}{i})$ on some document, then the
  skyline extraction problem for~$\domrule{}$, given a sequential VA and
  document, is not output-polynomial in combined complexity unless $\mathsf{P} = \mathsf{NP}$.
  Otherwise, the skyline extraction problem for~$\domrule{}$ is
  output-polynomial in combined complexity.
\end{theoremrep}

The lower bound of the dichotomy follows from
Theorem~\ref{thm:maindicho}, plus the observation that a single pair of the form $(-,
\mspan{i}{i})$ is sufficient to show hardness:

\begin{toappendix}
  We prove this result in the rest of the appendix. We first show the lemma to exclude the case where we include a pair featuring an empty span:
\end{toappendix}

\begin{lemmarep}\label{lem:sizezero}
	Let $\domrule{}$ be a single-variable domination rule that accepts on some document a pair $(-, \mspan{i}{i})$.
        Then the skyline extraction problem for~$\domrule{}$, given a sequential
        VA and document, is not 
        output-polynomial in combined complexity unless $\mathsf{P} = \mathsf{NP}$.
\end{lemmarep}

\begin{proof}
  The proof of this result is an easy variation of Theorem~\ref{thm:nphardspanner}. Let $d$ be a document and $(-, \mspan{i}{i})$ be the accepted pair: let $d_1$ and $d_2$ be respectively the prefix and suffix of~$d$ before position~$i$, so that $d = d_1 d_2$.
  We do the same reduction as in Theorem~\ref{thm:nphardspanner}, but running it on the document $d$, and defining $r_\valid$ and $r_\mask$ by adding the prefix~$d_1$ and the suffix~$d_2$. Given that the reduction can only assign the span $\mspan{i}{i}$ to variables, the rest of the argument is unchanged.
\end{proof}

Hence, the interesting result in Theorem~\ref{thm:dichovarinclike} is the upper
bound.
We show it in the appendix by observing that the set of right-hand-sides of
strict domination pairs for variable inclusion-like rules that do not satisfy
UMDSDP have bounded hitting
set number, and showing that this implies tractability.

\begin{toappendix}

  Hence, in the rest of this section, we show Theorem~\ref{thm:dichovarinclike} in the case where no such pair is accepted.

  We start with some easy observations on sequential VAs. 
Let $\mathcal A$ be a VA in variables $X$. Let $d$ be an input to $\mathcal A$. At any moment in a run of $\mathcal A$ on $d$, we say that $x\in X$ is \emph{open} if $\vop{x}$ has been read but $\vcl{x}$ has not been read.

We remark that a similar concept is used under the name \emph{variable configurations} in~\cite{DBLP:conf/pods/FreydenbergerKP18} in the context of enumeration algorithms for so-called functional VAs. We here only need the following easy observation:

\begin{observation}\label{obs:openvariables}
	Let $s$ be a state of $\mathcal A$.
        If there is an accepting run of $\mathcal A$ on an input $d$ such that a variable $x$ is open at a moment in the run in which the automaton $\mathcal A$ is in state $s$, then for any accepting run over any document, at any point where the automaton reaches state~$s$, then the variable $x$ is open.
\end{observation}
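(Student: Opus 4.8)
The plan is to prove \cref{obs:openvariables} by a cut-and-paste argument on accepting runs, relying only on the assumption that $\mathcal A$ is sequential, i.e., that all its accepting runs are valid. I would first restate the claim in contrapositive form: suppose, towards a contradiction, that there are accepting runs $\rho$ on a document $d$ and $\rho'$ on a document $d'$ (with the possibility that $\rho=\rho'$ and $d=d'$), together with a position $k$ in $\rho$ and a position $k'$ in $\rho'$ at both of which $\mathcal A$ is in state $s$, such that $x$ is open at position $k$ of $\rho$ but $x$ is \emph{not} open at position $k'$ of $\rho'$. The goal is to build from these two runs an accepting run that is not valid, which contradicts sequentiality.

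I would then split into two cases according to the reason $x$ fails to be open at position $k'$ of $\rho'$. If $x$ has not yet been opened at position $k'$, then the prefix of $\rho'$ up to $k'$ contains neither $\vop{x}$ nor (by validity of $\rho'$) $\vcl{x}$; I splice this prefix with the suffix of $\rho$ starting at position $k$. This yields a genuine run, since the two pieces meet at state $s$, and it is accepting because $\rho$ is. But in $\rho$ the marker $\vop{x}$ occurs exactly once, before position $k$, while $\vcl{x}$ occurs (again by validity of $\rho$) exactly once, after position $k$; hence the spliced run contains $\vcl{x}$ with no preceding $\vop{x}$, so it is not valid---contradiction. If instead $x$ has already been closed at position $k'$ of $\rho'$, then the prefix of $\rho'$ up to $k'$ contains both $\vop{x}$ and $\vcl{x}$ and, by validity of $\rho'$, the suffix from $k'$ contains neither; I splice the prefix of $\rho$ up to $k$ (which contains $\vop{x}$ once and $\vcl{x}$ not at all) with that suffix of $\rho'$. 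The result is again an accepting run, but now $\vop{x}$ occurs with no matching $\vcl{x}$, again contradicting sequentiality.

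I expect this argument to be entirely routine; the only point requiring a little care is the bookkeeping of exactly which of the two markers $\vop{x}$ and $\vcl{x}$ appear in each prefix and each suffix, and this is precisely where validity of the two original runs is invoked (to ensure, e.g., that each marker occurs at most once and that $\vop{x}$ precedes $\vcl{x}$). No trimming or other hypothesis on $\mathcal A$ is needed. Finally, taking $\rho = \rho'$ in the above also delivers the ``at any point'' phrasing of the statement: inside a single accepting run, $x$ cannot be open at one visit of $s$ and not open at another.
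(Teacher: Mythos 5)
Your proof is correct and uses essentially the same argument as the paper: splice two accepting runs at the common state $s$ and derive an accepting run that is not valid, contradicting sequentiality. The only cosmetic difference is that the paper performs a single splice (prefix of the run where $x$ is not open, followed by the suffix of the run where $x$ is open) and observes that the result is invalid in both of your sub-cases, whereas you use a different splice for the ``already closed'' case; both variants are fine.
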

\begin{proof}
	By way of contradiction, assume this were not the case. Let $\rho_1$ be an accepting run on the input~$d$ on which $x$ is open at some point when reaching the state $s$: let $\rho_1'$ be the suffix of~$\rho_1$ after that point. Let $\rho_2$ be an accepting run on a document $d_2$ in which $x$ is not open when reaching the state $s$ at some point: let $\rho_2'$ be the prefix of~$\rho_2$ until that point.
        Let us built $\rho' = \rho_2' \rho_1'$: this gives an accepting run of~$\calA$ on some document. Now, as $\calA$ is sequential, we know that $\rho'$  will contain the closing marker for~$x$ and no opening marker for~$x$. As $\rho_2'$ either contained an opening and closing marker or contained no marker, then $\rho'$ is not valid, contradicting the assumption that $\calA$ is sequential. This concludes the proof.
\end{proof}

We make some additional definitions. Let $S$ be a set of spans that does not
  contain any span of length~$0$. We call a set $H\subseteq \mathbb{N}$ a \emph{hitting set} of $S$ if for every $s\in S$ we have that there is an $i\in H$ such that $\mspan{i}{i+1}$ is contained in $s$. We say that the \emph{hitting set number} of $S$ is the smallest number of a hitting set of $S$.

  Let us now consider a document~$d$, and consider the set of mappings $D(d)$ of the single-variable variable
  inclusion-like domination rule $D$ on the document~$d$, and the domination relation
  $\domrela{}$ that it defines.
%
%
%
  Consider the set of strictly dominating
  spans defined in the following way:
\begin{align*}
  \pi_2(\domrela{}, d) \colonequals \{m_2 \mid (m_1, m_2)\in {\domrela{}}, m_1(x)\ne m_2(x)\}
\end{align*}
  In the above, necessarily $m_1$ is undefined on $x$ whenever $m_1(x)\ne m_2(x)$, because the relation is
  variable inclusion-like. In other words, $\pi_2(\domrela{}, d)$ is simply the
  set of spans that dominate the choice \textquotedblleft$-$\textquotedblright~of not assigning a variable.
  We know from an earlier assumption that $\pi_2(\domrela{}, d)$ never contains an empty span,
  as otherwise we could have concluded by Lemma~\ref{lem:sizezero}.
Now, we define the hitting set number of $\domrule{}$ on $d$ to be that of $\pi_2(\domrela{}, d)$. Finally, we define the hitting set number of $\domrule{}$ to be the supremum of the hitting set number on $d$ taken over all documents~$d$.

Hitting set numbers are interesting because they bound the size of extractions results of spanners compared to their skylines in the following way.

\begin{lemma}\label{lem:sizebound}
	Let $\domrule{}$ be a variable inclusion-like domination rule whose
        right-hand-sides never contain an empty span, and whose hitting set
        number is $k\in \mathbb{N}$.
        Let $P_{\mathcal A}$ be a spanner defined by a sequential VA $\mathcal A$. Then we have for every document~$d$
	\begin{align*}
          |P_{\mathcal A}(d)| \le |\mathcal A|^k |(\domoper{\domrule{}} P_{\mathcal A})(d)|.
	\end{align*}
\end{lemma}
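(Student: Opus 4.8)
The plan is to prove the inequality separately for each document~$d$, by charging every mapping of $P_{\calA}(d)$ to a skyline mapping that dominates it and then bounding by $|\calA|^k$ the number of mappings charged to each fixed skyline mapping. Write $\domrela{}$ for the domination relation $\domrule{}(d)$. Since $\domrela{}$ is a partial order (as $\domrule{}$ is a domination rule) and $P_{\calA}(d)$ is finite, every mapping of $P_{\calA}(d)$ is dominated by some $\domrela{}$-maximal mapping, i.e., by some mapping of $(\domoper{\domrule{}}P_{\calA})(d)$. Hence $P_{\calA}(d) = \bigcup_{m^{\ast}} \{m \in P_{\calA}(d) \mid m \domrela{} m^{\ast}\}$, where $m^{\ast}$ ranges over $(\domoper{\domrule{}}P_{\calA})(d)$, and it suffices to prove $|\{m \in P_{\calA}(d) \mid m \domrela{} m^{\ast}\}| \le |\calA|^k$ for each fixed skyline mapping~$m^{\ast}$, summing afterwards.

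First I would unfold what $m \domrela{} m^{\ast}$ means in this setting. Because $\domrule{}$ is variable inclusion-like and variable-wise, $m \domrela{} m^{\ast}$ holds exactly when $\dom(m) \subseteq \dom(m^{\ast})$, the mappings $m$ and $m^{\ast}$ agree on $\dom(m)$, and every $x \in \dom(m^{\ast}) \setminus \dom(m)$ satisfies $m^{\ast}(x) \in \pi_2(\domrela{}, d)$; call such an~$x$ \emph{droppable} and write $Y$ for the (fixed) set of droppable variables of $m^{\ast}$. Then such an $m$ is entirely determined by $m^{\ast}$ together with the set $\dom(m) \cap Y$ of droppable variables that it still assigns, since $\dom(m) = (\dom(m^{\ast}) \setminus Y) \cup (\dom(m)\cap Y)$ and $m = m^{\ast}$ on its domain. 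So it is enough to bound by $|\calA|^k$ the number of subsets $\dom(m) \cap Y$ realized by mappings $m \in P_{\calA}(d)$ with $m \domrela{} m^{\ast}$.

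The crux combines the hitting-set hypothesis with the automaton structure. As $\pi_2(\domrela{}, d)$ contains no empty span (by hypothesis) and $\domrule{}$ has hitting set number~$k$, fix a hitting set $H = \{h_1, \dots, h_k\}$ of $\pi_2(\domrela{}, d)$, choose for each $x \in Y$ some index $j(x)$ with $\mspan{h_{j(x)}}{h_{j(x)}+1} \subseteq m^{\ast}(x)$, and set $Y_j \colonequals \{x \in Y \mid j(x) = j\}$. Given $m \in P_{\calA}(d)$ with $m \domrela{} m^{\ast}$, fix an accepting run $\rho$ of $\calA$ witnessing~$m$, and for each~$j$ with $Y_j \ne \emptyset$ let $q_j$ be the state of $\rho$ reached just before its $(h_j+1)$-th letter transition (this transition exists since $\mspan{h_j}{h_j+1} \subseteq \mspan{0}{|d|}$). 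The key claim is that, for $x \in Y_j$, the variable $x$ is assigned by~$m$ if and only if $x$ is open at~$q_j$ in~$\rho$: if $m(x) = m^{\ast}(x) = \mspan{a}{b}$ then $a \le h_j < b$, and since $q_j$ comes after every marker transition at document position $h_j$ and before every marker transition at position $h_j+1$, the marker $\vop{x}$ has already been read while $\vcl{x}$ has not, so $x$ is open; conversely if $m$ does not assign~$x$ then $x$ is never open in~$\rho$. By Observation~\ref{obs:openvariables} the set of variables open at a state depends only on the state, so $\dom(m) \cap Y_j$, and hence $\dom(m) \cap Y$, is determined by the tuple $(q_j)_j$ of states in $Q^{|H|}$, where $Q$ is the state set of~$\calA$. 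This gives an injection $m \mapsto (q_j)_j$ on $\{m \in P_{\calA}(d) \mid m \domrela{} m^{\ast}\}$, so this set has at most $|Q|^{|H|} \le |\calA|^k$ elements, and summing over the skyline yields the statement.

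I expect the only delicate point to be the ``open at $q_j$'' claim, namely checking that the moment chosen along~$\rho$ is placed correctly with respect to the marker transitions at positions $h_j$ and $h_j+1$, so that the borderline cases $a = h_j$ and $b = h_j+1$ behave as asserted; everything else is bookkeeping. Sequentiality of~$\calA$, which is the standing assumption and is what underlies Observation~\ref{obs:openvariables}, is precisely what makes the per-state ``set of open variables'' well defined and hence lets the tuple $(q_j)_j$ encode $\dom(m) \cap Y$.
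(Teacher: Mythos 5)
Your proof is correct and follows essentially the same route as the paper's: charge each extracted mapping to a dominating skyline mapping, use the hitting set to pick $k$ document positions, and bound the number of dominated mappings per skyline mapping by $|\calA|^k$ via the automaton states at those positions together with Observation~\ref{obs:openvariables}. If anything, your version is slightly more careful than the paper's in restricting the hitting-set covering to the droppable variables (those whose span lies in $\pi_2(\domrela{},d)$) rather than all of $\dom(m^{\ast})$, which is exactly where the covering is justified.
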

\begin{proof}
  We will show that every mapping in $(\domoper{\domrule{}} P_{\mathcal A})(d)$ can only dominate $|\mathcal A|^k$ mappings in $P_{\mathcal A}(d)$.
	The lemma follows directly from this.

        We have that $\pi_2(\domrule, d)$ has a hitting set $H$ of size $k$, so
        let $K=\{i_1, \ldots, i_k\}$ be such a hitting set.

        Consider $m\in (\domoper{\domrule{}} P_{\mathcal A})(d)$ and $m' \in P_{\mathcal A}(d)$ such that $m'\domrela{}m$. Let $X$ be the variables on which $m$ is defined. Then we can construct subsets $X_1, \ldots, X_k$ of $X$ such that for every $j\in [k]$ we have that $x\in X_j$ if and only if $\mspan{i_j}{i_j+1}$ is contained in $m(x)$. Since $H$ is a hitting set, we get that $X =  \bigcup_{j\in [k]} X_j$.
        Since $m'\domrela{}m$ and because $\domrela{}$ is variable
        inclusion-like,
        the mapping $m'$ is defined on a subset of $X$ and for all variables $x$ on which it is defined, we have that $m'(x) = m(x)$, so the only difference between $m$ and $m'$ is on which variables they are defined.
	
	Now consider $j\in [k]$, and let $X_j^{m'}\colonequals \{x\in X_j\mid m' \text{ is
        defined on } x\}$.
	Consider the state $s$ that $\mathcal A$ is in when it reads a letter transition for the $i_j$-th letter of $d$ in an accepting run that 
        yields the mapping~$m'$.
        In the state $s$, the open variables must be exactly those in $X_j^{m'}$, since $\mspan{i_j}{i_j+1}$ is contained in those spans. By Observation~\ref{obs:openvariables}, since every set of open variables must have its own state, it follows that the number of subsets $X_j^{m'}$ that $m'$ can choose from is bounded by the number of states of $\mathcal A$, so $|\mathcal A|$. Reasoning the same for all $j\in [k]$, we get that $\mathcal A$ can only encode $|\mathcal A|^k$ different mappings $m'$ dominated by $m$, so the claim follows.
\end{proof}

\begin{corollary}\label{cor:outputpolynomial}
	Let $\domrule{}$ be a variable inclusion-like domination rule capturing no empty span as the right-hand-side of a strict domination pair, and assume that the hitting set number is bounded by a constant. Then, given a spanner $P_{\mathcal A}$
        defined as a sequential VA $\calA$ and a document $d$, the skyline
        $(\domoper{\domrule{}} P_{\mathcal A})(d)$ can be computed in output polynomial time.
\end{corollary}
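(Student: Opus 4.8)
The plan is to show that, under these hypotheses, the naive algorithm already works: materialize every mapping captured by $P_{\mathcal A}$ on $d$, then discard the dominated ones. The only way such a brute-force approach could fail to be output-polynomial is if $P_{\mathcal A}(d)$ were much larger than its skyline, and ruling this out is precisely what Lemma~\ref{lem:sizebound} does here.

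Concretely, I would first materialize the set $M \colonequals P_{\mathcal A}(d)$. Evaluating a sequential VA on a document is output-polynomial in combined complexity~\cite{DBLP:conf/pods/MaturanaRV18}, so computing $M$ takes time polynomial in $|\mathcal A| + |d| + |M|$. Now, the corollary's hypotheses --- that $\domrule{}$ is variable inclusion-like, has no empty span as the right-hand side of a strict domination pair, and has hitting set number bounded by a constant $k$ --- are exactly the hypotheses of Lemma~\ref{lem:sizebound}, which gives $|M| \le |\mathcal A|^k\, |(\domoper{\domrule{}}P_{\mathcal A})(d)|$. Since $\domrule{}$ is fixed, $k$ is a genuine constant, so $|\mathcal A|^k$ is polynomial in the input; hence $|M|$, and therefore the time to compute it, is polynomial in the input size plus the size of the skyline.

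Having the explicit finite set $M$, the second step is to compute the skyline directly from the definition: output exactly those $m \in M$ for which no $m' \in M \setminus \{m\}$ satisfies $m \domrela{} m'$, where $\domrela{} \colonequals \domrule{}(d)$. This amounts to $O(|M|^2)$ pairwise domination tests, hence is still polynomial in input plus output. Each test is cheap because $\domrule{}$ is a fixed variable inclusion-like rule: $m_1 \domrela{} m_2$ holds iff $m_1 = m_2$, or else $\dom(m_1) \subsetneq \dom(m_2)$, $m_1$ and $m_2$ agree on $\dom(m_1)$, and every span assigned by $m_2$ to a variable outside $\dom(m_1)$ lies in the set $\pi_2(\domrela{}, d)$ of dominating spans of the rule on~$d$.

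The substantive ingredient is thus entirely in the first step --- the bound $|M| \le |\mathcal A|^k\,|\text{skyline}|$ of Lemma~\ref{lem:sizebound}, itself resting on Observation~\ref{obs:openvariables} and the bounded-hitting-set hypothesis. The remaining work is bookkeeping: composing the output-polynomial VA evaluation with a quadratic filtering pass and checking that the composition stays output-polynomial, since each piece runs in time polynomial in the combined size of the input and of the final skyline. The only mild subtlety is making the pairwise domination test effective, but as noted this reduces for a fixed variable inclusion-like rule to domain-inclusion and compatibility checks together with membership of the newly assigned spans in the rule's dominating-span set.
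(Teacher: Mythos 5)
Your proposal is correct and follows essentially the same route as the paper: materialize $P_{\mathcal A}(d)$ in output-polynomial time, invoke Lemma~\ref{lem:sizebound} to bound its size by $|\mathcal A|^k$ times the skyline size, and then filter by quadratically many pairwise domination tests. The only cosmetic difference is that you check domination via the structural characterization of variable inclusion-like rules (domain inclusion, agreement, and membership of the new spans in the dominating-span set), whereas the paper phrases the same test as ref-word acceptance in the fixed rule's VA; both are polynomial-time.
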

\begin{proof}
  We simply compute $P_{\mathcal A}(d)$ which for sequential VAs can be done in output-polynomial time, see~\cite{constdelay,PeterfreundFKK19}. We then explicitly compare all pairs of mappings in $P_{\mathcal A}(d)$ to filter out those that are dominated and thus compute $(\domoper{\domrule{}} P_{\mathcal A})(d)$. Since, by Lemma~\ref{lem:sizebound}, the intermediate result $P_{\mathcal A}(d)$ is only polynomially larger than the result $(\domoper{\domrule{}} P_{\mathcal A})(d)$. Further, the domination relation can be checked in polynomial time: it suffices to determine for each variable whether the assignment done in two mappings are related by the single-variable domination rule, and this is simply testing acceptance of the ref-word (see Appendix~\ref{app:prelim}) into the VA that defines the rule (note that, because multiple orders are possible for the markers, we may need to test acceptance of constantly many ref-words).
	Thus, the overall process runs in polynomial time.
\end{proof}

We will now show that the upper bound of Corollary~\ref{cor:outputpolynomial} corresponds to the case where the UMDSDP property does not hold. To this end, we show the following correspondence.
\begin{lemma}\label{lem:hittingdisjoint}
	Let $S$ be a finite set of spans not containing a span of length $0$ and
        let $\domrela{S}\colonequals \{(-, s)\mid s\in S\}$ be the variable inclusion-like domination relation induced by $S$. Let $k_h$ be the hitting set number of $S$ and let $k_p$ be the maximum number of strict dominating pairs of $\domrela{S}$. Then  we have:
	\begin{align*}
		k_p \le k_h \le 2k_p.
	\end{align*}
\end{lemma}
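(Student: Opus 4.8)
The plan is to recognise $\domrela{S}$ as a discretised interval stabbing problem and to prove the min--max relation between packings of pairwise disjoint spans and hitting sets. Since the rule is variable inclusion-like, every strict domination pair of $\domrela{S}$ is of the form $(-,s)$ with $s\in S$, and (as already noted in the excerpt) two such pairs are disjoint exactly when their right-hand-side spans are disjoint. Hence $k_p$ is just the largest size of a subfamily of $S$ consisting of pairwise disjoint spans, and $k_h$ is the smallest size of a set $H\subseteq\NN$ such that every $s\in S$ contains some $\mspan{i}{i+1}$ with $i\in H$. I would first record these two reformulations, and observe that no $s\in S$ has length $0$, so that for $s=\mspan{a}{b}$ one always has $b\ge a+1$.

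The inequality $k_p\le k_h$ is the easy direction. Fix a minimum hitting set $H$ with $|H|=k_h$ and a family $t_1,\dots,t_{k_p}\in S$ of pairwise disjoint spans. For each $\ell$ choose $i_\ell\in H$ with $\mspan{i_\ell}{i_\ell+1}\subseteq t_\ell$. If $i_\ell=i_{\ell'}$ for some $\ell\ne\ell'$, then $\mspan{i_\ell}{i_\ell+1}$ is a non-empty span included in both $t_\ell$ and $t_{\ell'}$, contradicting disjointness; so $\ell\mapsto i_\ell$ is injective into $H$ and $k_p\le|H|=k_h$.

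For the upper bound I would run a greedy sweep, which in fact yields the stronger statement $k_h\le k_p$, hence a fortiori $k_h\le 2k_p$. Iterate the following: among the spans of $S$ not yet deleted, pick one $s=\mspan{a}{b}$ with smallest right endpoint $b$; since $b\ge a+1$ the index $p\colonequals b-1$ satisfies $\mspan{p}{p+1}\subseteq s$, so add $p$ to $H$; then delete from $S$ every span that contains $\mspan{p}{p+1}$ (in particular $s$ itself). The resulting $H$ is a hitting set by construction. The points added in distinct rounds are distinct, because the right endpoint $b$ of the picked span strictly increases from round to round: any surviving span with right endpoint equal to the current minimum $b$ has left endpoint $\le b-1$ and hence contains $\mspan{b-1}{b}$, so it is deleted. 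Moreover the spans picked as representatives in the successive rounds are pairwise disjoint: if $t=\mspan{a}{b}$ is picked before $t'=\mspan{a'}{b'}$, then $t'$ survived the round of $t$; since its right endpoint satisfies $b'>b$ it was not deleted only because $a'>b-1$, i.e. $a'\ge b$, and then a common non-empty span of $t$ and $t'$ would have to lie in $[a',b)\subseteq[b,b)$, which is impossible. Thus $|H|$ equals the number of rounds, which is the size of a pairwise disjoint subfamily of $S$, so $k_h=|H|\le k_p$. Combining the two bounds gives $k_p\le k_h\le k_p$, and in particular $k_h\le 2k_p$, as claimed.

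There is no serious obstacle here; the only points that must be stated with care are the discretisation effects. The stabbing objects are unit spans $\mspan{i}{i+1}$, not real points, which is why the hypothesis that $S$ contains no length-$0$ span is needed (it guarantees $p=b-1$ is always a legal stabbing index), and ties among right endpoints have to be handled — but they cause no trouble, since equal right endpoints force the span with the smaller left endpoint to be deleted in the same round. The conceptual content is simply that the natural right-endpoint greedy produces, simultaneously, a hitting set and a packing of equal size.
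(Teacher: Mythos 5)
Your proof is correct, and for the lower bound $k_p\le k_h$ it coincides with the paper's argument (pigeonhole on the hitting indices: two disjoint spans cannot contain the same unit span $\mspan{i}{i+1}$). For the upper bound you take a genuinely different and in fact stronger route. The paper fixes a \emph{minimum} hitting set $\{i_1,\dots,i_{k_h}\}$, argues by minimality that each index $i_j$ has a ``private'' span not hit by its neighbours, and extracts from every other index a pairwise disjoint family of size $\lceil k_h/2\rceil$, yielding only $k_h\le 2k_p$. You instead run the classic right-endpoint greedy for interval stabbing, which simultaneously builds a hitting set and an equinumerous packing of pairwise disjoint spans, giving the exact min--max relation $k_h=k_p$; the stated bound $k_h\le 2k_p$ follows a fortiori. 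Your argument is sound: the strict increase of the picked right endpoints makes the chosen indices distinct, and the survival condition $a'\ge b$ makes the picked spans pairwise disjoint; the no-length-$0$ hypothesis is used exactly where you say, to ensure $p=b-1$ gives a legal unit span inside the picked span. The only imprecision is cosmetic: in your remark on ties, \emph{every} surviving span with right endpoint equal to the current minimum $b$ is deleted in that round (its left endpoint is necessarily $\le b-1$), not just ``the one with the smaller left endpoint.'' Since the lemma is only used to transfer boundedness between $k_h$ and $k_p$, the factor $2$ is immaterial to the paper, but your tighter constant is a clean improvement at no extra cost.
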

\begin{proof}
	Note that two pairs in $\domrela{S}$ are disjoint if and only if their second coordinates are disjoint, so instead of disjoint pairs it suffices to argue on disjoint spans in $S$.
	
	For the first inequality, observe that whenever two spans contain the same span $\mspan{i}{i+1}$, by definition they are not disjoint. The claim follows directly: given a set of more than $k_h$ spans, each span contains one of the elements of the hitting set, so then two spans must contain the same element, so they are not disjoint, and thus the set is not disjoint.
	
        For the second inequality, let $K= \{i_1, \ldots, i_k\}$ be a hitting set of $S$ of minimal size. For each $j\in k$, let $S_j$ be the subset of the spans of $S$ that contain $\mspan{i_j}{i_j+1}$.
        We claim that for every $j$ we have a span in $S_j$ that is neither in $S_{j-1}$ nor in $S_{j+1}$ (we slightly abuse notation here and say that $S_{j'}$ is empty for indices $j'\notin [k]$).
        By way of contradiction, assume this were not the case, then $i_j$ could be deleted from $K$ resulting in a smaller hitting set of $S$, which contradicts the assumption that $K$ is minimal.
        This directly lets us choose a set $S'$ of spans that are pairwise disjoint of size $\lceil k_h/2 \rceil$:
        take a span which is in $S_1$ but not in $S_2$, take a span which is in $S_3$ but not in $S_2$ (and hence does not intersect the first span) or $S_4$, etc. It follows that $k_h \le 2 k_p$.
\end{proof}

We can now finally show Theorem~\ref{thm:dichovarinclike}.

\begin{proof}[Proof of Theorem~\ref{thm:dichovarinclike}]
  For the upper bound, we know that if the rule does not satisfy the UMDSDP, then it also is not weakly-UMDSDP (by contrapositive of Proposition~\ref{prop:umdsdpconstr}), and hence the maximal number of strict dominating pairs on documents are bounded. By Lemma~\ref{lem:hittingdisjoint}, and as the rule does not capture an empty span as a right-hand-side of a strict dominating pair, the maximal size of a hitting set on a document is also bounded by a constant, say~$k$. (As this number only depends on the fixed rule, it does not matter whether we can efficiently compute it.)

  Now, given a VA $\calA$ and document $d$, we can compute the result in output-polynomial time, 
  simply using Corollary~\ref{cor:outputpolynomial}. Note that for this we do not need to compute the hitting set of cardinality~$k$, as its goal is just to bound the factor between the number of results and the number of maximal results.

  For the hardness part, there are two cases: if there is a document $d$ such that $\domrule{}(d)$ contains a pair of the form $(\varepsilon, \mspan{i}{i+1})$, then the result follows from Lemma~\ref{lem:sizezero}. If such a document does not exist, then as the rule does not satisfy the UMDSDP, 
  we conclude by Theorem~\ref{thm:maindicho} that the evaluation is not output-polynomial.
\end{proof}

\end{toappendix}

\begin{toappendix}
  \subsection{Proofs for Other Cases}
\end{toappendix}

\subparagraph*{Other cases.}
Theorems~\ref{thm:maindicho} and~\ref{thm:dichovarinclike} do not settle the complexity of
non-UMDSDP domination rules which are
not variable inclusion-like. We conclude with
some examples of rules that can be shown to be intractable. We first show it
for the rule $\domrela{start}$ introduced earlier:

\begin{propositionrep}
  \label{prp:start}
  Refer back to the variable-wise domination rule $\domrule{start}$ from
  Example~\ref{exa:umdsdp}. There is no output-polynomial combined complexity
  algorithm for the skyline extraction problem for that rule, assuming
  $\mathsf{P} \neq \mathsf{NP}$.
\end{propositionrep}

\begin{proof}
  We follow a similar approach to the previous hardness proofs. We reduce from the satisfiability problem for CNFs. Given a CNF $\Phi$ with $n_x$ Boolean variables $x_i$ and $n_c$ clauses $C_j$, and let us
    construct a spanner on variables $v_{i,j}$ for $i\in [n_x]$ and $j\in [n_c]$. For each variable $i$ let $T_i$ and $F_i$ be the subset of clauses that will be satisfied if setting $x_i$ to true (resp., to false).

    The document used is $d = a^{2 n_x}$.
    The spanner $r_\valid$ is directly defined as a VA: first open all variables. Immediately close variables of the form $x_{i,j}$ where variable $i$ does not occur in clause~$j$.
    Then, for each variable $1 \leq i \leq n_x$, do the disjunction between closing all variables $x_{i,j}$ for all $j \in F_i$ and then reading~$aa$, and between reading $a$, closing all variables $x_{i,j}$ for all $j \in T_j$, and then reading~$a$. Unsurprisingly, the mappings captured by $r_\valid$ are in correspondence with valuations of the variables, where variable $i$ is set to false if it captures the span $[0, 2(i-1)\rangle$ and to true if it captures the span $[0,2(i-1)+1\rangle$.

    The spanner $r_\mask$ is a disjunction between $n_c$ spanners, each capturing a mapping: for $1 \leq k \leq n_c$, the $k$-th spanner captures the mapping where we assigns the variables $x_{i,j}$ for all $i\in[n_x]$ and $j \in [n_c]\setminus \{k\}$ to the entire document (span $[0, 2n_x\rangle$), and assign the variables $x_{i,k}$ for all $i \in [n_x]$ to the empty span $[0, 0\rangle$.

    As usual, a mapping of $r_\valid$ is strictly dominated by the $k$-th mapping of $r_\mask$ if it did not assign any variable of the form $x_{i,k}$, that is, if in the choice of valuation there was no variable making clause~$k$ true. Hence, a mapping of $r_\valid$ is not strictly dominated by any mapping of $r_\mask$ if it represents a satisfying assignment of~$\Phi$.

    We accordingly claim that if the skyline of $r_\valid \lor r_\mask$ contains a mapping captured by $r_\valid$, then $\Phi$ is satisfiable. Indeed, for the forward direction, a mapping of $r_\valid$ in the skyline cannot be strictly dominated by any mapping of $r_\mask$ so it witnesses that $\Phi$ is satisfiable. For the backward direction, we have argued that a satisfying assignment of the formula describes a mapping of $r_\valid$ which is not dominated by any mapping of~$r_\valid$, and this means that this mapping or some other mapping of~$r_\valid$ must be in the skyline.

    We then conclude with the argument at the end of the proof of Theorem~\ref{thm:maindicho} that the skyline extraction problem for that rule cannot be in output-polynomial combined complexity unless $\mathsf{P} = \mathsf{NP}$.
\end{proof}

We show hardness for another rule that fails the UMDSDP, where all strict domination
pairs share the same right-hand-side:

\begin{propositionrep}
  Consider the variable-wise domination rule expressed by the regular expression
  $x\{a^*\} a^* \dagg{x}\{b\} \lor \domrule{self}$. There is no output-polynomial combined complexity
  algorithm for the skyline extraction problem for that rule, assuming
  $\mathsf{P} \neq \mathsf{NP}$.
\end{propositionrep}

\begin{proof}
  The proof is again a variant of the preceding proofs. We reduce again from the satisfiability problem for a CNF $\Phi$ with $n_x$ variables $x_i$ and $n_c$ clauses $C_j$, again writing $T_i$ and $F_i$ the clauses satisfied when making $x_i$ true or false respectively.

  The document is $d = a^{2 n_x} b$. The intuition will be that spans of the form $[0,2i\rangle$ will correspond to the choices we make, and the spans of the form $[0,2i+1\rangle$ will correspond to choices that we do not make.

  The spanner $r_\valid$ is defined as a VA by first opening all variables $x_{i,j}$. Then, for all $1 \leq i \leq n_x$, considering the $i$-th factor $aa$, we do a disjunction between two possibilities:
  \begin{itemize}
    \item closing the variables $x_{i,j}$ with $j \in T_i$, then reading $a$, then closing the variables $x_{i,j}$ with $j \in [n_c] \setminus T_i$, then reading $a$; or
    \item closing the variables $x_{i,j}$ with $j \in F_i$, then reading $a$, then closing the variables $x_{i,j}$ with $j \in [n_c] \setminus F_i$, then reading $a$.
  \end{itemize}
  Finally, we read~$b$ (note that all variables are closed). Each captured mapping corresponds to an assignment, where we assign variable $x_{i,j}$ to $[0,2i\rangle$ if we have assigned $x_i$ so as to make clause $j$ true, or to $[0,2i+1\rangle$ if we have assigned $x_i$ in a way that does not make clause $j$ true (including if $x_i$ does not occur in clause~$j$).

  The spanner $r_\mask$ is defined as a disjunction of spanners defined as follows for each $1 \leq k \leq [n_c]$. First open all variables of the form $x_{i,k}$. Then, for the $i$-th factor $aa$, read one~$a$ then close $x_{i,k}$  then read one~$a$. Finally on the final $b$ assign the variables $x_{i,j}$ for all $i\in [n_x]$ and all $j \in [n_c]\setminus \{k\}$.

  We claim again that a mapping of $r_\valid$ is strictly dominated by the $k$-th mapping $m_k$ of $r_\mask$ for $1 \leq k \leq [n_c]$ iff it does not assign variables making clause~$k$ true. This is because all variables of the form $x_{i,j}$ with $j\neq k$ were assigned to the final~$b$ of~$d$ in~$m_k$ which strictly dominate all other possible choices; so the only way not to be strictly dominated is to have assigned a variable of the form $x_{i,k}$ elsewhere than where $m_k$ assigns it, i.e., to have satisfied the clause. Hence, a mapping of $r_\valid$ is not strictly dominated by any mapping of $r_\mask$ iff it encodes a satisfying assignment of~$\Phi$.

  We consider again the skyline of $r_\valid \lor r_\mask$, and conclude from what precedes, like in Theorem~\ref{thm:maindicho}, that the skyline extraction problem for that rule cannot be in output-polynomial combined complexity unless $\mathsf{P} = \mathsf{NP}$.
\end{proof}

We note, however, that the \emph{reverse} of that rule, where all strict
domination pairs share the same left-hand-side, is in fact tractable (and also
fails the UMDSDP). This illustrates that, counter-intuitively, a complexity
classification  on variable-wise
domination rules would not be symmetric between the left-hand-side and
right-hand-side:

\begin{propositionrep}
	The skyline extraction problem for the variable-wise domination rule 
  $\dagg{x}\{a^*\} a^* x\{b\} \lor \domrule{self}$ is output-polynomial in combined complexity.
\end{propositionrep}
\begin{proof}
  It is more intuitive to give an algorithm for the mirror rule: $x\{b\} a^* \dagg{x}\{a^*\}$. We do so, because tractability is preserved under taking the mirror image, simply by taking the mirror of the input VA and document. More precisely, to take the mirror of the input VA, we simply reverse all transitions (for each transition going from $q$ to $q'$, we create one instead going from $q'$ to~$q$), we exchange initial states and final states, and last modify the VA to satisfy our definition which only allows one initial state: this can be done by adding $\epsilon$-transitions to the model (see Appendix~\ref{app:prelim}), adding one initial state with $\epsilon$-transitions to all former initial states, and then removing $\epsilon$ transitions. This mirroring process is in polynomial time, and thus we can show tractability for the mirror rule as stated.

  Let us consider an input VA $\calA$ (which we assume without loss of generality to be trimmed)
  with state space~$Q$ 
  and an input document $d$.
  We assume that the input document $d$ is in the language $b a^*$ of the domination rule, as otherwise the domination relationship is trivial and we can simply compute the entire output of $\calA$ on~$d$ in output-polynomial time. Now,
  we make the following claim~(*): each mapping of $\calA(d)$ can dominate at most $|Q|^2$ other mappings. If this is true, then we immediately conclude an output-polynomial algorithm like in the proof of Corollary~\ref{cor:outputpolynomial}.

  To see why claim~(*) is true, let us consider which sets of variables can be assigned to the initial $b$
  by the automaton. Let $q_1, \ldots, q_n$ (with $n \leq |Q|$) be the states that can be reached from the initial state~$q_0$ by reading only markers followed by one $b$-transition. By Observation~\ref{obs:openvariables}, to each state $q_i$ with $1 \leq i \leq n$ must correspond one precise set $X_i$ of open variables that are still open at that stage, because $\calA$ is sequential and trimmed. Now, consider the sets 
$q_1', \ldots, q_m'$ (with $m \leq |Q|$) be the states reachable from the $q_i$ via marker transitions. Again, to each of these states $q_i'$ must correspond one precise set $X_i'$ of variables which still have an open marker.

  Now, an accepting run of $\calA$ must proceed by first going to a state~$q_i$, while reading the initial $b$ and opening a precise set of variables $X_i$, and then continuing via marker transitions to a state~$q_j'$ before making the next letter transition (or accepting, if $d$ is the single-letter document~$b$), where the set of variables that are still open is $X_j'$. The set of variables assigned to the initial $b$ in that run is precisely $X_i \setminus X_j'$, i.e., the variables open just after reading the~$b$ and closed before the next transition. We can now see that, as $n, m \leq |Q|$, there are at most $|Q|^2$ such sets.

  This argument shows that, if a mapping $m$ is strictly dominated by another mapping~$m'$, then by the variable domination rule the difference between $m$ and $m'$ must be that $m$ assigns some variables to the initial~$b$ which $m'$ assigns to a suffix of the form~$a^*$, and the other variables are assigned identically. But there are only $|Q|^2$ possible sets of variables assigned to the first~$b$, so at most $|Q|^2$ such mappings $m$. This establishes our claim (*), and allows us to conclude the proof.
\end{proof}

\section{Conclusions}
\label{sec:conc}
We have introduced the general framework of domination rules to express the skyline operator for document
spanners, with rules that are themselves expressed as a spanner. We have shown
that this operator (with regular rules) does not increase the expressiveness of regular spanners,
but that
it incurs an unavoidable exponential blowup in the state complexity and is
intractable to evaluate in combined complexity for many natural fixed rules.

Our work leaves several questions open for future investigation. The most
immediate question is whether the skyline extraction problem admits a dichotomy
on the variable-wise regular domination rule in the general case, i.e., extending
Theorem~\ref{thm:maindicho} to arbitrary such rules. However, this seems
challenging.
Another question is whether the hardness results of Section~\ref{sec:further}
also give state complexity lower bounds of the kind shown in
Section~\ref{sct:blowups}, in particular in the schema-based context; and
whether there is a dichotomy on state complexity.

Last, an intriguing question is whether the \emph{top-$k$ problem} of computing a constant number~$k$ of mappings from the skyline is always tractable in combined complexity. None of our hardness results precludes it, but we are not aware of an algorithm for that problem.

\bibliographystyle{plainurl}
\bibliography{main}

\end{document}